\newtheorem{lemma}{Lemma}
\newtheorem{proposition}{Proposition}
\newcommand{\updatecolor}{black}
\begin{document}

\title{
    {Movable Antenna Aided Multiuser Communications: Antenna Position Optimization Based on Statistical Channel Information}
}

\author{Ge Yan, 
        Lipeng Zhu,~\IEEEmembership{Member,~IEEE,}
        Rui Zhang,~\IEEEmembership{Fellow,~IEEE}

\thanks{G. Yan is with the NUS Graduate School, National University of Singapore, Singapore 119077, and also with the Department of Electrical and Computer Engineering, National University of Singapore, Singapore 117583 (e-mail: geyan@u.nus.edu). }
\thanks{L. Zhu is with the Department of Electrical and Computer Engineering, National University of Singapore, Singapore 117583 (zhulp@nus.edu.sg). }
\thanks{R. Zhang is with School of Science and Engineering, Shenzhen Research Institute of Big Data, The Chinese University of Hong Kong, Shenzhen, Guangdong 518172, China (e-mail: rzhang@cuhk.edu.cn). 
He is also with the Department of Electrical and Computer Engineering, National University of Singapore, Singapore 117583 (e-mail: elezhang@nus.edu.sg). }
}


\maketitle

\IEEEpeerreviewmaketitle

\begin{abstract}

    The movable antenna (MA) technology has attracted great attention recently due to its promising capability in improving wireless channel conditions by flexibly adjusting antenna positions. 
    To reap maximal performance gains of MA systems, existing works mainly focus on MA position optimization to cater to the instantaneous channel state information (CSI). 
    However, the resulting real-time antenna movement may face challenges in practical implementation due to the additional time overhead and energy consumption required, especially in fast time-varying channel scenarios. 
    To address this issue, we propose in this paper a new approach to optimize the MA positions based on the users' statistical CSI over a large timescale. 
    In particular, we propose a general field response based statistical channel model to characterize the random channel variations caused by the local movement of users. 
    Based on this model, a two-timescale optimization problem is formulated to maximize the ergodic sum rate of multiple users, where the precoding matrix and the positions of MAs at the base station (BS) are optimized based on the instantaneous and statistical CSI, respectively. 
    To solve this non-convex optimization problem, a log-barrier penalized gradient ascent algorithm is developed to optimize the MA positions, where two methods are proposed to approximate the ergodic sum rate and its gradients with different complexities. 
    Finally, we present simulation results to evaluate the performance of the proposed design and algorithms based on practical channels generated by ray-tracing. 
    The results verify the performance advantages of MA systems compared to their fixed-position antenna (FPA) counterparts in terms of long-term rate improvement, especially for scenarios with more diverse channel power distributions in the angular domain. 
\end{abstract}
\vspace{-4pt}
\begin{IEEEkeywords}
    Movable antenna (MA), multiuser communications, statistical channel state information (S-CSI), antenna position optimization, two-timescale optimization. 
\end{IEEEkeywords}

\vspace{-6pt}
\section{INTRODUCTION}\label{sec:introduction}
    \IEEEPARstart{O}{ver} the past few decades, multi-antenna or so-called multiple-input multiple-output (MIMO) technology has brought substantial improvements to the transmission rate and reliability of wireless communication systems. 
    By exploiting the spatial degrees of freedom (DoFs), significant beamforming and multiplexing gains can be obtained in MIMO systems, which enables the overall performance boost with more antennas deployed. 
    As the demand for high data rates and massive access further increases, MIMO has been extended to various forms, e.g., massive MIMO~\cite{ref:m-mimo-for-next-gen, ref:m-mimo-overview}, extremely large-scale MIMO (XL-MIMO)~\cite{ref:xl-mimo-tutorial-for-6g,ref:tutorial-near-field-xl-mimo}, and holographic MIMO (HMIMO)~\cite{ref:hmimo-tutorial, ref:hmimo-isac}. 
    However, integrating a large number of antennas into one array may incur excessive signal processing overhead and prohibitively high hardware costs due to the increased number of radio frequency (RF) chains and/or phase shifters~\cite{ref:m-mimo-overview, ref:m-mimo-hybrid-bf-survey,ref:large-scale-mimo-hybrid-bf}, posing great challenges to their applications in future wireless systems. 
    
    In light of the limitations of conventional MIMO technologies, movable antenna (MA)~\cite{ref:ma-opportunities} has been introduced to wireless communication systems for fully exploiting the spatial variations of wireless channels via local antenna movement at transceivers. 
    Different from conventional fixed-position antennas (FPAs), each MA or MA array can flexibly adjust its position for reaping more favourable channels. 
    By leveraging such DoFs in antenna movement, the MA-aided wireless systems can be strategically designed for performance improvements in various scenarios, such as signal-to-noise ratio (SNR) boost~\cite{ref:ma-modeling-and-perf-analysis,ref:zlp-perf-optm-ma-wideband}, interference mitigation~\cite{ref:ma-null-steering}, flexible beamforming~\cite{ref:mwy-multi-beam-ma,ref:zlp-dynamic-beam-coverage-satellite-ma}, and spatial multiplexing enhancement~\cite{ref:mwy-ma-mimo-capacity-characterization,ref:zlp-multiuser-commun-aided-by-ma}, using the same or even smaller number of antennas compared to conventional FPA systems. 
    Note that a conceptually similar technology named fluid antenna system (FAS) has also been previously investigated~\cite{ref:fas-isac}, which shares the same idea with MA in exploiting the flexibility in antenna positioning~\cite{ref:zlp-historical-review}. 
    From the perspective of wireless communication, existing studies on FAS mainly focus on antenna port/positioning adaptation to fading channels regardless of the specific antenna implementation~\cite{ref:fas-perf-mrc, ref:fas-multi-access-capacity-max}. 
    In comparison, MA emphasizes the practical implementation via mechanical movement to adjust its position and/or orientation for improving either small or large scale channel conditions~\cite{ref:ma-opportunities,ref:6dma-modeling-and-optm-statistical}.
    Despite the potentially different implmentations, the channel models tailored for MA systems are also applicable to FAS with flexible antenna positions, and a variety of relevant technologies, such as flexible-position MIMO~\cite{ref:flexible-position-mimo} and flexible antenna array~\cite{ref:flexible-array-model-and-perf-eval}, have been inspired in this context. 

    Given their promising advantages, MAs have attracted increasing attention and extensive studies have been conducted to reap the substantial performance gain via antenna movement optimization~\cite{ref:mwy-ma-mimo-capacity-characterization,ref:wu-yifei-ma-multiuser-discrete-optm,ref:near-field-multiuser-ma,ref:zlp-multiuser-commun-aided-by-ma, ref:ma-multicasting-dl}. 
    For example, the channel capacity of point-to-point MIMO communication systems aided by MAs was characterized in~\cite{ref:mwy-ma-mimo-capacity-characterization}, where alternating optimization (AO) and successive convex approximation (SCA) were employed to optimize transmit and receive MAs' positions as well as the transmit signal covariance matrix. 
    Then, the studies on MAs were extended to multiuser systems, where various algorithms were proposed for MA position optimization to either maximize the sum rate or minimize transmit power given quality-of-service (QoS) constraints, such as gradient ascent~\cite{ref:zlp-multiuser-commun-aided-by-ma}, mixed integer non-linear programming (MINLP)~\cite{ref:wu-yifei-ma-multiuser-discrete-optm}, particle swarm optimization~\cite{ref:near-field-multiuser-ma}, and deep learning~\cite{ref:ma-multicasting-dl}. 
    Besides, MAs were also applied to other wireless systems such as intelligent reflecting surface (IRS)-aided communications~\cite{ref:joint-bf-ma-irs-multi-user,ref:ma-secure-irs-isac}, secure communications~\cite{ref:ma-secure-commun}, covert communications~\cite{ref:sum-rate-max-ma-covert-commun}, wireless-powered communication networks~\cite{ref:ma-wpcn-noma}, wireless sensing~\cite{ref:wenyan-ma-sensing,ref:xiaodan-6dma-sensing}, and integrated sensing and communication (ISAC)~\cite{ref:crb-min-ma-isac}. 
    In addition, the six-dimensional movable antenna (6DMA) was recently proposed to exploit the movement capabilities of antennas in both three-dimensional (3D) position and 3D rotation~\cite{ref:6dma-opportunity-challenge,ref:6dma-modeling-and-optm-statistical,ref:6dma-discrete-optm}. 

    On the other hand, to enable antenna movement optimization, significant efforts have also been devoted to channel acquisition/estimation for the MA-aided wireless communication systems~\cite{ref:succ-bayesian-ce-fas,ref:mwy-cs-based-ma-ce,ref:ma-ce-cs-framework,ref:ma-ce-tensor-decomp,ref:ma-ce-wideband}. 
    Different from the channel estimation for conventional FPA systems, MA position optimization usually requires the knowledge of channel mapping between transmit and receive regions where the MAs can be flexibly located~\cite{ref:ma-opportunities}. 
    The existing studies on this issue can be mainly divided into two categories, i.e., the model-based approaches recovering the channel paths in the angular domain~\cite{ref:mwy-cs-based-ma-ce,ref:ma-ce-cs-framework,ref:ma-ce-tensor-decomp,ref:ma-ce-wideband, ref:ce-fas-multiuser-mmw, ref:6dma-ce-directional-sparsity} and the model-free approaches utilizing the channel inherent correlation in the spatial domain~\cite{ref:succ-bayesian-ce-fas, ref:corr-ml-ce-fas, ref:ce-fas-oversample}. 
    
    In most of aforementioned works, both MA positioning and channel acquisition are designed based on instantaneous channel state information (CSI) between the transceivers, which may encounter difficulties in practical wireless communication systems. 
    First, the channel coherence time may not be sufficient for antenna movement due to the moving speed limitations, especially for mechanically-driven MAs operating in fast fading channels. 
    Second, considerable energy is required for frequent adjustment of antenna positions adapting to instantaneous CSI, which undermines the efficiency of MA-aided communication systems. 
    To tackle the above challenges, a promising approach is to design MA systems based on statistical CSI, which can effectively reduce the antenna movement overhead because the MAs' positions are reconfigured over large timescales. 
    Driven by this idea, the antenna positions at transceivers are jointly optimized based on statistical CSI to maximize the channel capacity of MA-aided MIMO systems in~\cite{ref:schober-joint-bf-ma-statistic-csi,ref:fas-statistical-csi}, where conventional Rician fading channel model and field-response based channel model were adopted, respectively. 
    This idea was also applied to multiuser systems in~\cite{ref:two-timescale-ma-uplink-ula-pga,ref:two-timescale-ma-qqw} and a two-timescale design for MA-aided wireless communications was proposed based on Rician fading channel. 
    Additionally, line-of-sight (LoS) channels were considered in~\cite{ref:zlp-ma-near-field-statistical,ref:6dma-opportunity-challenge} by assuming specific user distributions while neglecting the non-LoS (NLoS) channel multi-paths. 
    However, both Rician fading and LoS channel models lack generality in characterizing the statistics of wireless channels for practical systems. 
    For example, they cannot represent wireless channels with the LoS path blocked or those with a few dominant channel paths. 
    Thus, there remains a significant knowledge gap in statistical channel modeling and performance optimization for MA-aided multiuser communication systems based on statistical CSI. 

    To fill this gap, we propose in this paper a field-response based statistical channel model for MA-aided multiuser MIMO (MU-MIMO) systems, based on which the antenna position optimization is investigated. 
    The main contributions of this paper are summarized as follows:
    \begin{itemize}
        \item Based on the field-response channel model~\cite{ref:ma-modeling-and-perf-analysis} tailored for MA systems, a statistical channel model is proposed for MA-aided MU-MIMO systems to account for the random channel variations in both LoS and NLoS paths. 
        Specifically, it is assumed that a few dominant scatterers are present near the BS, while rich local scatterers surround each user. 
        The angle of departure (AoD) and angle of arrival (AoA) for each path from the BS to users are determined by the positions of these dominant and local scatterers, respectively. 
        As users move within their vicinities, the path coefficients vary randomly, whereas the AoDs of all paths from the BS remain relatively constant. 
        This statistical channel model provides a foundation for optimizing antenna positions over a large timescale.

        \item With zero-forcing (ZF) beamforming applied at the BS, a Log-barrier-penalized Approximate Gradient Ascent (LAGA) algorithm is proposed to optimize the antenna positions given the knowledge of statistical CSI. 
        Unlike AO and SCA methods commonly employed in the literature, which iteratively solve convex subproblems to address non-convex constraints, the proposed algorithm achieves lower computational complexity in practice by applying gradient ascent with the constraints incorporated into penalty functions. 
        Moreover, two methods are presented to approximate the ergodic sum rate as well as its gradients with respect to (w.r.t.) antenna positions with high/low computational complexities, utilizing Monte-Carlo simulations and asymptotic analysis, respectively. 
        
        \item To validate the proposed scheme, simulations are conducted using ray-tracing generated channels of an urban area in Singapore~\cite{ref:openstreetmap}. 
        The results demonstrate that the proposed algorithm based on statistical CSI between the BS and users achieves performance closely approaching its upper bound, i.e., the ergodic sum rate obtained by optimizing antenna positions based on instantaneous CSI. 
        Moreover, significant improvements in average ergodic sum rate are observed with optimized antenna positions compared to conventional FPA systems under various user distributions. 
        In particular, higher performance gain can be reaped in scenarios with more diverse channel power distributions in the angular domain. 
    \end{itemize}
    
    The rest of this paper is organized as follows. 
    Section~\ref{sec:system-channel-model} introduces the system and channel models, based on which the two-timescale optimization problem is formulated. 
    Section~\ref{sec:proposed} details the proposed algorithm for antenna position design with ZF beamforming at the BS. 
    Simulation results are presented in Section~\ref{sec:performance-evaluation} and conclusions are drawn in Section~\ref{sec:conclusion}. 
    
    \textit{Notations:} 
    Boldface letters refer to vectors (lower case) or matrices (upper case). 
    For square matrix $\boldsymbol{A}$, $\text{tr}(\boldsymbol{A})$ denotes its trace and $\boldsymbol{A}^{-1}$ denotes its inverse matrix. 
    For matrix $\boldsymbol{B}$, let $\boldsymbol{B}^{T}$, $\boldsymbol{B}^{H}$, $\text{rank}(\boldsymbol{B})$, $\|\boldsymbol{B}\|_F$, and $[\boldsymbol{B}]_{nm}$ denote the transpose, conjugate transpose, rank, Frobenius norm, and the element in the $n$-th row and $m$-th column of $\boldsymbol{B}$, respectively. 
    $\boldsymbol{I}_N$ denotes the $N\times N$-dimensional identity matrix. 
    For vector $\boldsymbol{x}$, $\|\boldsymbol{x}\|_{2}$ denotes its Euclidean norm. 
    $\boldsymbol{0}_{N\times M}$ denotes the $N\times M$-dimensional zero matrix. 
    Vector $\boldsymbol{1}_{K}$ denotes the $K$-dimensional column vector with all entries equal to one. 
    For vector $\boldsymbol{x}$, $\text{Diag}(\boldsymbol{x})$ denotes the diagonal matrix whose main diagonal elements are extracted from $\boldsymbol{x}$. 
    For matrix $\boldsymbol{A}$, $\text{diag}(\boldsymbol{A})$ denotes the vector whose elements are extracted from the main diagonal elements of $\boldsymbol{A}$. 
    Sets $\mathbb{C}^{a\times b}$, $\mathbb{R}^{a\times b}$, and $\mathbb{R}_{+}^{a\times b}$ denote the space of $a\times b$-dimensional matrices with complex, real, and non-negative real elements, respectively. 
    $\mathbb{E}[\cdot]$ denotes the statistical expectation. 
    Symbol $\mathcal{CN}$ denotes the circular symmetric complex Gaussian (CSCG) distribution. 
    Symbol $\odot$ represents the Hadamard product for matrices. 
    Symbol $j = \sqrt{-1}$. 

\vspace{-6pt}
\section{System Model and Statistical Channel Model}\label{sec:system-channel-model}
    
    \subsection{MA-Aided MU-MIMO System}\label{subsec:system-model}
        The MA-aided downlink MU-MIMO system is illustrated in Fig.~\ref{fig:system-model}. 
        The BS is equipped with a two-dimensional (2D) array with ${N}$ MAs while $K$ users located within the site are assumed to each have a single FPA. 
        By establishing a 3D Cartesian coordinate system centered at the BS such that the $x$-$O$-$y$ plane coincides with the antenna plane, as shown in Fig.~\ref{fig:system-model}, the 2D position of the $n$-th MA in the array plane is denoted as $\boldsymbol{r}_{n} = [x_n, y_n]^T\in\mathbb{R}^{2\times 1}$, $\forall n$, where $x_n$ and $y_n$ are its coordinates along the horizontal and vertical directions, respectively. 
        Moreover, define $\boldsymbol{x} = [{x}_{1}, \ldots, {x}_{N}]^T\in\mathbb{R}^{N\times 1}$ and $\boldsymbol{y} = [{y}_{1}, \ldots, {y}_{N}]^T\in\mathbb{R}^{N\times 1}$ as the $x$ and $y$ coordinates of the $N$ antennas, respectively. 
        The antenna moving region is a rectangle area on the $x$-$O$-$y$ plane centered at the origin and its sizes along the $x$ and $y$ axies are denoted as $S_{x}$ and $S_{y}$, respectively. 

        \begin{figure}[t]
            \begin{center}
                \includegraphics[scale = 0.3]{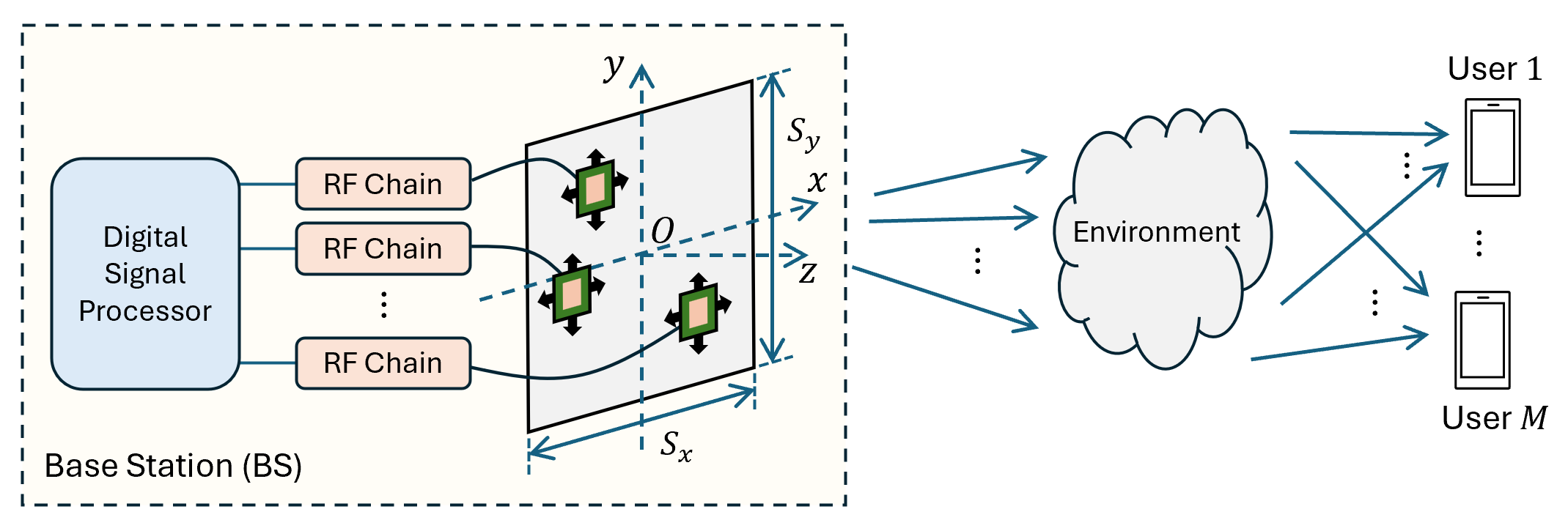}
                \caption{The MA-aided downlink MU-MIMO system. }
                \label{fig:system-model}
            \end{center}
            \vspace{-6pt}
        \end{figure}
        
        Denote $\boldsymbol{W} = [\boldsymbol{w}_{1}, \ldots, \boldsymbol{w}_{K}]\in\mathbb{C}^{{N}\times K}$ as the precoding matrix at the BS and the baseband equivalent channel from the BS to the $k$-th user is denoted as $\boldsymbol{h}_{k}\in\mathbb{C}^{{N}\times{1}}$, $1\le k\le K$. 
        By denoting $\boldsymbol{H} = [\boldsymbol{h}_{1}, \ldots, \boldsymbol{h}_{K}]\in\mathbb{C}^{{{N}}\times{K}}$, the received signal $\boldsymbol{y} = [y_{1}, \ldots, y_{K}]^T\in\mathbb{C}^{K\times 1}$ is given by 
        \begin{equation}\label{def:inst-received-signal}
            \boldsymbol{y} = \boldsymbol{H}^H\boldsymbol{W}\boldsymbol{s} + \boldsymbol{z}, 
        \end{equation}
        where $\boldsymbol{s}\in\mathbb{C}^{K\times 1}$ is the transmitted signal with $\mathbb{E}[\boldsymbol{s}\boldsymbol{s}^H] = \boldsymbol{I}_{K}$, and $\boldsymbol{z}\in\mathbb{C}^{K\times 1}$ is the receiver noise following CSCG distribution $\mathcal{CN}(\boldsymbol{0}_{K\times 1}, \sigma^2\boldsymbol{I}_{K})$, with $\sigma^2$ denoting the average noise power.

    \vspace{-6pt}
    \subsection{Field-Response Based Statistical Channel Model}\label{subsec:statistical-channel-model}
        In this subsection, a field-response-based statistical channel model is introduced. 
        To this end, the instantaneous channel model between the BS and each user is first presented~\cite{ref:ma-modeling-and-perf-analysis,ref:zlp-multiuser-commun-aided-by-ma}. 
        Denote $L_{k}^{t}$ and $L_{k}^{r}$ as the number of transmit and receive channel paths from the BS to the $k$-th user, respectively. 
        The 3D position of the $k$-th user is denoted as $\boldsymbol{u}_{k} = [X_{k}, Y_{k}, Z_{k}]^T\in\mathbb{R}^{3\times 1}$ in its local coordinate system. 
        The azimuth and elevation AoDs for the $l$-th transmit path are denoted as $\theta_{k, l}^{t}$ and $\varphi_{k, l}^{t}$, respectively, $1\le l\le L_{k}^{t}$, while the azimuth and elevation AoAs for the $i$-th receive paths are represented by $\theta_{k, i}^{r}$ and $\varphi_{k, i}^{r}$, respectively, $1\le i\le L_{k}^{r}$. 
        Following the field-response channel model for MA systems~\cite{ref:zlp-multiuser-commun-aided-by-ma}, the transmit and receive field-response vectors (FRVs) for channel between the $k$-th user and the $n$-th antenna are given by
        \begin{subequations}\label{def:field-response-vec-user-k-antenna-n}
            \begin{align}
                \boldsymbol{q}_{k}(\boldsymbol{r}_{n}) & = \bigg[
                    e^{j\rho_{k, 1}^{t}(\boldsymbol{r}_{n})}, \ldots, e^{j\rho_{k, L_{k}^{t}}^{t}(\boldsymbol{r}_{n})}
                \bigg]^{T}\in\mathbb{C}^{L_{k}^{t}\times 1}, \label{def:transmit-FRV} \\
                \boldsymbol{f}_{k}(\boldsymbol{u}_{k}) & = \bigg[
                    e^{j\rho_{k, 1}^{r}(\boldsymbol{u}_{k})}, \ldots, e^{j\rho_{k, L_{k}^{r}}^{r}(\boldsymbol{u}_{k})}
                \bigg]^{T}\in\mathbb{C}^{L_{k}^{r}\times 1}. \label{def:receive-FRV}
            \end{align}
        \end{subequations}
        The phase variations are defined as $\rho_{k, l}^{t}(\boldsymbol{r}_{n}) = \boldsymbol{r}_{n}^{T}\boldsymbol{\kappa}_{k, l}^{t}$, $1\le l\le L_{k}^{t}$, and $\rho_{k, i}^{r}(\boldsymbol{u}_{k}) = \boldsymbol{u}_{k}^{T}\boldsymbol{\kappa}_{k, i}^{r}$, $1\le i\le L_{k}^{r}$, where $\boldsymbol{\kappa}_{k, l}^{t} = \frac{2\pi}{\lambda}[\cos(\theta_{k, l}^{t})\cos(\varphi_{k, l}^{t}), \cos(\theta_{k, l}^{t})\sin(\varphi_{k, l}^{t})]^T\in\mathbb{R}^{2\times 1}$ and $\boldsymbol{\kappa}_{k, i}^{r} = \frac{2\pi}{\lambda}[\cos(\theta_{k, i}^{r})\cos(\varphi_{k, i}^{r}), \cos(\theta_{k, i}^{r})\sin(\varphi_{k, i}^{r}), \sin(\theta_{k, i}^{r})]^T\in\mathbb{R}^{3\times 1}$ are the 2D transmit and 3D receive wavevectors corresponding to the $l$-th transmit channel path and $i$-th received channel path for user $k$, respectively, and $\lambda$ is the carrier wavelength. 
        Then, by defining $\boldsymbol{\Sigma}_{k}\in\mathbb{C}^{{L_{k}^{t}}\times{L_{k}^{r}}}$ as the path-response matrix (PRM)  that represents the response between all the transmit and receive channel paths from the BS to user $k$, the channel vector $\boldsymbol{h}_{k}$ can be expressed as~\cite{ref:zlp-multiuser-commun-aided-by-ma} 
        \begin{equation}\label{def:inst-channel-vec-user-k}
            \boldsymbol{h}_{k} = \boldsymbol{Q}_{k}^{H}\boldsymbol{\Sigma}_{k}\boldsymbol{f}_{k}(\boldsymbol{u}_{k}), 
        \end{equation}
        where $\boldsymbol{Q}_{k} = [\boldsymbol{q}_{k}(\boldsymbol{r}_{1}), \ldots, \boldsymbol{q}_{k}(\boldsymbol{r}_{N})]\in\mathbb{C}^{L_{k}^{t}\times N}$ is the transmit field-response matrix (FRM) for user $k$. 

        Based on the above form of $\boldsymbol{h}_{k}$, a statistical channel model is then developed. 
        Specifically, the BS generally has a high altitude and is mainly surrounded by dominant scatterers of large size, such as tall buildings, which primarily determine the AoDs for the NLoS transmit paths. 
        In contrast, the NLoS receive paths for each user predominantly originate from local scatterers, such as the trees and vehicles, in the vicinity of the user. 
        Each user is assumed to be moving within a local region, the center of which is defined as the reference location for the user, denoted as $\boldsymbol{u}_{k}^{\text{ref}}$. 
        Since the sizes of regions for BS antenna movement and user local movement are generally much smaller than the signal propagation distances between the BS/user and their dominant scatterers, the AoDs and AoAs for the channel paths remain unchanged under the far-field condition. 
        Therefore, the transmit FRM $\boldsymbol{Q}_{k}$ and PRM $\boldsymbol{\Sigma}$ can be considered as approximately constant. 
        In contrast, as each user moves within its local region, the received FRV $\boldsymbol{f}_{k}(\boldsymbol{u}_{k})$ may change rapidly, where phase shifts $\rho_{k, i}^{r}(\boldsymbol{u}_{k})$, $\forall k, i$, can be modeled as indepedent and identically distributed (i.i.d.) uniform random variables within $[0, 2\pi)$. 
        Define $\boldsymbol{\psi}_{k} = \boldsymbol{\Sigma}_{k}\boldsymbol{f}_{k}(\boldsymbol{u}_{k})\in\mathbb{C}^{L_{k}^{t}\times 1}$ as the transmit path-response vector (PRV) for user $k$, where its $l$-th element $\psi_{kl}$ represents the path-response coefficient of the $l$-th transmit channel path for user $k$ and can be written as
        \begin{equation}\label{def:transmit-path-response-coeff}
            \psi_{kl} = \sum_{i = 1}^{L_{k}^{r}}{\Sigma_{k, li}\exp(j\rho_{k, i}^{r}(\boldsymbol{u}_{k}))}, ~\forall k, l, 
        \end{equation}
        where $\Sigma_{k, li}$ as the element of the PRM $\boldsymbol{\Sigma}_{k}$ in the $l$-th row and $i$-th column. 
        Note that $\psi_{kl}$ is expressed as the weighted sum of multiple i.i.d. random variables $\exp(j\rho_{k, i}^{r}(\boldsymbol{u}_{k}))$. 
        Since $L_{k}^{r}$ is generally large in practice due to the rich scattering environment around the user, $\psi_{kl}$ can be approximately modeled as a CSCG random variable according to the Lyapunov Central Limit Theorem~\cite{ref:billingsley-lyapunov-CLT}. 
        Specifically, we have $\psi_{kl}\sim\mathcal{CN}(0, {b}_{kl})$, where ${b}_{kl}$ denotes the expected path-response power of the $l$-th transmit channel path for user $k$. 
        Moreover, the covariance between $\psi_{kl}$ and $\psi_{kl'}$, $\forall l\neq l'$, is given by $\mathbb{E}[\psi_{kl}^{*}\psi_{ki}] = \sum_{i = 1}^{L_{k}^{r}}{\Sigma_{k, li}^{*}\Sigma_{k, l'i}}$. 
        Given that $L_{k}^{r}$ is large and the phases of $\Sigma_{k, li}$, $\forall l, i$, are independent and uniformly distributed within $[0, 2\pi)$, we approximate $\mathbb{E}[\psi_{kl}^{*}\psi_{ki}]$ to be $0$. 
        Therefore, the transmit path response vector $\boldsymbol{\psi}_{k} = [{\psi}_{k1}, \ldots, {\psi}_{kL_{k}^{t}}]^T\in\mathbb{C}^{{L_{k}^{t}}\times 1}$ is modeled as $\boldsymbol{\psi}_{k}\sim\mathcal{CN}(\boldsymbol{0}_{{L_{k}^{t}}\times 1}, \text{Diag}(\boldsymbol{b}_{k}))$, where $\boldsymbol{b}_{k} = [{b}_{k1}, \ldots, {b}_{kL_{k}^{t}}]^{T}\in\mathbb{R}_{+}^{{L_{k}^{t}}\times 1}$ is the transmit path-response power vector for user $k$. 
        Note that $\boldsymbol{b}_{k}$ can be regarded as the angular power spectrum for the channel of user $k$ with the BS, which characterizes the average power distribution on the multi-path channel in the angular domain. 
        Hence, the statistical model for channel $\boldsymbol{h}_{k} = \boldsymbol{Q}_{k}^{H}\boldsymbol{\psi}_{k}$ is written as 
        \begin{equation}\label{def:statistical-channel-vec-model}
            \boldsymbol{h}_{k} = \boldsymbol{Q}_{k}^{H}\boldsymbol{\psi}_{k}, ~\boldsymbol{\psi}_{k}\sim\mathcal{CN}\left(
                \boldsymbol{0}_{L_{k}^{t}\times 1}, \text{Diag}(\boldsymbol{b}_{k})
            \right), ~\forall k. 
        \end{equation}
        
        For simplicity of notations, a unified expression is presented for $\boldsymbol{H}$. 
        Specifically, transmit AoDs for all users are put into set $\mathcal{D}$ and are relabeled as $\mathcal{D} = \{\boldsymbol{\kappa}_{1}, \ldots, \boldsymbol{\kappa}_{L}\}$, where where $L = \sum_{k = 1}^{K}{L_{k}^{t}}$ is the total number of transmit channel paths\footnote{Note that the actual number of dominant scatterers surrounding the BS is practically smaller than $L$ as some users may share common scatterers depending on their locations, which causes certain similarities in their statistical CSI, i.e., $\boldsymbol{Q}_{k}$'s. }. 
        Moreover, the extended transmit FRM is given by $\boldsymbol{Q} = [\boldsymbol{Q}_{1}^{H}, \ldots, \boldsymbol{Q}_{K}^{H}]^{H}\in\mathbb{C}^{L\times N}$ and the extended transmit PRM is written as 
        \begin{equation}
            \boldsymbol{\Psi} = \left[
                \begin{array}{cccc}
                    \boldsymbol{\psi}_{1} & \boldsymbol{0}_{{L_{1}^{t}}\times 1} & \ldots & \boldsymbol{0}_{{L_{1}^{t}}\times 1} \\
                    \boldsymbol{0}_{{L_{2}^{t}}\times 1} & \boldsymbol{\psi}_{2} & \ldots & \boldsymbol{0}_{{L_{2}^{t}}\times 1} \\
                    \vdots & \vdots & \ddots & \vdots \\
                    \boldsymbol{0}_{{L_{K}^{t}}\times 1} & \boldsymbol{0}_{{L_{K}^{t}}\times 1} & \ldots & \boldsymbol{\psi}_{K}
                \end{array}
            \right]\in\mathbb{C}^{L\times K}. 
        \end{equation}
        It can be verified that the $n$-th column of matrix $\boldsymbol{Q}$, denoted by $\tilde{\boldsymbol{q}}_{n}$, is given by
        \begin{equation}\label{def:extended-field-response-vec}
            \tilde{\boldsymbol{q}}_{n} = \left[
                \exp(j\boldsymbol{r}_{n}^T\boldsymbol{\kappa}_{1}), \ldots, \exp(j\boldsymbol{r}_{n}^T\boldsymbol{\kappa}_{L})
            \right]^T\in\mathbb{C}^{L\times 1}. 
        \end{equation}
        Therefore, matrix $\boldsymbol{H}$ can be equivalently written as  
        \begin{equation}\label{def:statistical-channel-matrix-model}
            \boldsymbol{H} = \boldsymbol{Q}^H\boldsymbol{\Psi}. 
        \end{equation}

    \vspace{-6pt}
    \subsection{Two-Timescale Design and Problem Formulation}\label{subsec:two-timescale-problem-formulation}
        Based on the signal model in~\eqref{def:inst-received-signal} and the channel model in~\eqref{def:statistical-channel-matrix-model}, the antenna positions and the precoding matrix $\boldsymbol{W}$ at the BS can be jointly designed to improve the system performance. 
        In this paper, we consider maximizing the ergodic sum rate of all users via a two-timescale design approach. 
        Specifically, $\boldsymbol{W}$ is optimized based on the instantaneous channel $\boldsymbol{H}$ to maximize the instantaneous sum rate, while antenna positions are designed over a relatively longer period to improve the ergodic sum rate based on the statistical CSI, which is given by the transmit wavevectors $\mathcal{D}$ and angular power spectrums $\boldsymbol{B} = [\boldsymbol{b}_{1}, \ldots, \boldsymbol{b}_{K}]\in\mathbb{R}_{+}^{L\times K}$ for all users. 
        The antenna positions are denoted by two vectors $\boldsymbol{x} = [{x}_{1}, \ldots, {x}_{N}]^T\in\mathbb{R}^{N\times 1}$ and $\boldsymbol{y} = [{y}_{1}, \ldots, {y}_{N}]^T\in\mathbb{R}^{N\times 1}$, representing the $x$ and $y$ coordinates of the $N$ antennas, respectively. 
        The receive signal-to-interference-and-noise ratio (SINR) of user $k$ is denoted as ${\gamma}_{k}$, which is given by 
        \begin{equation}\label{def:sinr-user-k}
            {\gamma}_{k} = \frac{\big|\boldsymbol{h}_{k}^H\boldsymbol{w}_{k}\big|^2}{\sigma^2 + \sum_{i\neq k}{\big|\boldsymbol{h}_{k}^H\boldsymbol{w}_{i}\big|^2}}, ~\forall k. 
        \end{equation}
        Moreover, the instantaneous sum rate $R$ is defined as $R = \sum_{k = 1}^{K}{\log_{2}(1 + {\gamma}_{k})}$ and the ergodic sum rate $\bar{R}$ is given by the expectation of $R$ w.r.t. the random instantaneous channels, i.e., $\bar{R} = \mathbb{E}_{\boldsymbol{H}}[R]$. 
        Therefore, the two-timescale optimization problem can be formulated as\footnote{The main purpose of the considered problem is to obtain optimal/suboptimal positions of MAs based on statistical CSI. In practice, once the antennas have been moved to the optimized positions, channel estimation and precoding design can be conducted based on instantaneous CSI in a similar way to that in conventional FPA systems. }
        \begin{subequations}\label{prob:two-timescale-optm}
            \allowdisplaybreaks
            \begin{align}
                & \max_{\boldsymbol{x}, \boldsymbol{y}} ~\mathbb{E}_{\boldsymbol{H}}\left[
                    \max_{\boldsymbol{W}} ~\sum_{k = 1}^{K}{\log_{2}(1 + {\gamma}_{k})}
                \right], \tag{\ref{prob:two-timescale-optm}}\\
                & ~\text{s.t.} ~~\text{tr}\left(
                    \boldsymbol{W}^{H}\boldsymbol{W}
                \right)\le {P}_{T}, \label{prob-constraint:tx-power} \\
                & ~~~~~~ |x_n|\le\frac{{S}_{x}}{2}, |y_n|\le\frac{{S}_{y}}{2}, ~\forall n, \label{prob-constraint:ma-region}\\
                & ~~~~~~ \|\boldsymbol{r}_{n} - \boldsymbol{r}_{i}\|_2\ge \Delta, ~\forall n\neq i, \label{prob-constraint:ma-separation}
            \end{align}
        \end{subequations}
        where constraint~\eqref{prob-constraint:tx-power} confines the maximum transmit power ${P}_{T}$, constraint~\eqref{prob-constraint:ma-region} is resulted from the limited antenna moving region, and constraint~\eqref{prob-constraint:ma-separation} specifies the minimum inter-antenna spacing $\Delta$, which is usually set as $\Delta = \lambda/2$. 
        Note that the optimization for $\boldsymbol{W}$ is the conventional transmit precoding problem. 
        However, due to the non-convex constraint~\eqref{prob-constraint:ma-separation} and the expectation over $\boldsymbol{H}$, globally optimal antenna positions are generally difficult to obtain. 
        In the following section, the solution for $\boldsymbol{W}$ is introduced and the LAGA algorithm is proposed to find a suboptimal solution for antenna positions.

\vspace{-6pt}
\section{Proposed Solutions}\label{sec:proposed}
    In this section, the proposed solutions to problem~\eqref{prob:two-timescale-optm} for $\boldsymbol{W}$ and $\boldsymbol{x}, \boldsymbol{y}$ are presented. 
    Specifically, ZF beamforming is employed for $\boldsymbol{W}$, which is asymptotically optimal for maximizing the sum rate of multiple users in the high-SNR region~\cite{ref:emil-optimal-bf}. 
    Based on that, the LAGA algorithm is proposed for antenna position optimization, where the gradient ascent algorithm framework is first presented with the non-convex constraints incorporated into log barrier penalty functions. 
    Then, two methods for approximating the ergodic sum rate as well as its gradient w.r.t. antenna positions are illustrated by leveraging Monte-Carlo simulations and asymptotic analysis, respectively. 

    \vspace{-6pt}
    \subsection{Zero-Forcing Beamforming}\label{subsec:zero-forcing-bf}
        By applying ZF beamforming at the BS, we have $\boldsymbol{W} = \boldsymbol{H}(\boldsymbol{H}^H\boldsymbol{H})^{-1}\text{Diag}(\boldsymbol{p})^{\frac{1}{2}}$, where $\boldsymbol{C}_{\boldsymbol{H}} = \boldsymbol{H}^H\boldsymbol{H}\in\mathbb{C}^{K\times K}$ is invertible with probability $1$ and $\boldsymbol{p} = [p_1, \ldots, p_{K}]\in\mathbb{R}_{+}^{K\times 1}$ is the power allocation vector. 
        Then, the received signal is written as $\boldsymbol{y} = \text{Diag}(\boldsymbol{p})^{\frac{1}{2}}\boldsymbol{s} + \boldsymbol{z}_{r}$ and thus the instantaneous sum rate is simplified as $R = \sum_{k = 1}^{K}{\log_{2}(1 + \frac{p_k}{\sigma^2})}$. 
        To maximize $R$, $\boldsymbol{p}$ is solved via the following problem:
        \begin{equation}\label{prob:zfbf-power-allocation}
            \max_{\boldsymbol{p}} ~R, ~~\text{s.t.} ~\boldsymbol{p}\ge\boldsymbol{0}_{K\times 1}, ~\text{tr}(\boldsymbol{W}^H\boldsymbol{W})\le P_{T}. 
        \end{equation}
        Note that the second constraint in problem~\eqref{prob:zfbf-power-allocation} can be equivalently written as
        \begin{equation}\label{eq:transmit-power-constraint-simplified}
            \text{tr}(\boldsymbol{W}^H\boldsymbol{W}) = \text{tr}\left(\boldsymbol{C}_{\boldsymbol{H}}^{-1}\text{Diag}(\boldsymbol{p})\right) = \boldsymbol{c}^T\boldsymbol{p}\le P_{T}, 
        \end{equation}
        where $\boldsymbol{c} = \text{diag}(\boldsymbol{C}_{\boldsymbol{H}}^{-1}) \triangleq [c_1, \ldots, c_{K}]^{T}\in\mathbb{R}_{+}^{K\times 1}$ is the diagonal vector of matrix $\boldsymbol{C}_{\boldsymbol{H}}^{-1}$. 
        Substituting~\eqref{eq:transmit-power-constraint-simplified} into the second constraint in problem~\eqref{prob:zfbf-power-allocation}, the vector $\boldsymbol{p}$ can be solved optimally with the water-filling algorithm~\cite{ref:emil-optimal-bf} and is given by ${p}_{k} = (\nu/{c}_{k} - \sigma^2)_{+}, \forall k$, where $(\cdot)_{+} = \max(\cdot, 0)$ and $\nu$ can be solved numerically via bisection search from the following equation: 
        \begin{equation}\label{eq:water-filling-solutions}
            \sum_{k = 1}^{K}{(\nu - \sigma^2 {c}_{k})_{+}} = P_{T}. 
        \end{equation}

    \vspace{-6pt}
    \subsection{LAGA Algorithm for Antenna Position Optimization}\label{subsec:log-barrier-gradient-ascent}
        Given ZF beamforming and the optimal power allocation vector $\boldsymbol{p}$, the resulting instantaneous sum rate is denoted as $R_{\text{ZF}}$. 
        Then, the ergodic sum rate is given by $\bar{R}_{\text{ZF}} = \mathbb{E}_{\boldsymbol{H}}[R_{\text{ZF}}]$ and the optimization problem for antenna position design can be simplified as 
        \begin{equation}\label{prob:antenna-position-optm}
            \max_{\boldsymbol{x}, \boldsymbol{y}} ~\bar{R}_{\text{ZF}}, ~\eqref{prob-constraint:ma-region}, \eqref{prob-constraint:ma-separation}. 
        \end{equation}
        Next, the proposed LAGA algorithm for antenna position optimization is presented to find a suboptimal solution $(\hat{\boldsymbol{x}}, \hat{\boldsymbol{y}})$ for problem~\eqref{prob:antenna-position-optm}. 
        
        To address the intractable expectation over $\boldsymbol{H}$, the ergodic sum rate $\bar{R}_{\text{ZF}}$ is approximated by a surrogate function $\bar{R}_{\text{approx}}$ that allows for efficient computation of gradients w.r.t. antenna positions, which will be specified later. 
        Then, constraints~\eqref{prob-constraint:ma-region} and~\eqref{prob-constraint:ma-separation} are incorporated into log-barrier penalty functions and gradient ascent can be applied. 
        Specifically, define $\mathcal{S}_{\text{MA}}$ as the feasible set for the duplet $(\boldsymbol{x}, \boldsymbol{y})$ that satisfies constraints~\eqref{prob-constraint:ma-region} and~\eqref{prob-constraint:ma-separation}. 
        Then, the log-barrier function $\mathcal{L}(\boldsymbol{x}, \boldsymbol{y})$ is defined based on set $\mathcal{S}_{\text{MA}}$ as follows:
        \begin{equation}\label{def:log-barrier-func}
            \mathcal{L}(\boldsymbol{x}, \boldsymbol{y}) = \left\{
                \begin{array}{ll}
                    \mathcal{L}_{f}(\boldsymbol{x}, \boldsymbol{y}), & (\boldsymbol{x}, \boldsymbol{y})\in\text{int}(\mathcal{S}_{\text{MA}}), \\
                    -\infty, & \text{otherwise}, 
                \end{array}
            \right.
        \end{equation}
        where $\text{int}(\mathcal{S}_{\text{MA}})$ denotes the interior of set $\mathcal{S}_{\text{MA}}$ and $\mathcal{L}_{f}(\boldsymbol{x}, \boldsymbol{y})$ is given by 
        \begin{equation}\label{def:log-barrier-feasible-func}
            \begin{aligned}
                & \mathcal{L}_{f}(\boldsymbol{x}, \boldsymbol{y}) \triangleq \sum_{
                    1 \le n < i \le {N}
                }{\ln{\left(
                    \|\boldsymbol{r}_{n} - \boldsymbol{r}_{i}\|_2^2 - \Delta^2
                \right)}} \\
                & ~~~~~~ + \left[
                    \sum_{n = 1}^{{N}}{\ln{\left(
                        \frac{{S}_{x}^2}{4} - x_n^2
                    \right)} + \ln{\left(
                        \frac{{S}_{y}^2}{4} - y_n^2
                    \right)}}
                \right]. 
            \end{aligned}
        \end{equation}
        Using $\mathcal{L}(\boldsymbol{x}, \boldsymbol{y})$ as a penalty function to replace constraints~\eqref{prob-constraint:ma-region} and~\eqref{prob-constraint:ma-separation}, problem~\eqref{prob:antenna-position-optm} is approximated as 
        \begin{equation}\label{prob:ma-optm-log-barrier-relaxed}
            \max_{\boldsymbol{x}, \boldsymbol{y}} ~f(\boldsymbol{x}, \boldsymbol{y}) = \bar{R}_{\text{approx}} + \mu\mathcal{L}(\boldsymbol{x}, \boldsymbol{y}), 
        \end{equation}
        where $\mu > 0$ is the penalty parameter. 
        It is easy to verify that if the antenna positions are infeasible to constraints~\eqref{prob-constraint:ma-region} and~\eqref{prob-constraint:ma-separation}, we have $f(\boldsymbol{x}, \boldsymbol{y}) = \mathcal{L}(\boldsymbol{x}, \boldsymbol{y}) = -\infty$. 
        Thus, $(\boldsymbol{x}, \boldsymbol{y})$ is not the optimal solution for problem~\eqref{prob:ma-optm-log-barrier-relaxed}, which indicates that the optimal solution for problem~\eqref{prob:ma-optm-log-barrier-relaxed} is always feasible to constraints~\eqref{prob-constraint:ma-region} and~\eqref{prob-constraint:ma-separation}. 
        Moreover, as $\mu\to 0^{+}$, we have $\mathcal{L}_{f}(\boldsymbol{x}, \boldsymbol{y})\to 0$ for any $(\boldsymbol{x}, \boldsymbol{y})\in\text{int}(\mathcal{S}_{\text{MA}})$. 
        Therefore, given a sufficiently small value of $\mu$, the penalized term $\mu\mathcal{L}(\boldsymbol{x}, \boldsymbol{y})$ becomes negligible for any $(\boldsymbol{x}, \boldsymbol{y})\in\text{int}(\mathcal{S}_{\text{MA}})$ and the optimal solution for problem~\eqref{prob:ma-optm-log-barrier-relaxed} approximately maximizes $\bar{R}_{\text{approx}}$ (or $\bar{R}_{\text{ZF}}$), which solves the original problem~\eqref{prob:antenna-position-optm}\footnote{If the optimal solution $(\boldsymbol{x}^{\star}, \boldsymbol{y}^{\star})$ for problem~\eqref{prob:antenna-position-optm} is right on the boundary of set $\mathcal{S}_{\text{MA}}$, it cannot be precisely solved by the proposed algorithms since $f(\boldsymbol{x}^{\star}, \boldsymbol{y}^{\star}) = -\infty$. 
        However, it can be approached by the interior points of $\mathcal{S}_{\text{MA}}$ with an arbitrarily high accuracy by adjusting $\mu$. }. 

        Nevertheless, directly solving problem~\eqref{prob:ma-optm-log-barrier-relaxed} with a fixed and small value of $\mu$ may not always lead to a good solution because it sacrifices the global search capability. 
        Therefore, we start from a relatively large $\mu$ and iteratively shrink its value by factor $\rho$. 
        Given any value of $\mu$, gradient ascent is applied to find a suboptimal solution for problem~\eqref{prob:ma-optm-log-barrier-relaxed}. 
        As such, the antenna positions solved from the previous iteration for $\mu$ serve as a good initialization for the next iteration and gradually approach a locally optimal solution for problem~\eqref{prob:antenna-position-optm}. 
        The algorithm terminates when the displacement of antenna positions become negligible compared with the previous iteration for $\mu$, which indicates that a suitable value for $\mu$ has been reached and the solved antenna positions can be assumed to be close enough to a suboptimal solution to problem~\eqref{prob:antenna-position-optm}. 
        
        {
        \begin{algorithm}[t]
            \begin{minipage}{0.95\linewidth}
            \centering
            \caption{Proposed LAGA Algorithm. }
            \begin{algorithmic}[1]\label{alg:log-barrier-gradient-ascent-framework}
                \REQUIRE Statistical CSI $\mathcal{D}$ and $\boldsymbol{B}$; threshold $\varepsilon_{r}$, initial penalty parameter $\mu_{0}$, maximum step size ${\alpha}_{0}$, and maximum iteration number $I$ for gradient ascent. 
                \STATE Sparse uniform planar array (UPA) is applied as initialization for $(\boldsymbol{x}^{(0)}, \boldsymbol{y}^{(0)})$, where antennas are formed into rectangular UPA (with antenna inter-spacing specified in Section~\ref{subsec:perf-benchmarks} for UPA-sparse); $\mu\gets\mu_{0}$, $\varepsilon\gets\varepsilon_{r}$. 
                \WHILE{$\varepsilon \ge \varepsilon_{r}$}
                    \STATE Let $i\gets 0$. 
                    \WHILE{$i < I$}
                        \STATE Given $(\boldsymbol{x}^{(i)}, \boldsymbol{y}^{(i)})$, compute gradients $\nabla_{v}{f} = \nabla_{v}{\bar{R}_{\text{approx}}} + \mu\nabla_{v}{\mathcal{L}_{f}}$ for $v\in\{x, y\}$. 
                        \STATE Compute $\boldsymbol{d}^{(i)}\gets[\nabla_{x}{f}^{T}, \nabla_{y}{f}^{T}]^{T}$ and normalized gradients $\boldsymbol{g}_{v}^{(i)}\gets\nabla_{v}{f}/\|\boldsymbol{d}^{(i)}\|_2$, $v\in\{x, y\}$. 
                        \STATE Initial step size ${\alpha}\gets{\alpha}_{0}$. 
                        \WHILE{conditions~\eqref{def:backtracking-conditions} are not satisfied}
                            \STATE Shrink the step size as $\alpha\gets\alpha/2$. 
                        \ENDWHILE
                        \STATE Let $\boldsymbol{x}^{(i + 1)}\gets\boldsymbol{x}^{(i)} + \alpha\boldsymbol{g}_{x}^{(i)}$, $\boldsymbol{y}^{(i + 1)}\gets\boldsymbol{y}^{(i)} + \alpha\boldsymbol{g}_{y}^{(i)}$, and $i\gets i + 1$. 
                    \ENDWHILE
                    \STATE Compute $\varepsilon\gets(\|\boldsymbol{x}^{(i)} - \boldsymbol{x}^{(0)}\|_{2}^{2} + \|\boldsymbol{y}^{(i)} - \boldsymbol{y}^{(0)}\|_{2}^{2})^{1/2}$. 
                    \STATE Let $\boldsymbol{x}^{(0)}\gets\boldsymbol{x}^{(i)}$, $\boldsymbol{y}^{(0)}\gets\boldsymbol{y}^{(i)}$, and $\mu\gets\rho\mu$. 
                \ENDWHILE
                \RETURN The optimized antenna positions $\hat{\boldsymbol{x}}\gets\boldsymbol{x}^{(0)}$ and $\hat{\boldsymbol{y}}\gets\boldsymbol{y}^{(0)}$. 
            \end{algorithmic}
            \end{minipage}
        \end{algorithm}
        
        }
        
        The proposed LAGA algorithm is summarized in Algorithm~\ref{alg:log-barrier-gradient-ascent-framework}, where $\nabla_{x}{\bar{R}_{\text{approx}}}$, $\nabla_{y}{\bar{R}_{\text{approx}}}$ and $\nabla_{x}{\mathcal{L}_{f}}$, $\nabla_{y}{\mathcal{L}_{f}}$ denote the gradients of $\bar{R}_{\text{approx}}$ and $\mathcal{L}_{f}$ w.r.t. $\boldsymbol{x}$ and $\boldsymbol{y}$, respectively. 
        Correspondingly, the gradients of $f$ w.r.t. antenna positions are given by $\nabla_{v}{f} = \nabla_{v}{\bar{R}_{\text{approx}}} + \mu\nabla_{v}{\mathcal{L}_{f}}$, $v\in\{x, y\}$. 
        Instead of directly employing $\nabla_{x}{f}$ and $\nabla_{y}{f}$, the gradients used for antenna positions' updates are normalized such that the displacement of antennas is determined by the step size $\alpha$. 
        Specifically, the antenna positions in the $i$-th iteration of gradient ascent are updated as $\boldsymbol{x}^{(i + 1)} = \boldsymbol{x}^{(i)} + \alpha\boldsymbol{g}_{x}^{(i)}$ and $\boldsymbol{x}^{(i + 1)} = \boldsymbol{y}^{(i)} + \alpha\boldsymbol{g}_{y}^{(i)}$, where $\boldsymbol{g}_{x}^{(i)}$ and $\boldsymbol{g}_{y}^{(i)}$ are normalized gradients defined as 
        \begin{subequations}\label{def:gradient-normalization}
            \begin{gather}
                \boldsymbol{g}_{v}^{(i)} = {\nabla_{v}{f}}/{\|\boldsymbol{d}^{(i)}\|_2}\in\mathbb{R}^{N\times 1}, ~v\in\{x, y\}, \label{subdef:normalized-gradients} \\
                \boldsymbol{d}^{(i)} = [\nabla_{x}{f}^{T}, \nabla_{y}{f}^{T}]^{T}\in\mathbb{R}^{2N\times 1}. \label{subdef:total-gradient}
            \end{gather}
        \end{subequations}
        Besides, the step size $\alpha$ is obtained by performing the backtracking line search~\cite{ref:Armijo-gradient-backtracking} to ensure that $(\boldsymbol{x}, \boldsymbol{y})$ is always feasible during the iterations and the objective function $f$ increases given $\mu$. 
        Starting from the initial value $\alpha_{0}$, $\alpha$ keeps shrinking by half until it satisfies the following conditions:
        \begin{subequations}\label{def:backtracking-conditions}
            \begin{gather}
                (\boldsymbol{x}^{(i)} + \alpha\boldsymbol{g}_{x}^{(i)}, \boldsymbol{y}^{(i)} + \alpha\boldsymbol{g}_{y}^{(i)})\in\text{int}(\mathcal{S}_{\text{MA}}), \label{subdef:backtracking-feasible} \\
                f(\boldsymbol{x}^{(i)} + \alpha\boldsymbol{g}_{x}^{(i)}, \boldsymbol{y}^{(i)} + \alpha\boldsymbol{g}_{y}^{(i)}) \ge f(\boldsymbol{x}^{(i)}, \boldsymbol{y}^{(i)}) + \eta\alpha\|\boldsymbol{d}^{(i)}\|_2, \label{subdef:backtracking-increasing}
            \end{gather}
        \end{subequations}
        where $\eta\in(0, 1)$ is the control parameter. 
        Moreover, for $v\in\{x, y\}$, it can be verified that the gradient $\nabla_{v}{\mathcal{L}_{f}}$ is written as $\nabla_{v}{\mathcal{L}_{f}} = [\frac{\mathrm{d}\mathcal{L}_{f}}{\mathrm{d}{v}_{1}}, \ldots, \frac{\mathrm{d}\mathcal{L}_{f}}{\mathrm{d}{v}_{N}}]^T\in\mathbb{R}^{{N}\times 1}$, which is given by
        \begin{equation}\label{def:log-barrier-feasible-gradients}
            \frac{\mathrm{d}\mathcal{L}_{f}}{\mathrm{d}{v}_{n}} = \frac{-2{v}_{n}}{
                {S}_{v}^2/4 - {v}_{n}^2
            } + \sum_{\substack{
                1\le i\le {N} \\
                i\neq n
            }}{\frac{
                2({v}_{n} - {v}_{i})
            }{
                \|\boldsymbol{r}_{n} - \boldsymbol{r}_{i}\|_2^2 - \Delta^2
            }}, ~\forall n. 
        \end{equation}
        However, due to the implicit definition of vector $\boldsymbol{p}$ and the expectation over $\boldsymbol{H}$, it is difficult to obtain the ergodic sum rate $\bar{R}_{\text{ZF}}$ and its gradient in closed form. 
        This motivates the consideration of a surrogate function $\bar{R}_{\text{approx}}$ that enables efficient computations for its gradient. 
        In the following sections, two expressions are derived for ${\bar{R}_{\text{approx}}}$ as well as their gradients with high/low computational complexities by exploiting Monte-Carlo simulation method and asymptotic analysis, respectively.

    \vspace{-6pt}
    \subsection{Monte-Carlo Approximation}\label{subsec:monte-carlo-approx}
        Since the ergodic sum rate $\bar{R}_{\text{ZF}}$ is obtained via expectation over $\boldsymbol{H}$, it can be approximated as the average of instantaneous sum rates given a sufficiently large number of independent realizations of $\boldsymbol{H}$ based on its spatial distribution (to be specified in Section~\ref{subsec:simulation-setup}). 
        Specifically, $M$ realizations of $\boldsymbol{H}$ are independently generated as $\boldsymbol{H}_{1}, \ldots, \boldsymbol{H}_{M}$, based on which the instantaneous sum rates ${R}_{1}, \ldots, {R}_{M}$ are computed by applying ZF beamforming and optimal power allocation following Section~\ref{subsec:zero-forcing-bf}, respectively. 
        Thus, $\bar{R}_{\text{approx}}$ is given by the empirical expectation of ${R}_{1}, \ldots, {R}_{M}$, i.e., 
        \begin{equation}\label{def:ergodic-sum-rate-monte-carlo-approx}
            \bar{R}_{\text{approx}} = \bar{R}_{\text{MC}} \triangleq \frac{1}{M}\sum_{m = 1}^{M}{{R}_{m}}. 
        \end{equation}
        With a sufficiently large $M$, the approximate ergodic sum rate $\bar{R}_{\text{approx}}$ will closely approach $\bar{R}_{\text{ZF}}$. 
        
        Given $\bar{R}_{\text{approx}}$, its gradient can be written as 
        \begin{equation}\label{def:ergodic-gradient-monte-carlo-approx}
            \nabla_{v}\bar{R}_{\text{approx}} = \nabla_{v}\bar{R}_{\text{MC}} \triangleq \frac{1}{M}\sum_{m = 1}^{M}{\frac{\mathrm{d}{{R}_{m}}}{\mathrm{d}{\boldsymbol{v}}}}, ~v\in\{x, y\}, 
        \end{equation}
        where $\frac{\mathrm{d}{{R}_{m}}}{\mathrm{d}{\boldsymbol{v}}} = [\frac{\mathrm{d}{{R}_{m}}}{\mathrm{d}{{v}_{1}}}, \ldots, \frac{\mathrm{d}{{R}_{m}}}{\mathrm{d}{{v}_{N}}}]^{T}\in\mathbb{R}^{N\times 1}$ is the gradient of ${R}_{m}$ w.r.t. $v\in\{x, y\}$ and can be derived via the chain rule. 
        Particularly, it can be observed from problem~\eqref{prob:zfbf-power-allocation} that $R_{\text{ZF}}$ is determined only by the vector $\boldsymbol{c}$, i.e., we have $R_{\text{ZF}} = \mathcal{R}(\boldsymbol{c})$, where $\mathcal{R}(\cdot)$ is a function that maps the vector $\boldsymbol{c}$ to the instantaneous sum rate after applying ZF beamforming and optimal power allocation. 
        Meanwhile, $\boldsymbol{c}$ is determined by the antenna positions $\boldsymbol{x}$ and $\boldsymbol{y}$. 
        Therefore, we have 
        \begin{equation}\label{eq:inst-sum-rate-gradient-chain-rule}
            \frac{\mathrm{d}R_{\text{ZF}}}{\mathrm{d}{v}_{n}} = \sum_{k = 1}^{K}\frac{\partial R_{\text{ZF}}}{\partial c_{k}}\cdot\frac{\mathrm{d}c_{k}}{\mathrm{d}{v}_{n}}, ~v\in\{x, y\}, ~\forall n. 
        \end{equation}
        Despite the implicit relation between $\boldsymbol{c}$ and $R_{\text{ZF}}$, the derivative of $R_{\text{ZF}}$ w.r.t. $c_{k}$ can be written in closed-form as
        \begin{equation}\label{def:inst-sum-rate-derivative-to-uvec}
            \frac{\partial R_{\text{ZF}}}{\partial c_{k}} = \frac{\partial \mathcal{R}}{\partial c_{k}} = \frac{-p_{k}}{\nu\ln{2}}, ~\forall k, 
        \end{equation}
        with the detailed derivation shown in Appendix~\ref{appendix-subsec:inst-sum-rate-derivative-to-uvec}. 
        Moreover, by defining $\boldsymbol{\kappa}_{l} = [{\kappa}_{l}^{x}, {\kappa}_{l}^{y}]^T$, the derivative of $c_{k}$ w.r.t. ${v}_{n}$ for $v\in\{x, y\}$ is given by (as detailed in Appendix~\ref{appendix-subsec:uvec-derivative-to-position})
        \begin{equation}\label{def:uvec-derivative-to-position}
            \frac{\mathrm{d}c_{k}}{\mathrm{d}{v}_{n}} = \tilde{\boldsymbol{q}}_{n}^H\text{Diag}(\boldsymbol{\xi}_{k})\boldsymbol{\Lambda}^{v}\text{Diag}(\boldsymbol{\xi}_{k}^H)\tilde{\boldsymbol{q}}_{n}, ~\forall k, n, 
        \end{equation}
        where $\boldsymbol{\xi}_{k}\in\mathbb{C}^{L\times 1}$ is the $k$-th column of the matrix $\boldsymbol{\Psi}\boldsymbol{C}_{\boldsymbol{H}}^{-1}\in\mathbb{C}^{L\times K}$, and matrices $\boldsymbol{\Lambda}^{v}$, $v\in\{x, y\}$, are $L\times L$ Hermitian matrices such that their elements in the $i$-th column and $l$-th row are defined as $\Lambda_{il}^{v} = j({\kappa}_{i}^{v} - {\kappa}_{l}^{v})$, $\forall i, l$, respectively. 
        Substituting~\eqref{def:inst-sum-rate-derivative-to-uvec} and~\eqref{def:uvec-derivative-to-position} into~\eqref{eq:inst-sum-rate-gradient-chain-rule} and following the relation $\text{Diag}(\boldsymbol{\xi}_{k})\boldsymbol{\Lambda}^{v}\text{Diag}(\boldsymbol{\xi}_{k}^H) = (\boldsymbol{\xi}_{k}\boldsymbol{\xi}_{k}^H)\odot\boldsymbol{\Lambda}^{v}$, the gradient $\nabla{R_{\text{ZF}}}$ can be written as follows:
        \begin{subequations}\label{def:inst-sum-rate-gradients}
            \allowdisplaybreaks
            \begin{align}
                \frac{\mathrm{d}R_{\text{ZF}}}{\mathrm{d}{v}_{n}} 
                & = \frac{-1}{\nu\ln{2}}\tilde{\boldsymbol{q}}_{n}^H\left[
                    \sum_{k = 1}^{K}{\left(
                        p_{k}\cdot\boldsymbol{\xi}_{k}\boldsymbol{\xi}_{k}^H
                    \right)}\odot\boldsymbol{\Lambda}^{v}
                \right]\tilde{\boldsymbol{q}}_{n} \\
                & = \frac{-1}{\nu\ln{2}}\tilde{\boldsymbol{q}}_{n}^{H}\left(
                    \boldsymbol{F}\odot\boldsymbol{\Lambda}^{v}
                \right)\tilde{\boldsymbol{q}}_{n}, ~v\in\{x, y\}, \label{def:inst-sum-rate-gradients-xygrad}
            \end{align}
        \end{subequations}
        where matrix $\boldsymbol{F}$ is given by 
        \begin{equation}\label{def:inst-sum-rate-gradients-Fmat}
            \boldsymbol{F} = \sum_{k = 1}^{K}{p_{k}\cdot\boldsymbol{\xi}_{k}\boldsymbol{\xi}_{k}^H} = \boldsymbol{\Psi}\boldsymbol{C}_{\boldsymbol{H}}^{-1}\text{Diag}(\boldsymbol{p})\boldsymbol{C}_{\boldsymbol{H}}^{-1}\boldsymbol{\Psi}^H. 
        \end{equation}
        Therefore, by calculating parameter $\nu$ and matrix $\boldsymbol{F}$ for the $m$-th channel realization as $\nu_{m}$ and $\boldsymbol{F}_{m}$, respectively, the gradient ${\mathrm{d}{{R}_{m}}}/{\mathrm{d}{\boldsymbol{v}}}$ can be written as 
        \begin{equation}
            \frac{\mathrm{d}{{R}_{m}}}{\mathrm{d}{\boldsymbol{v}}} = \frac{-1}{\nu_{m}\ln{2}}\text{diag}\left(
                \boldsymbol{Q}^{H}\left(
                    \boldsymbol{F}_{m}\odot\boldsymbol{\Lambda}^{v}
                \right)\boldsymbol{Q}
            \right), ~\forall m, 
        \end{equation}
        based on which $\nabla_{v}{\bar{R}_{\text{approx}}}$ can be obtained via~\eqref{def:ergodic-gradient-monte-carlo-approx}. 
        It can be verified that the computational complexity of computing $\nabla_{v}{\bar{R}_{\text{MC}}}$ is given by $\mathcal{O}((K^{3} + NL^{2})M)$.

    \vspace{-6pt}
    \subsection{Asymptotic Approximation}\label{subsec:asymptotic-approximation}
        Despite that the Monte-Carlo approximated ergodic sum rate can be arbitrarily accurate, its relies on a sufficiently large number of random channel realizations, which entails a high computational complexity. 
        To achieve lower complexity, a deterministic approximation of $\bar{R}_{\text{ZF}}$ is derived in this subsection without the need for massive channel realizations. 
        Although $\boldsymbol{c}$ and $\bar{R}_{\text{ZF}}$ are difficult to characterize given the number of channel paths, they become more tractable as ${L}_{k}^{t}\to\infty$, $\forall k$. 
        According to~\cite{ref:couillet-random-matrix-for-commun}, an asymptotic form for the random vector $\boldsymbol{c}$, denoted as $\boldsymbol{c}^{\infty}$, can be obtained with infinitely many paths, which is known as the deterministic equivalent (DE) for $\boldsymbol{c}$. 
        
        Given $\boldsymbol{b}_{k}$ as the power spectrum vector for user $k$ in the angular domain, $\forall k$, the channel autocorrelation matrix for user $k$ is expressed as $\boldsymbol{G}_{k} = \mathbb{E}[\boldsymbol{h}_{k}\boldsymbol{h}_{k}^H] = \boldsymbol{Q}^H\text{Diag}(\boldsymbol{b}_{k})\boldsymbol{Q}$. 
        By applying the Bai and Silverstein method in~\cite{ref:couillet-random-matrix-for-commun}, $\boldsymbol{c}^{\infty} = [{c}_{1}^{\infty}, \ldots, {c}_{K}^{\infty}]^T\in\mathbb{R}_{+}^{K\times 1}$ can be solved implicitly and ${c}_{k}^{\infty}$ is defined as follows:
        \begin{subequations}\label{def:asymp-cvec}
            \begin{align}
                {c}_{k}^{\infty} & = \left[
                    \text{tr}(\boldsymbol{G}_{k}\boldsymbol{Y}_{k}^{-1})
                \right]^{-1}, \\
                \boldsymbol{Y}_{k} & = \boldsymbol{I}_{N} + \sum_{i\neq k}{\epsilon}_{k, i}\boldsymbol{G}_{i}, 
            \end{align}
        \end{subequations}
        where ${\epsilon}_{k, l}\ge 0$, $\forall l$, are the unique positive solutions to the following equations (with the uniqueness proved in Appendix~\ref{appendix:asymp-ergo-sum-rate}):
        \begin{equation}\label{def:auxiliary-variables-fp-eq}
            \text{tr}\left(
                {\epsilon}_{k, l}\boldsymbol{G}_{l}\boldsymbol{Y}_{k}^{-1}
            \right) = 1, ~l = 1, \ldots, K. 
        \end{equation}
        It can be observed that ${\epsilon}_{k, l}$, $\forall l, k$, and $\boldsymbol{c}^{\infty}$ are deterministic given statistical CSI $\mathcal{D}$ and $\boldsymbol{B}$, regardless of instantaneous channel realizations. 
        Thus, vector $\boldsymbol{c}^{\infty}$ can be considered as a deterministic approximation of $\boldsymbol{c}$ that is asymptotically accurate, as demonstrated in the following proposition. 

        \begin{proposition}\label{prop:uvec-deterministic-equivalent}
            As $L_{k}^{t}\to\infty$ with $\zeta_{k} = L_{k}^{t}/L_{1}^{t}$, $\forall k$, being fixed, we have ${c}_{k} - {c}_{k}^{\infty}\overset{\mathrm{a.s.}}{\longrightarrow}{0}$, $k = 1, \ldots, K$. 
        \end{proposition}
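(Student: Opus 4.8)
The plan is to turn the diagonal entry $c_k$ of $\boldsymbol{C}_{\boldsymbol{H}}^{-1}$ into a resolvent quadratic form, concentrate it with a trace lemma, replace the random resolvent by its Bai--Silverstein deterministic equivalent, and finally send a regularization parameter to zero to recover the fixed-point system~\eqref{def:asymp-cvec}--\eqref{def:auxiliary-variables-fp-eq}. Concretely, I would first use the Schur complement (the matrix-inversion lemma applied to the bordered Gram matrix $\boldsymbol{C}_{\boldsymbol{H}} = \boldsymbol{H}^H\boldsymbol{H}$) to write, almost surely,
\begin{equation*}
    c_k = \big[\boldsymbol{C}_{\boldsymbol{H}}^{-1}\big]_{kk} = \big(\boldsymbol{h}_k^H\boldsymbol{\Pi}_{[k]}^{\perp}\boldsymbol{h}_k\big)^{-1},
\end{equation*}
where $\boldsymbol{H}_{[k]}$ denotes $\boldsymbol{H}$ with its $k$-th column removed and $\boldsymbol{\Pi}_{[k]}^{\perp} = \boldsymbol{I}_N - \boldsymbol{H}_{[k]}(\boldsymbol{H}_{[k]}^H\boldsymbol{H}_{[k]})^{-1}\boldsymbol{H}_{[k]}^H$ is the orthogonal projector onto the complement of the column space of $\boldsymbol{H}_{[k]}$. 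Since this projector is singular, I would regularize it through $\boldsymbol{B}_{(k)} = \boldsymbol{H}_{[k]}\boldsymbol{H}_{[k]}^H = \sum_{i\neq k}\boldsymbol{h}_i\boldsymbol{h}_i^H$, using $\boldsymbol{\Pi}_{[k]}^{\perp} = \lim_{z\to 0^+}z(\boldsymbol{B}_{(k)} + z\boldsymbol{I}_N)^{-1}$, so that $(c_k)^{-1} = \lim_{z\to0^+}z\,\boldsymbol{h}_k^H(\boldsymbol{B}_{(k)} + z\boldsymbol{I}_N)^{-1}\boldsymbol{h}_k$. Keeping $z>0$ renders every matrix invertible and admits the standard resolvent analysis.

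Second, for each fixed $z>0$ I would exploit that $\boldsymbol{h}_k = \boldsymbol{Q}^H\boldsymbol{\psi}_k$ is independent of $\boldsymbol{B}_{(k)}$, as the $\boldsymbol{\psi}_i$ are independent across users. Conditioning on $\boldsymbol{B}_{(k)}$ and applying a quadratic-form (trace) lemma to the high-dimensional Gaussian $\boldsymbol{\psi}_k$ gives
\begin{equation*}
    \boldsymbol{h}_k^H(\boldsymbol{B}_{(k)} + z\boldsymbol{I}_N)^{-1}\boldsymbol{h}_k - \text{tr}\big(\boldsymbol{G}_k(\boldsymbol{B}_{(k)} + z\boldsymbol{I}_N)^{-1}\big)\overset{\mathrm{a.s.}}{\longrightarrow}0,
\end{equation*}
because $\mathbb{E}[\boldsymbol{h}_k\boldsymbol{h}_k^H] = \boldsymbol{G}_k$. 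I would then invoke the Bai--Silverstein deterministic equivalent of~\cite{ref:couillet-random-matrix-for-commun} for the sum-of-rank-one resolvent: there is a deterministic $\boldsymbol{T}_k(z) = \big(\sum_{i\neq k}\frac{\boldsymbol{G}_i}{1 + \delta_{k,i}(z)} + z\boldsymbol{I}_N\big)^{-1}$ with $\delta_{k,i}(z) = \text{tr}(\boldsymbol{G}_i\boldsymbol{T}_k(z))$ satisfying $\text{tr}(\boldsymbol{G}_k(\boldsymbol{B}_{(k)} + z\boldsymbol{I}_N)^{-1}) - \text{tr}(\boldsymbol{G}_k\boldsymbol{T}_k(z))\overset{\mathrm{a.s.}}{\longrightarrow}0$ in the large-system limit underpinning this framework (with $N$ growing commensurately with the path counts). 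Chaining the two displays yields $z\,\boldsymbol{h}_k^H(\boldsymbol{B}_{(k)} + z\boldsymbol{I}_N)^{-1}\boldsymbol{h}_k - z\,\text{tr}(\boldsymbol{G}_k\boldsymbol{T}_k(z))\overset{\mathrm{a.s.}}{\longrightarrow}0$ for every fixed $z>0$.

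Third, I would send $z\to0^+$ and match the limit to~\eqref{def:asymp-cvec}. Defining $\epsilon_{k,i}(z) = [z(1 + \delta_{k,i}(z))]^{-1}$ recasts $z\boldsymbol{T}_k(z) = \boldsymbol{Y}_k(z)^{-1}$ with $\boldsymbol{Y}_k(z) = \boldsymbol{I}_N + \sum_{i\neq k}\epsilon_{k,i}(z)\boldsymbol{G}_i$, and the defining relation becomes $\epsilon_{k,i}(z)\big(z + \text{tr}(\boldsymbol{G}_i\boldsymbol{Y}_k(z)^{-1})\big) = 1$. Passing to the limit gives $\epsilon_{k,i}\,\text{tr}(\boldsymbol{G}_i\boldsymbol{Y}_k^{-1}) = 1$ together with $\boldsymbol{Y}_k = \boldsymbol{I}_N + \sum_{i\neq k}\epsilon_{k,i}\boldsymbol{G}_i$, which are exactly~\eqref{def:auxiliary-variables-fp-eq} and the second equation of~\eqref{def:asymp-cvec}, while $z\,\text{tr}(\boldsymbol{G}_k\boldsymbol{T}_k(z)) = \text{tr}(\boldsymbol{G}_k\boldsymbol{Y}_k(z)^{-1})\to\text{tr}(\boldsymbol{G}_k\boldsymbol{Y}_k^{-1})$. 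Thus $(c_k)^{-1}\to\text{tr}(\boldsymbol{G}_k\boldsymbol{Y}_k^{-1}) = (c_k^{\infty})^{-1}$, and since this limit is positive and bounded away from zero whenever $K-1<N$ (so that the projector has positive-dimensional range), $c_k - c_k^{\infty}\overset{\mathrm{a.s.}}{\longrightarrow}0$ follows.

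The main obstacle I anticipate is justifying the interchange of the two limiting operations: the almost-sure deterministic-equivalent convergence is established for each fixed regularizer $z>0$, whereas $c_k$ lives at the singular edge $z=0$. Commuting $\lim_{z\to0^+}$ with the large-system limit calls for uniform (monotone or equicontinuous) control of both $z\,\boldsymbol{h}_k^H(\boldsymbol{B}_{(k)} + z\boldsymbol{I}_N)^{-1}\boldsymbol{h}_k$ and $\text{tr}(\boldsymbol{G}_k\boldsymbol{Y}_k(z)^{-1})$ as $z\downarrow0$, which I would obtain from the monotonicity of these resolvent forms in $z$ together with the non-degeneracy of $\boldsymbol{\Pi}_{[k]}^{\perp}$. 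The accompanying existence, positivity, and uniqueness of the solution $\{\epsilon_{k,l}\}$ to~\eqref{def:auxiliary-variables-fp-eq} — which I would establish by writing the fixed point as a standard interference mapping and applying a contraction argument — is precisely the claim deferred to Appendix~\ref{appendix:asymp-ergo-sum-rate} and may be assumed here.
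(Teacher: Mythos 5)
Your proposal follows essentially the same route as the paper's proof in Appendix~\ref{appendix:asymp-ergo-sum-rate}: a Schur-complement representation of $c_k$ as the inverse of a quadratic form in $\boldsymbol{h}_k$, a regularization of the singular matrix (your $z$ corresponds to the paper's $\gamma$ via $z=\gamma K$ once the paper applies the Woodbury identity to pass from the $K\times K$ matrix $\boldsymbol{\Phi}$ to the $N\times N$ resolvent $\boldsymbol{R}_{\sim k}$), the trace lemma to replace the quadratic form by $\mathrm{tr}(\boldsymbol{G}_k\boldsymbol{R}_{\sim k}^{-1})$, the Bai--Silverstein fixed-point deterministic equivalent for that trace, and finally the limit $z\to 0^{+}$ to recover the system~\eqref{def:asymp-cvec}--\eqref{def:auxiliary-variables-fp-eq}. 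The only differences are cosmetic (you regularize the orthogonal projector directly rather than the Gram matrix) plus the fact that you explicitly flag the interchange of the large-system and $z\to 0^{+}$ limits, a point the paper handles in the same order but without comment.
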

        \begin{proof}
            Please refer to Appendix~\ref{appendix:asymp-ergo-sum-rate}. 
        \end{proof}

        According to the proposition above, for sufficiently large $L_{k}^{t}$, $\forall k$, every realization of $\boldsymbol{c}$ is very close to $\boldsymbol{c}^{\infty}$, which leads to approximately the same instantaneous sum rate $\mathcal{R}(\boldsymbol{c})\approx\mathcal{R}(\boldsymbol{c}^{\infty})$. 
        Therefore, we propose to approximate the ergodic sum rate as 
        \begin{equation}
            \bar{R}_{\text{approx}} = R_{\text{ZF}}^{\infty} \triangleq \mathcal{R}(\boldsymbol{c}^{\infty}). 
        \end{equation}
        Despite the precondition on large values of $L_{k}^{t}$, $\forall k$, in Proposition~\ref{prop:uvec-deterministic-equivalent}, we will show via simulations in Section~\ref{sec:performance-evaluation} that ${R}_{\text{ZF}}^{\infty}$ can also provide a good approximation of the ergodic sum rate for relatively small $L_{k}^{t}$, $\forall k$. 

        To obtain ${c}_{k}^{\infty}$, we need to solve $\boldsymbol{\epsilon}_{k} = [{\epsilon}_{k, 1}, \ldots, {\epsilon}_{k, K}]^T\in\mathbb{R}_{+}^{K\times 1}$ first from equations~\eqref{def:auxiliary-variables-fp-eq}. 
        To this end, define functions for vector $\boldsymbol{\epsilon} = [\epsilon_{1}, \ldots, \epsilon_{K}]^T\in\mathbb{R}_{+}^{K\times 1}$ as follows: 
        \begin{subequations}\label{def:auxiliary-variables-newton-eq}
            \allowdisplaybreaks
            \begin{gather}
                {g}_{k, l}(\boldsymbol{\epsilon}) = \text{tr}\left(
                    \epsilon_{l}\boldsymbol{G}_{l}\left(
                        \boldsymbol{I}_{N} + \textstyle\sum_{i\neq k}\epsilon_{i}\boldsymbol{G}_{i}
                    \right)^{-1}
                \right) - 1, ~\forall l, \\
                \boldsymbol{\mathcal{G}}_{k}(\boldsymbol{\epsilon}) = [{g}_{k, 1}(\boldsymbol{\epsilon}), \ldots, {g}_{k, K}(\boldsymbol{\epsilon})]^T\in\mathbb{R}^{K\times 1}. 
            \end{gather}
        \end{subequations}
        Note that $\boldsymbol{\epsilon}_{k}$ is the unique solution to equation $\boldsymbol{\mathcal{G}}_{k}(\boldsymbol{\epsilon}) = \boldsymbol{0}_{K\times 1}$, which can be solved via the Newton's method~\cite{ref:newton-method}. 
        Specifically, the vector updated in the $i$-th iteration, denoted as $\boldsymbol{\epsilon}_{k}^{(i + 1)}$, is obtained as 
        \begin{equation}\label{def:asymp-newton-update}
            \boldsymbol{\epsilon}_{k}^{(i + 1)} = \boldsymbol{\epsilon}_{k}^{(i)} - \boldsymbol{J}_{\boldsymbol{\mathcal{G}}_{k}}(\boldsymbol{\epsilon}_{k}^{(i)})^{-1}\boldsymbol{\mathcal{G}}_{k}(\boldsymbol{\epsilon}_{k}^{(i)}), ~i = 0, \ldots, {I}_{\text{DE}} - 1, 
        \end{equation}
        where ${I}_{\text{DE}}$ is the number of iterations, $\boldsymbol{\epsilon}_{k}^{(0)} = \boldsymbol{0}_{K\times 1}$ is employed as initialization and $\boldsymbol{J}_{\boldsymbol{\mathcal{G}}_{k}}(\boldsymbol{\epsilon}_{k}^{(i)})$ is the Jacobian matrix of function $\boldsymbol{\mathcal{G}}_{k}$ w.r.t. vector $\boldsymbol{\epsilon}$ at $\boldsymbol{\epsilon}_{k}^{(i)}$. 
        Define matrices $\boldsymbol{B}_{k} = [\boldsymbol{b}_{1}, \ldots, \boldsymbol{b}_{k - 1}, \boldsymbol{0}_{K\times 1}, \boldsymbol{b}_{k + 1}, \ldots, \boldsymbol{b}_{K}]\in\mathbb{R}_{+}^{L\times K}$, $\forall k$. 
        The following equations can be verified:
        \begin{subequations}\label{def:asymp-newton-update-details}
            \allowdisplaybreaks
            \begin{align}
                \boldsymbol{\mathcal{G}}_{k}(\boldsymbol{\epsilon}) & = \text{Diag}(\boldsymbol{\epsilon})\boldsymbol{B}^{T}\text{diag}(\boldsymbol{D}_{k}) - \boldsymbol{1}_{K}, \label{subdef:asymp-newton-fval} \\
                \boldsymbol{J}_{\boldsymbol{\mathcal{G}}_{k}}(\boldsymbol{\epsilon}) & = \text{Diag}(\boldsymbol{B}^{T}\text{diag}(\boldsymbol{D}_{k})) - \text{Diag}(\boldsymbol{\epsilon})\boldsymbol{X}_{k}, \label{subdef:asymp-newton-jacobian} \\
                \boldsymbol{X}_{k} & = \boldsymbol{B}^{T}\left(\boldsymbol{D}_{k}\odot\boldsymbol{D}_{k}^{T}\right)\boldsymbol{B}_{k}, \label{subdef:asymp-newton-Xmat} \\
                \boldsymbol{D}_{k} & = \boldsymbol{Q}\left(
                    \boldsymbol{I}_{N} + \boldsymbol{Q}^{H}\text{Diag}(\boldsymbol{B}_{k}\boldsymbol{\epsilon})\boldsymbol{Q}
                \right)^{-1}\boldsymbol{Q}^{H}. \label{subdef:asymp-newton-Dmat}
            \end{align}
        \end{subequations}
        The proof for~\eqref{def:asymp-newton-update-details} is given in Appendix~\ref{appendix-subsec:asymp-newton-jacobian}. 
        The algorithm to solve $\boldsymbol{\epsilon}_{k}$ given any $k$ is summarized in Algorithm ~\ref{alg:auxiliary-newton-method}. 
        It terminates if the relative error of $\boldsymbol{\epsilon}_{k}^{(i)}$ compared with $\boldsymbol{\epsilon}_{k}^{(i - 1)}$ and the total error of equation $\boldsymbol{\mathcal{G}}_{k}(\boldsymbol{\epsilon}_{k}^{(i)}) = \boldsymbol{0}_{K\times 1}$ are smaller than $10^{-3}$, respectively. 
        After solving $\boldsymbol{\epsilon}_{1}, \ldots, \boldsymbol{\epsilon}_{K}$, the asymptotic vector $\boldsymbol{c}^{\infty}$ is given by equations~\eqref{def:asymp-cvec} and ${R}_{\text{ZF}}^{\infty} = \mathcal{R}(\boldsymbol{c}^{\infty})$ can be obtained accordingly. 

        \begin{algorithm}[t]
            \begin{minipage}{0.95\linewidth}
            \centering
            \caption{Newton's method to solve $\boldsymbol{\epsilon}_{k}$. }
            \begin{algorithmic}[1]\label{alg:auxiliary-newton-method}
                \REQUIRE Transmit FRM $\boldsymbol{Q}$, users' angular power spectrums $\boldsymbol{b}_{1}$, $\ldots$, $\boldsymbol{b}_{K}$, index $k$. 
                \STATE~Let $\varepsilon_{c}, \varepsilon_{g}\gets 1$, $\boldsymbol{\epsilon}_{k}^{(0)}\gets\boldsymbol{0}_{K\times 1}$, and $i\gets 0$. 
                \WHILE{$\varepsilon_{c} \ge 10^{-3}$ or $\varepsilon_{g}\ge 10^{-3}$}
                    \STATE Compute $\boldsymbol{\epsilon}_{k}^{(i + 1)}$ via~\eqref{def:asymp-newton-update} and~\eqref{def:asymp-newton-update-details} and let $i\gets i + 1$. 
                    \STATE Let $\varepsilon_{c}\gets\|\boldsymbol{\epsilon}_{k}^{(i)} - \boldsymbol{\epsilon}_{k}^{(i - 1)}\|_{2}/\|\boldsymbol{\epsilon}_{k}^{(i)}\|_{2}$ and $\varepsilon_{g}\gets\|\boldsymbol{\mathcal{G}}_{k}(\boldsymbol{\epsilon}_{k}^{(i)})\|_2$. 
                \ENDWHILE
                \RETURN Let $\boldsymbol{\epsilon}_{k}\gets\boldsymbol{\epsilon}_{k}^{(i)}$. 
            \end{algorithmic}
            \end{minipage}
        \end{algorithm}

        Next, the gradient $\nabla_{v}{\bar{R}_{\text{approx}}} = \nabla_{v}{R_{\text{ZF}}^{\infty}}$ is computed for $v\in\{x, y\}$. 
        By applying $\boldsymbol{c}^{\infty}$ to ZF beamforming, we denote the obtained parameter $\nu$ and power allocation vector $\boldsymbol{p}$ as $\nu^{\infty}$ and $\boldsymbol{p}^{\infty}$, respectively. 
        Then, we have 
        \begin{equation}
            \nabla_{v}{R_{\text{ZF}}^{\infty}} = \sum_{k = 1}^{K}{
                \frac{\partial{R_{\text{ZF}}^{\infty}}}{\partial{{c}_{k}^{\infty}}}
                \cdot
                \frac{\mathrm{d}{{c}_{k}^{\infty}}}{\mathrm{d}\boldsymbol{v}}
            } = \left(
                \frac{\mathrm{d}\boldsymbol{c}^{\infty}}{\mathrm{d}{\boldsymbol{v}}}
            \right)^{T}
            \frac{-\boldsymbol{p}^{\infty}}{\nu^{\infty}\ln{2}}, 
        \end{equation}
        where $\frac{\mathrm{d}\boldsymbol{c}^{\infty}}{\mathrm{d}{\boldsymbol{v}}} = [\frac{\mathrm{d}{{c}_{1}^{\infty}}}{\mathrm{d}{\boldsymbol{v}}}, \ldots, \frac{\mathrm{d}{{c}_{K}^{\infty}}}{\mathrm{d}{\boldsymbol{v}}}]^{T}\in\mathbb{R}^{K\times N}$ is the Jacobian matrix of vector $\boldsymbol{c}^{\infty}$ w.r.t. $\boldsymbol{v}$, $\forall v\in\{x, y\}$. 
        By leveraging equations~\eqref{def:asymp-cvec} and~\eqref{def:auxiliary-variables-fp-eq}, $\frac{\mathrm{d}{{c}_{k}^{\infty}}}{\mathrm{d}{\boldsymbol{v}}}$, $\forall k$, is given by 
        \begin{subequations}\label{def:asymp-cvec-gradient-to-position}
            \allowdisplaybreaks
            \begin{align}
                & \frac{\mathrm{d}{{c}_{k}^{\infty}}}{\mathrm{d}{\boldsymbol{v}}} = - ({c}_{k}^{\infty})^{2}\cdot\left(
                    \boldsymbol{R}_{k}^{v}\tilde{\boldsymbol{\chi}}_{k} + \tilde{\boldsymbol{\mu}}_{k}^{v}
                \right), \\
                & \tilde{\boldsymbol{\chi}}_{k} = \boldsymbol{Z}_{k}^{T}\boldsymbol{b}_{k}, ~~\boldsymbol{R}_{k}^{v} = \boldsymbol{U}_{k}^{v}\text{Diag}(\boldsymbol{\epsilon}_{k})\left[
                    \boldsymbol{J}_{\boldsymbol{\mathcal{G}}_{k}}(\boldsymbol{\epsilon}_{k})^{-1}
                \right]^{T}, \label{subdef:asymp-cgrad-chivec-Rmat} \\
                & \tilde{\boldsymbol{\mu}}_{k}^{v} = 2\text{Re}\left(\boldsymbol{F}_{k}^{v}\boldsymbol{b}_{k}\right), ~~\boldsymbol{U}_{k}^{v} = 2\text{Re}\left(\boldsymbol{F}_{k}^{v}\boldsymbol{B}\right), \label{subdef:asymp-cgrad-muvec-Umat} \\
                & \boldsymbol{F}_{k}^{v} = \left[
                    \left(
                        \boldsymbol{I}_{L} - \tilde{\boldsymbol{D}}_{k}\text{Diag}\left(
                            \boldsymbol{\omega}_{k}
                        \right)
                    \right)
                    \text{Diag}\left(
                        j\boldsymbol{\vartheta}^{v}
                    \right)
                    \boldsymbol{Q}
                \right]^{T}\odot\boldsymbol{E}_{k}, \label{subdef:asymp-cgrad-Fmat}
            \end{align}
        \end{subequations}
        where $\boldsymbol{\vartheta}^{v} = [{\kappa}_{1}^{v}, \ldots, {\kappa}_{L}^{v}]^T\in\mathbb{R}^{L\times 1}$, $v\in\{x, y\}$, and $\tilde{\boldsymbol{D}}_{k} = \boldsymbol{Q}\boldsymbol{Y}_{k}^{-1}\boldsymbol{Q}^{H}$ is the special case of $\boldsymbol{D}_{k}$ with $\boldsymbol{\epsilon} = \boldsymbol{\epsilon}_{k}$, while $\boldsymbol{Z}_{k}$, $\boldsymbol{\omega}_{k}$, and $\boldsymbol{E}_{k}$ are defined as
        \begin{equation}\label{def:asymp-deriv-auxiliary}
            \boldsymbol{Z}_{k} = \left(\tilde{\boldsymbol{D}}_{k}\odot\tilde{\boldsymbol{D}}_{k}^T\right)\boldsymbol{B}_{k}, ~\boldsymbol{\omega}_{k} = \boldsymbol{B}_{k}\boldsymbol{\epsilon}_{k}, ~\boldsymbol{E}_{k} = \boldsymbol{Y}_{k}^{-1}\boldsymbol{Q}^H. 
        \end{equation}
        The detailed derivation for~\eqref{def:asymp-cvec-gradient-to-position} is given in Appendix~\ref{appendix-subsec:asymp-gradients-proof}. 
        It can be verified that the total computational complexity for computing $\nabla_{v}{{R}_{\text{ZF}}^{\infty}}$ is given by $\mathcal{O}((N^{3} + NL^{2})KI_{\text{DE}})$, where $I_{\text{DE}}$ denotes the number of iterations of Algorithm~\ref{alg:auxiliary-newton-method}. 

        Notably, the asymptotic approximation only utilizes the channel autocorrelation matrices for users instead of the exact channel distribution. 
        The transmit FRM $\boldsymbol{Q}$ is adjusted by designing antenna positions according to the users' angular power spectrums $\boldsymbol{b}_{1}, \ldots, \boldsymbol{b}_{K}$. 
        This indicates that the antenna positions are optimized to distinguish the angular power spectrums of users such that the correlation between users' channels are reduced. 
        Moreover, Monte-Carlo simulations are no longer needed for the asymptotic approximation, making it more computationally efficient in practice. 
        Due to the second-order convergence of the Newton's method, $I_{\text{DE}}$ is usually low and the computational cost for asymptotic approximation method is much lower than that of the Monte-Carlo approximation method, which generally requires large $M$ for better approximation accuracy. 

\vspace{-6pt}
\section{Performance Evaluation}\label{sec:performance-evaluation}

    \subsection{Simulation Setup}\label{subsec:simulation-setup}
        
        To validate the proposed solutions in a realistic propagation environment with site-specific angular power distributions, the urban map data at Clementi, Singapore~\cite{ref:openstreetmap} is employed for channel generation given each user location realization. 
        Specifically, the site setup is shown in Fig.~\ref{fig:site-view}, where both 2D and 3D views are given. 
        As shown in Fig.~\ref{subfig:site-view-2d}, a rectangular region of $408.44$ meters (m) length and $219.78$ m width is considered with the BS (represented by the red marker) located at the top of the northnmost building with height $59.97$ m. 
        The BS is equipped with $N = 16$ MAs in an antenna moving region of size ${S}_{x} = {S}_{y} = {S}_{0} = 8\lambda$ orientating to the south. 
        The carrier frequency is set as $f_{c} = 5$ GHz and the corresponding wavelength is $\lambda = 5.98$ cm. 
        The transmit power is $P_{T} = 30$ dBm. 
        {\color{\updatecolor}The performance is evaluated by averaging the ergodic sum rate over multiple realizations of users' reference locations $(\boldsymbol{u}_{1}^{\text{ref}}, \ldots, \boldsymbol{u}_{K}^{\text{ref}})$}. 
        Specifically, users' reference locations are randomly selected within the site of interest and ray-tracing is applied to obtain all the paths from the BS to these locations, including their AoDs and channel power gains, where the maximum number of reflections for all NLoS paths is set as $N_{\text{reflect}} = 3$. 
        {\color{\updatecolor}Based on them, the corresponding statistical CSI is obtained. 
        Then, the ergodic sum rate is calculated as the empirical expectation of $100$ instantaneous sum rates, which is further averaged over $100$ realizations of users' reference locations, such that the averaged communication performance for users within the site of interest can be evaluated. }
        To avoid excessive computational cost for time-consuming ray-tracing in our simulations, ${K}_{\text{c}} = 200$ candidate reference locations are generated within the site beforehand as shown in Fig.~\ref{fig:site-view}, from which users' reference locations are randomly selected. 
        For instance, the rays generated for the $44$, $188$, and $196$-th candidate locations are shown in Fig.~\ref{subfig:hotspot-rays}, each of which has $12$ paths from the BS. 
        
        \begin{figure}[t!]
            \centering
            {
                \begin{subfigure}[t]{0.48\textwidth}
                    \centering
                    \includegraphics[scale = 0.16]{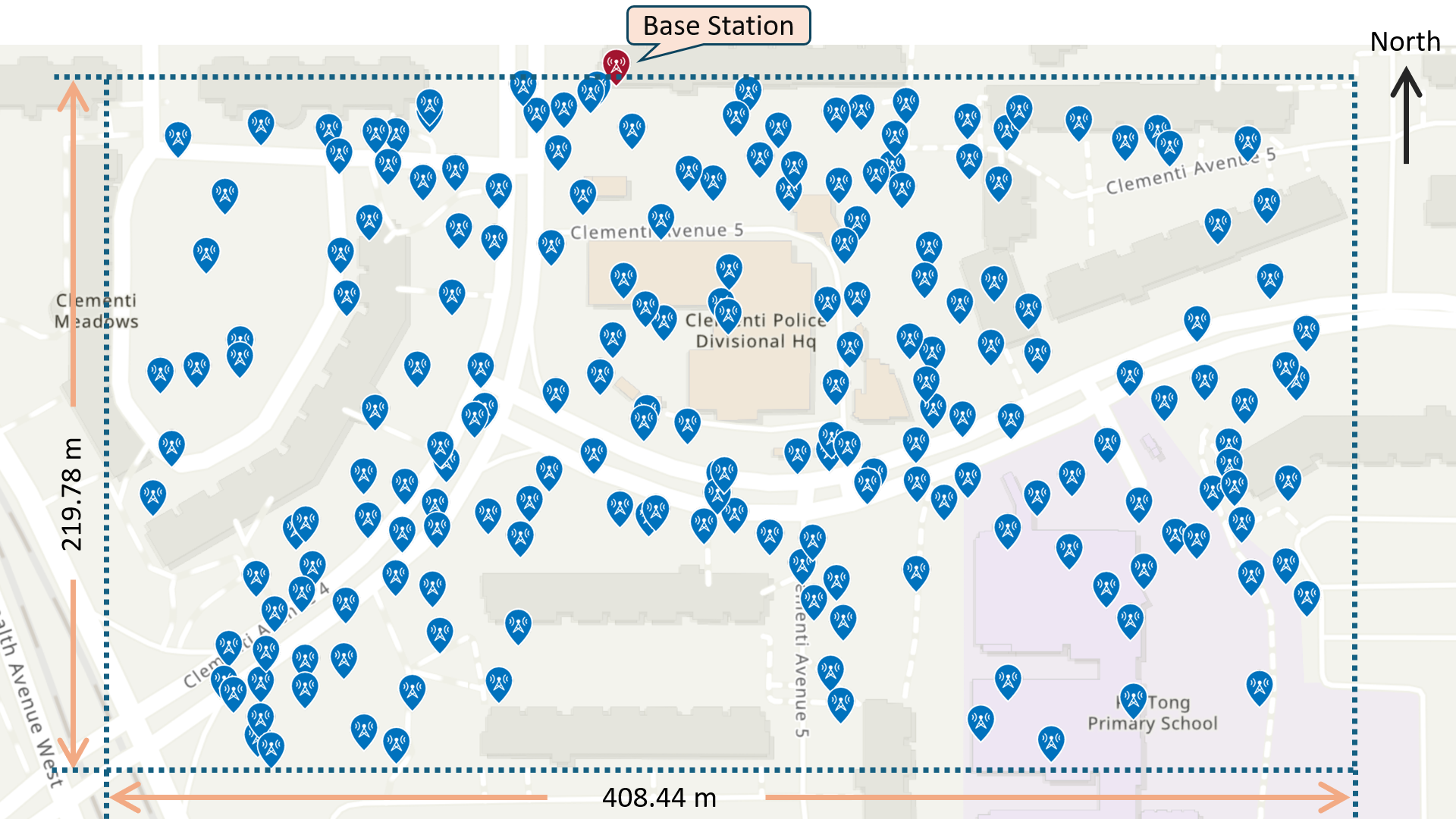}
                    \caption{Site setup (2D view from top). }
                    \label{subfig:site-view-2d}
                    \vspace{2pt}
                \end{subfigure}
                \begin{subfigure}[t]{0.238\textwidth}
                    \centering
                    \includegraphics[scale = 0.17]{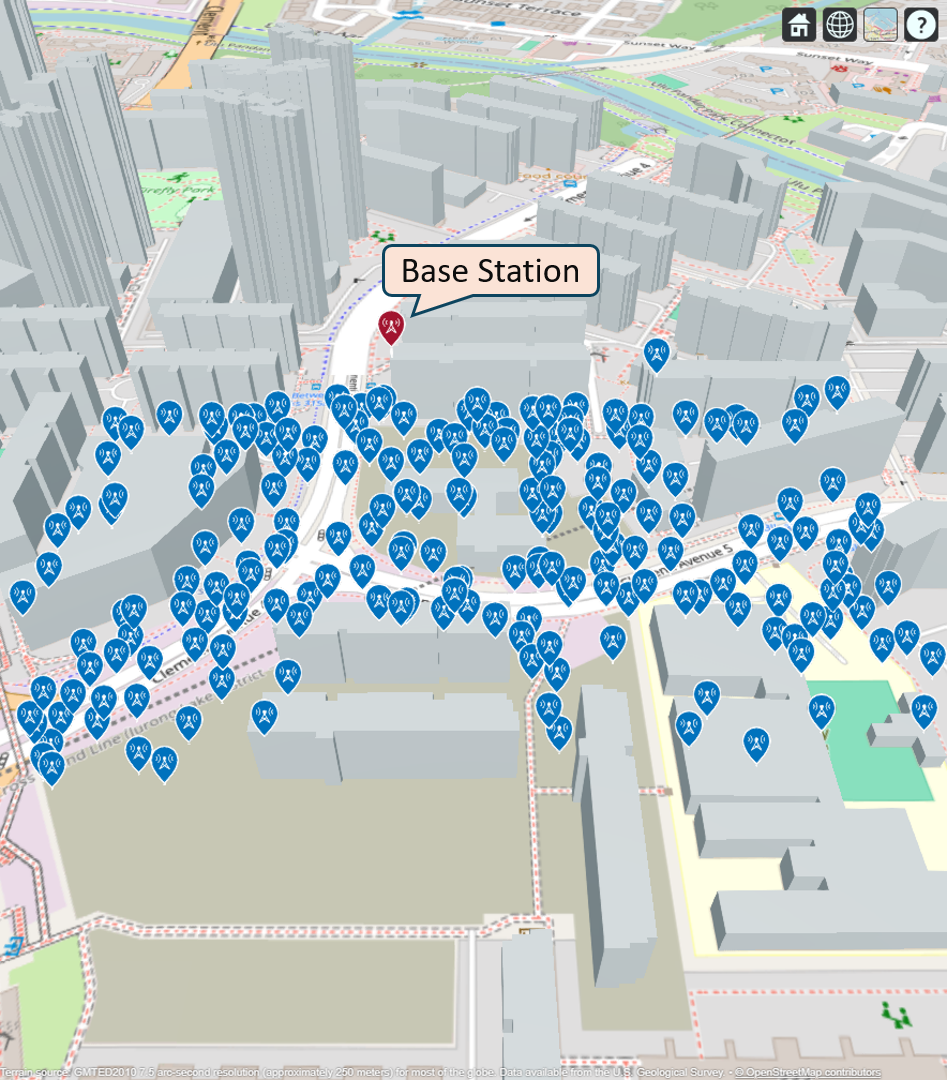}
                    \caption{Site setup (3D view). }
                    \label{subfig:site-view-3d}
                \end{subfigure}
                \begin{subfigure}[t]{0.232\textwidth}
                    \centering
                    \includegraphics[scale = 0.17]{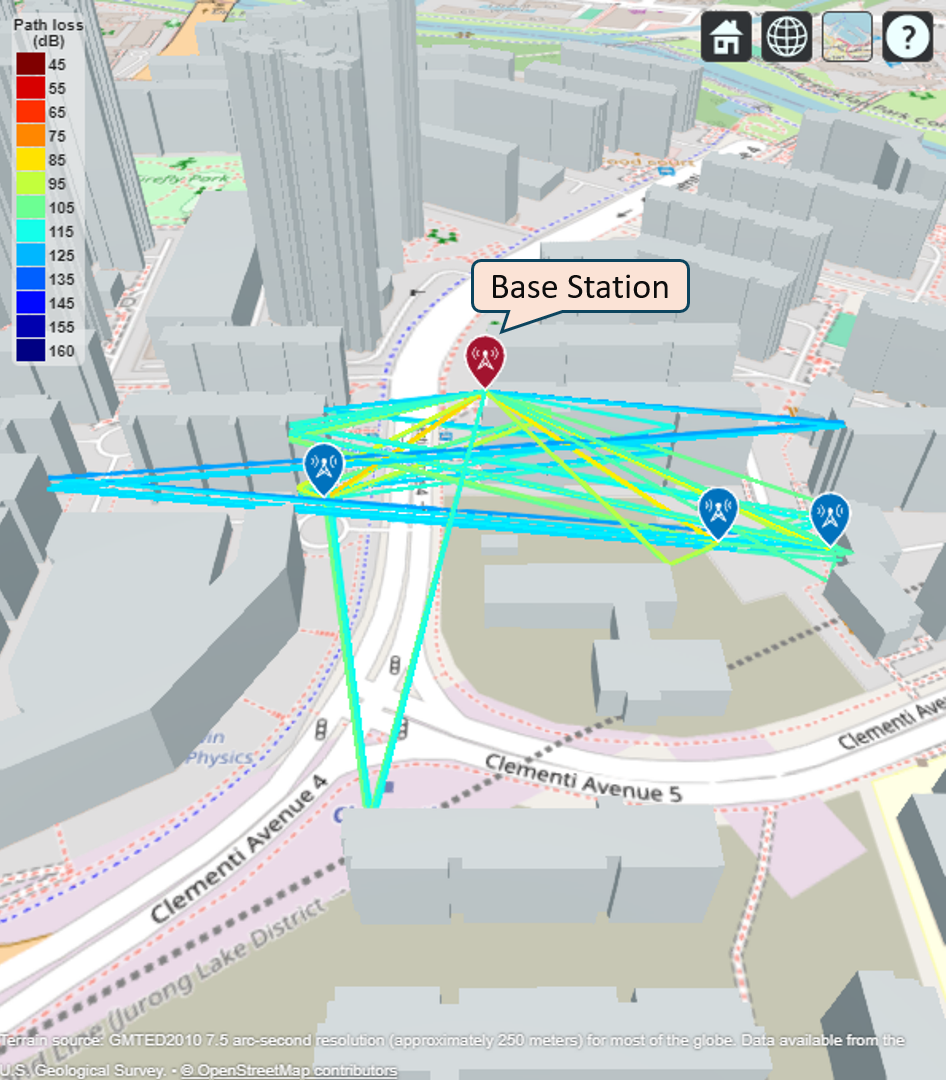}
                    \caption{Rays for the $44$, $188$, and $196$-th candidate locations. }
                    \label{subfig:hotspot-rays}
                \end{subfigure}
            }
            \vspace{-2pt}
            \caption{Site environment setup with $200$ randomly generated user candidate locations. }
            \label{fig:site-view}
            \vspace{-2pt}
        \end{figure}

        Next, the acquisition of the statistical CSI from ray-tracing results given the users' reference locations is explained. 
        Note that the each path obtained via ray-tracing corresponds to a unique pair of transmit AoD and receive AoA because only mirror reflections of signals are considered. 
        To account for the complicated scattering environment in practice, the channel power gain of each path is considered as the expected power response corresponding to its transmit AoD, i.e., ${b}_{kl}$ for the $l$-th transmit channel path of user $k$. 
        Then, the statistical channel model can be established following Section~\ref{subsec:statistical-channel-model}. 
        Additionally, to study the influence of Rician factor on the performance, two parameters $\eta_{\text{LoS}}$ and $\eta_{\text{NLoS}}$ are applied to scale the LoS and NLoS channels, respectively. 
        Particularly, they are selected such that the ratio of users' expected LoS and NLoS power equals to the Rician factor ${\beta}$ while the total expected channel power remains constant. 
        By denoting $\bar{P}_{\text{LoS}}$ and $\bar{P}_{\text{NLoS}}$ as the users' expected LoS and NLoS channel power within the site without scaling, respectively, the expected total channel power is given by $\bar{P}_{\text{total}} = \bar{P}_{\text{LoS}} + \bar{P}_{\text{NLoS}}$, where $\eta_{\text{LoS}}$ and $\eta_{\text{NLoS}}$ are written as  
        \begin{equation}\label{def:rician-parameters}
            \eta_{\text{LoS}} = \left(
                \frac{\bar{P}_{\text{total}}}{\bar{P}_{\text{LoS}}}\frac{\beta}{1 + \beta}
            \right)^{\frac{1}{2}}, 
            ~\eta_{\text{NLoS}} = \left(
                \frac{\bar{P}_{\text{total}}}{\bar{P}_{\text{NLoS}}}\frac{1}{1 + \beta}
            \right)^{\frac{1}{2}}. 
        \end{equation}
        As such, we have $\eta_{\text{LoS}}^{2}\bar{P}_{\text{LoS}}/\eta_{\text{NLoS}}^{2}\bar{P}_{\text{NLoS}} = \beta$ while the total channel power $\eta_{\text{LoS}}^{2}\bar{P}_{\text{LoS}} + \eta_{\text{NLoS}}^{2}\bar{P}_{\text{NLoS}} = \bar{P}_{\text{total}}$ is unchanged. 
        
        Unless otherwise noted, we set $K = 12$, and $\beta = 10$ dB. 
        Additionally, we set noise power $\sigma^2 = -90$ dBm, $\Delta = \lambda/2$, and $M = 30$ for the Monte-Carlo approximation\footnote{Note that the $M$ channel realizations are employed to compute $\bar{R}_{\text{MC}}$ in~\eqref{def:ergodic-sum-rate-monte-carlo-approx}, which should be independent of the $100$ channel realizations used for evaluating the ergodic sum rate after solving the antenna positions. }. 
        Parameters for Algorithm~\ref{alg:log-barrier-gradient-ascent-framework} and~\ref{alg:auxiliary-newton-method} are set as $\mu_{0} = 1$, $\varepsilon_{r} = 0.01\lambda$, ${\alpha}_{0} = 0.15\lambda$, $\eta = 0.2$, $I = 20$, and $\rho = 0.4$. 

    \vspace{-6pt}
    \subsection{Proposed and Benchmarks Schemes}\label{subsec:perf-benchmarks}
        In the following, the proposed algorithms using approximate gradients given by Sections~\ref{subsec:monte-carlo-approx} and~\ref{subsec:asymptotic-approximation} are labeled as `MA-statistical-MC' and `MA-statistical-DE', respectively, where `MC' referes to Monte-Carlo while `DE' referes to deterministic equivalents (based on asymptotic analysis). 
        To validate the effectiveness of the proposed algorithms, three benchmarks are considered for performance comparison, which are listed as follows: 
        i) \textbf{MA-instantaneous}: 
        The MA positions are optimized based on instantaneous CSI. 
        The proposed log-barrier penalized gradient ascent algorithm is employed to optimize antenna positions but gradient $\nabla_{v}\bar{R}_{\text{approx}}$ is replaced by $\nabla_{v}{R}_{ZF}$, which can be obtained via equation~\eqref{def:inst-sum-rate-gradients}. 
        The resulting ergodic sum rate serves as an upper bound on that of the proposed MA optimization method based on statistical CSI. 
        ii) \textbf{UPA-dense}: The conventional dense UPA is employed, where the antennas forms a $4\times 4$ array with inter-antenna separation $\lambda/2$. 
        iii) \textbf{UPA-sparse}: The antennas are sparsely placed into a $4\times 4$ array with inter-antenna separations ${S}_{x}/4$ and ${S}_{y}/4$ along the $x$ and $y$ axes, respectively. 

        Additionally, for all schemes except MA-instantaneous, the asymptotic approximation for ergodic sum rate, i.e., ${R}_{\text{ZF}}^{\infty}$, can be computed given the statistical CSI and the optimized antenna positions. 
        To verify the effectiveness of this approximation, ${R}_{\text{ZF}}^{\infty}$ is also shown in simulation results, labeled with the same name as the scheme used for antenna position optimization but followed by `asymptotic'. 
        For example, `MA-statistical-MC, asymptotic' represents the asymptotic approximate ergodic sum rate given the antenna positions optimized via `MA-statistical-MC'.

    \vspace{-6pt}
    \subsection{Algorithm Convergence}\label{subsec:perf-convergence-and-asymp-errors}

        \begin{figure}[t!]
            \centering
            {
                \begin{subfigure}[t]{0.235\textwidth}
                    \centering
                    \includegraphics[scale = 0.37]{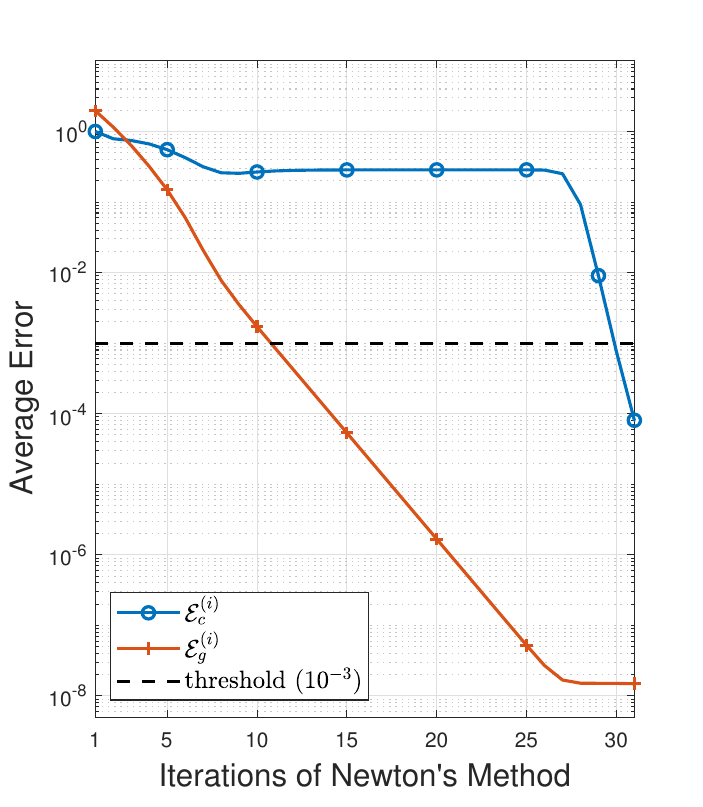}
                    \caption{Averaged $\mathcal{E}_{c}^{(i)}$ and $\mathcal{E}_{g}^{(i)}$ during iterations of Algorithm~\ref{alg:auxiliary-newton-method}. }
                    \label{subfig:asymp-error-evolve}
                \end{subfigure}
                \begin{subfigure}[t]{0.235\textwidth}
                    \centering
                    \includegraphics[scale = 0.37]{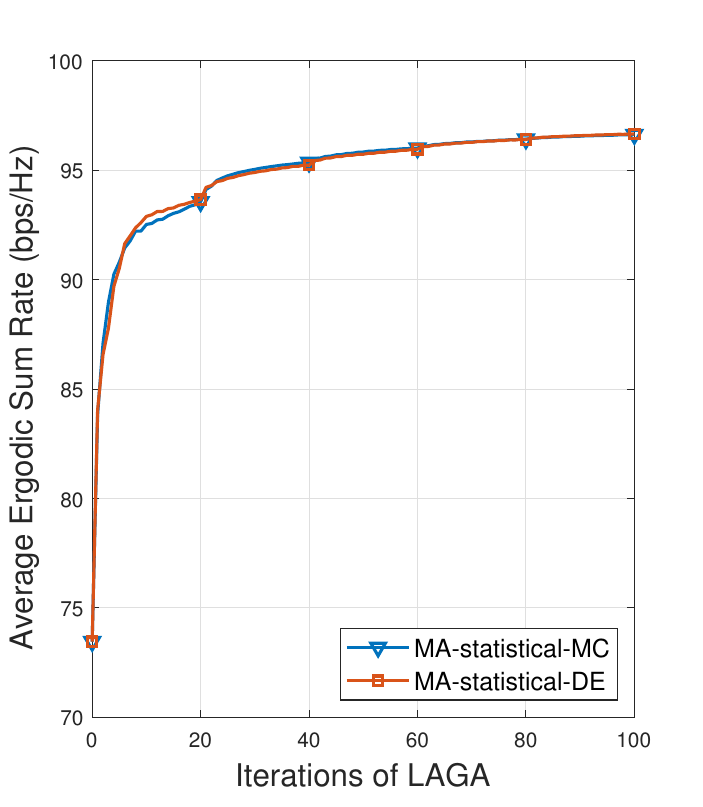}
                    \caption{Ergodic sum rate during iterations of Algorithm~\ref{alg:log-barrier-gradient-ascent-framework}. }
                    \label{subfig:rate-evolve}
                \end{subfigure}
            }
            \vspace{-2pt}
            \caption{Convergence of proposed algorithms. }
            \label{fig:visualiztaion}
            \vspace{-12pt}
        \end{figure}

        \begin{figure}[t]
            \centering
            \includegraphics[scale = 0.4]{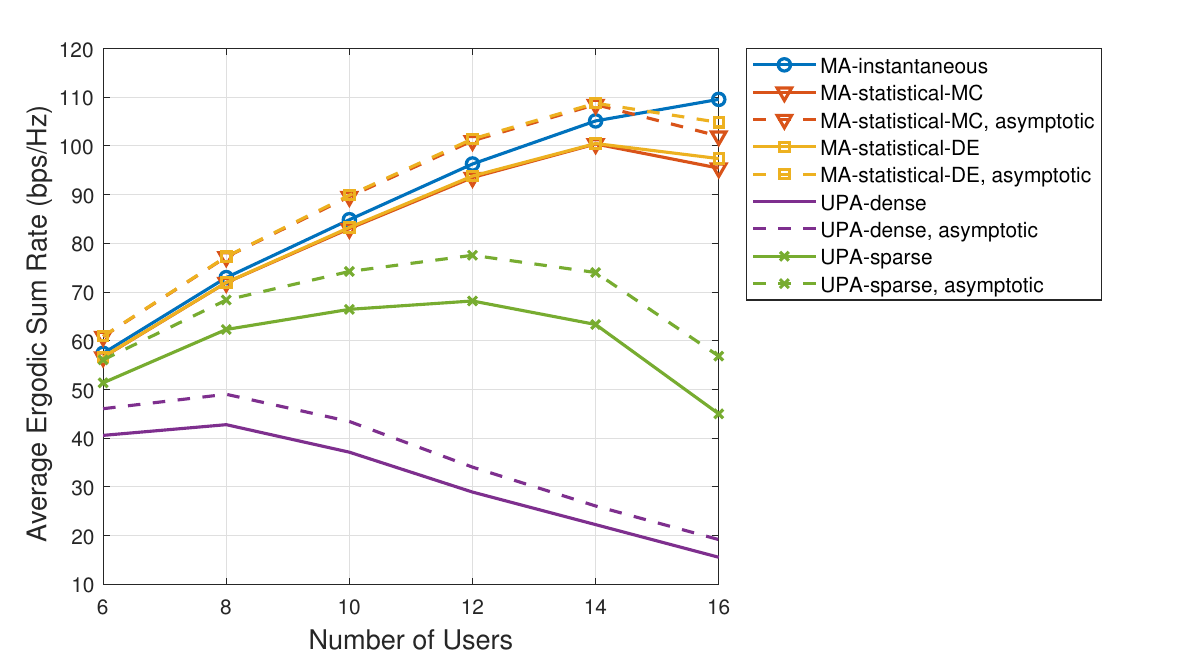}
            \caption{Ergodic sum rate versus the number of users which are uniformly distributed in the site of interest. }
            \label{fig:uniform-usernum-test}
        \end{figure}

        In Fig.~\ref{fig:visualiztaion}, the convergence of proposed algorithms is shown. 
        {\color{\updatecolor}For the $i$-th iteration of Algorithm~\ref{alg:auxiliary-newton-method}, we define averaged errors $\mathcal{E}_{c}^{(i)} = \frac{1}{K}\sum_{k = 1}^{K}{\|\boldsymbol{\epsilon}_{k}^{(i)} - \boldsymbol{\epsilon}_{k}^{(i - 1)}\|_{2}/\|\boldsymbol{\epsilon}_{k}^{(i)}\|_{2}}$ and $\mathcal{E}_{g}^{(i)} = \frac{1}{K}\sum_{k = 1}^{K}{\|\boldsymbol{\mathcal{G}}_{k}(\boldsymbol{\epsilon}_{k}^{(i)})\|_2}$, representing the average relative error of $\boldsymbol{\epsilon}_{k}^{(i)}$ compared with $\boldsymbol{\epsilon}_{k}^{(i - 1)}$ over all users and the average error between $\boldsymbol{\mathcal{G}}_{k}(\boldsymbol{\epsilon}_{k}^{(i)})$ and $\boldsymbol{0}_{K\times 1}$, respectively. 
        Given UPA-sparse and after being further averaged over $500$ realizations of users' reference locations, the averaged $\mathcal{E}_{c}^{(i)}$ and $\mathcal{E}_{g}^{(i)}$ are plotted in Fig.~\ref{subfig:asymp-error-evolve}. 
        Both of them keep decreasing during the iterations, which indicates that vector $\boldsymbol{\epsilon}_{k}^{(i)}$ converges to the unique solution of equation $\boldsymbol{\mathcal{G}}_{k}(\boldsymbol{\epsilon}) = \boldsymbol{0}_{K\times 1}$ for any $k$, i.e., $\boldsymbol{\epsilon}_{k}$. }
        Notably, Algorithm~\ref{alg:auxiliary-newton-method} terminates after about $30$ iterations, enabling efficient implementation in practice. 
        Meanwhile, the average ergodic sum rates during iterations of Algorithm~\ref{alg:log-barrier-gradient-ascent-framework} are shown in Fig.~\ref{subfig:rate-evolve} for both MA-statistical-MC and MA-statistical-DE. 
        It can be observed that both of them gradually increase and converge to almost the same level, which validates the effectiveness of the proposed LAGA algorithm in improving ergodic sum rate based on statistical CSI. 
        Additionally, $\mu$ is shrinked by $\rho$ for every $20$ iterations. 
        For each value of $\mu$, the gradient ascent converges to a locally optimal solution for problem~\eqref{prob:ma-optm-log-barrier-relaxed}. 
        After the value of $\mu$ is updated, the antenna positions will be adjusted for the next locally optimal solution, giving rise to a suddenly larger growth rate of the ergodic sum rate, as shown in Fig.~\ref{subfig:rate-evolve}.

        \begin{figure*}[t!]
            \centering
            {
                \begin{subfigure}[t]{0.32\textwidth}
                    \centering
                    \includegraphics[scale = 0.38]{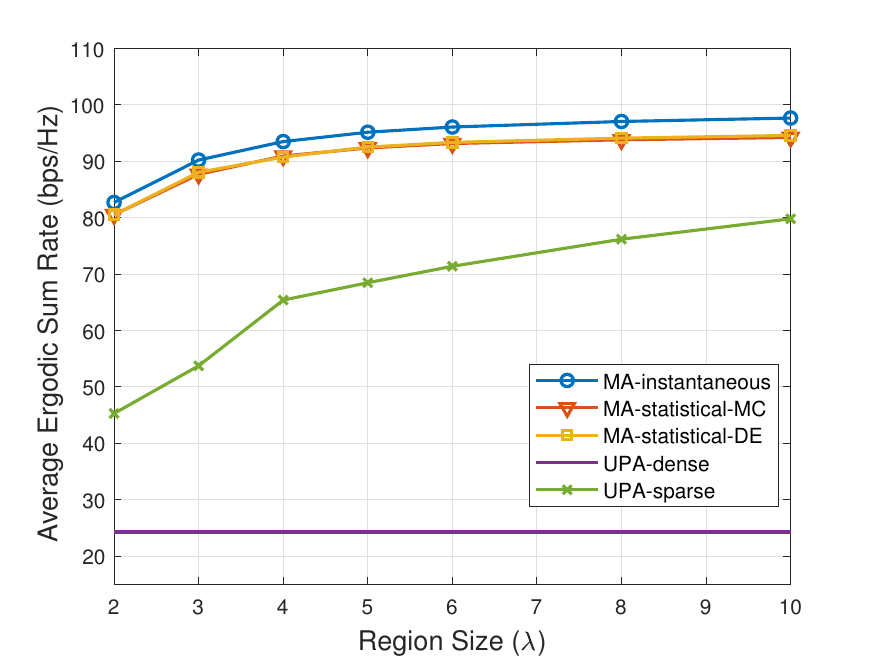}
                    \caption{Ergodic sum rate versus region size $S_0$. }
                    \label{subfig:uniform-aperture-test}
                \end{subfigure}
                \begin{subfigure}[t]{0.32\textwidth}
                    \centering
                    \includegraphics[scale = 0.38]{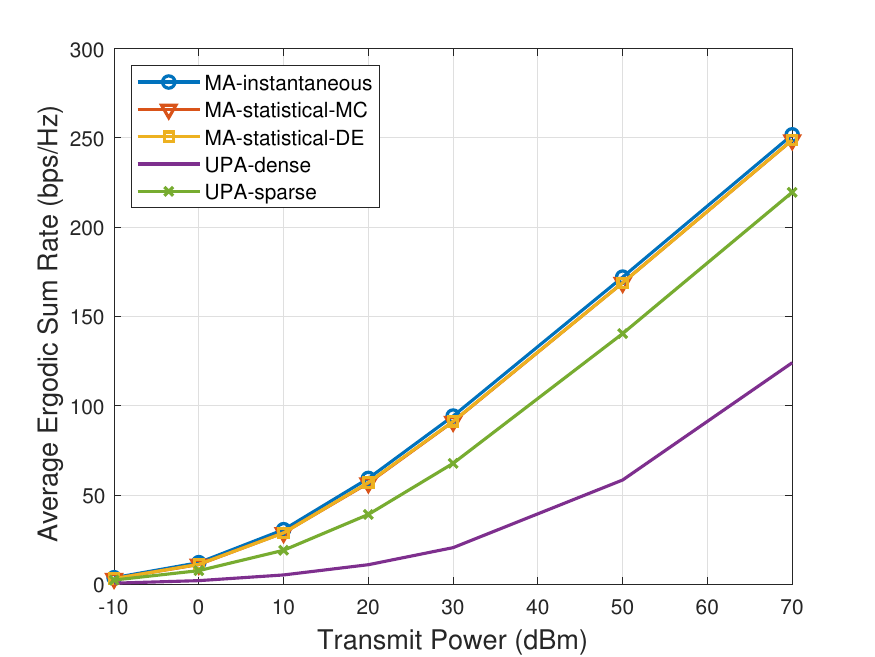}
                    \caption{Ergodic sum rate versus $P_{T}$. }
                    \label{subfig:uniform-txpwr-test}
                \end{subfigure}
                \begin{subfigure}[t]{0.32\textwidth}
                    \centering
                    \includegraphics[scale = 0.38]{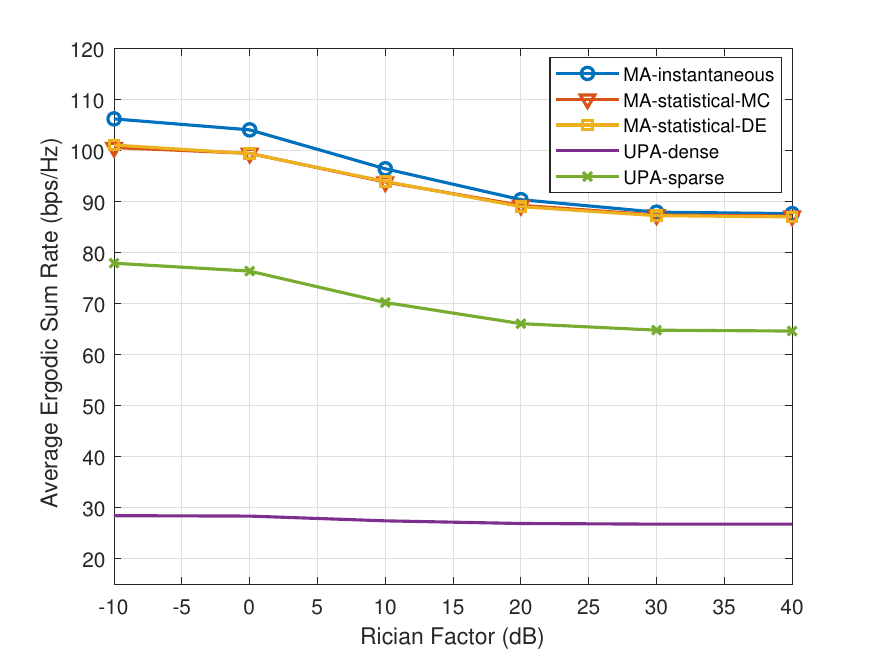}
                    \caption{Ergodic sum rate versus $\beta$. }
                    \label{subfig:uniform-rician-test}
                \end{subfigure}
            }
            \vspace{-2pt}
            \caption{Performance with users uniformly distributed in the site. }
            \label{fig:uniform-perf-test}
            \vspace{-6pt}
        \end{figure*}
        
        \begin{figure*}[t!]
            \centering
            {
                \begin{subfigure}[t]{0.32\textwidth}
                    \centering
                    \includegraphics[scale = 0.38]{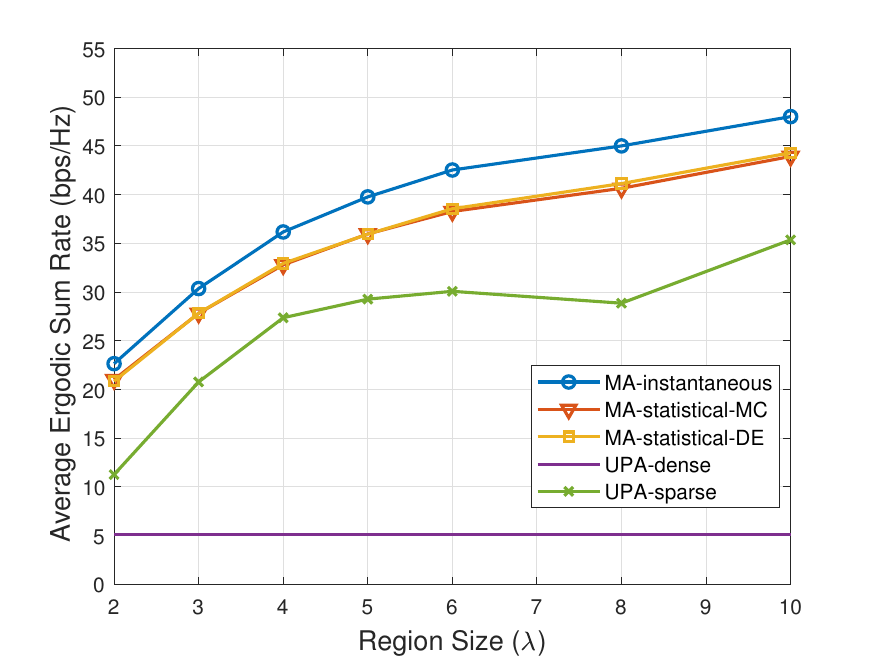}
                    \caption{Ergodic sum rate versus region size $S_0$. }
                    \label{subfig:hotspot-aperture-test}
                \end{subfigure}
                \begin{subfigure}[t]{0.32\textwidth}
                    \centering
                    \includegraphics[scale = 0.38]{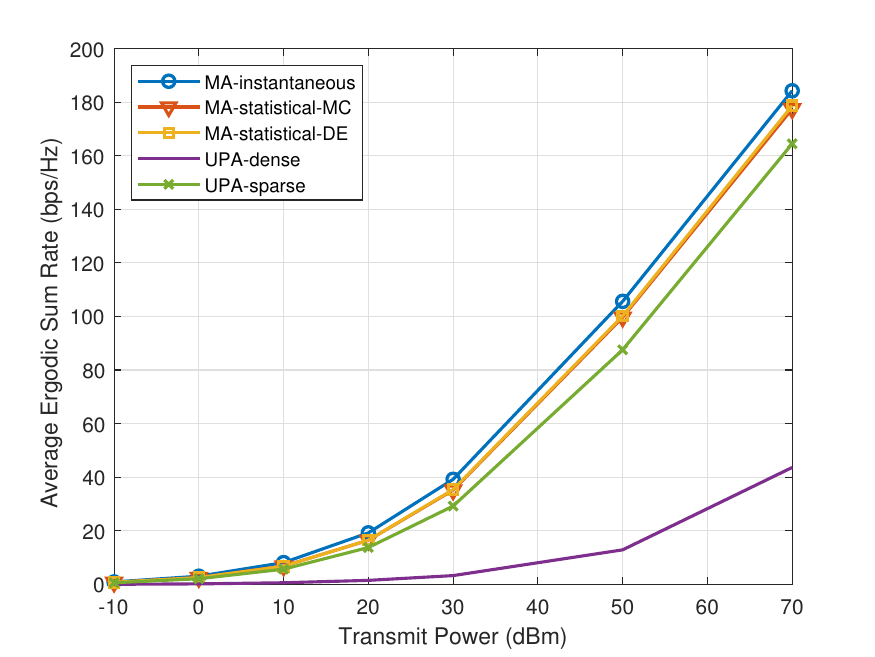}
                    \caption{Ergodic sum rate versus $P_{T}$. }
                    \label{subfig:hotspot-txpwr-test}
                \end{subfigure}
                \begin{subfigure}[t]{0.32\textwidth}
                    \centering
                    \includegraphics[scale = 0.38]{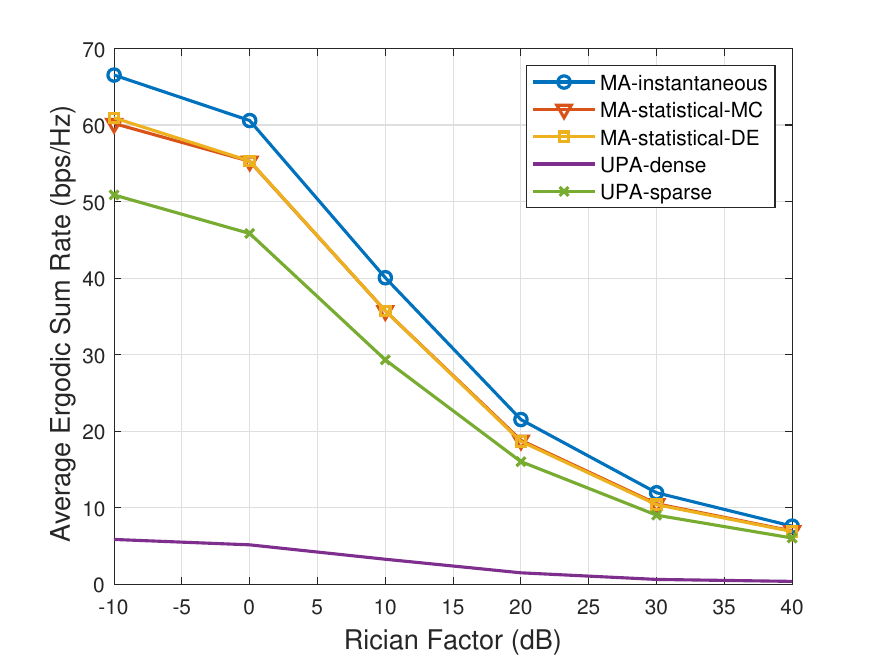}
                    \caption{Ergodic sum rate versus $\beta$. }
                    \label{subfig:hotspot-rician-test}
                \end{subfigure}
            }
            \caption{Performance with users clustered in three hotspots (the $46$-th, $188$-th, and $196$-th candidate locations). }
            \label{fig:hotspot-perf-test}
            \vspace{-12pt}
        \end{figure*}

    \vspace{-6pt}
    \subsection{Performance Evaluation with Uniformly Distributed Users}\label{subsec:perf-uniform-users}
        In this subsection, the performance improvements in ergodic sum rate achieved by the proposed algorithms are demenstrated with users randomly selected from all the candidate locations with the same probability, which can be interpreted as that the users are uniformly distributed within the site of interest. 
        
        The average ergodic sum rate is shown in Fig.~\ref{fig:uniform-usernum-test} with different number of users, where the approximate ergodic sum rates $R_{\text{ZF}}^{\infty}$ given the antenna positions obtained by the proposed algorithms as well as UPA-dense and UPA-sparse are also shown. 
        It can be observed that UPA-sparse outperforms UPA-dense significantly, which is resulted from the fact that users can be distinguished more easily for UPA-sparse due to its higher angular beam resolution compared with UPA-dense~\cite{ref:zlp-ma-near-field-statistical}. 
        On top of that, the average ergodic sum rates obtained by MA-statistical-MC and MA-statistical-DE are approximately the same, which are much higher than that of UPA-sparse and are even close to that of MA-instantaneous. 
        Such results validate the effectiveness of the proposed algorithms and indicate that designing the antenna positions based on statistical CSI may be sufficient to reap almost the same performance gain as that obtained by real-time MA optimization based on instantaneous CSI. 
        Besides, as the user number $K$ grows larger, the performance gains of the proposed algorithms over UPA-dense and UPA-sparse becomes more significant. 
        When $K$ is small, the channel correlations between users are low and the ergodic sum rate increases almost linearly with $K$ for all schemes. 
        As $K$ increases, the channel correlations between users become higher for FPA systems, hindering further improvements for the ergodic sum rate. 
        However, by designing antenna positions according to the instantaneous/statistical CSI, such correlations are reduced and thus higher performance gain can be achieved by MA systems. 
        For example, for $K = 14$ and $K = 16$, the ergodic sum rates are improved via adjusting antenna positions based on statistical CSI by $59.3\%$ and $120\%$ over UPA-sparse and both more than $300\%$ over UPA-dense, respectively.

        Next, the effects of MA's region size $S_{0}$, transmit power $P_{T}$, and Rician factor $\beta$ on the average ergodic sum rate are shown in Figs.~\ref{subfig:uniform-aperture-test},~\ref{subfig:uniform-txpwr-test}, and~\ref{subfig:uniform-rician-test}, respectively. 
        In all figures, the performances of MA-statistical-MC and MA-statistical-DE coincide with each other with almost negligible errors. 
        As the region size $S_{0}$ and transmit power $P_{T}$ grow larger, the performances of all schemes improve, where the ergodic sum rates obtained by two proposed algorithms are always close to that of MA-instantaneous and outperform UPA-sparse and UPA-dense at the same time. 
        From Fig.~\ref{subfig:uniform-aperture-test}, it can be observed that the ergodic sum rate performance gradually converges as $S_{0}$ increases and $S_{0} = 4\lambda$ is almost sufficient to capture most of the improvements by MAs. 
        Meanwhile, since the transmit power does not affect the channel correlations between users, the improvements of MA schemes against UPA-sparse is almost constant in Fig.~\ref{subfig:uniform-txpwr-test} for $P_{T}\ge 30$ dBm. 
        In contrast, the influence of the Rician factor $\beta$ on the performance as shown in Fig.~\ref{subfig:uniform-rician-test} is more interesting. 
        In general, the average ergodic sum rates are lower for a larger Rician factor, i.e., a more LoS-dominant environment. 
        Provided that the total channel power is fixed, most power is concentrated in the LoS paths given a large $\beta$, which results in difficulties in decorrelating users with highly-correlated LoS paths by ZF beamforming. 
        However, if the channel power is more equally assigned to all paths, e.g., $\beta$ is smaller, users can be easily distinguished by leveraging their uncorrelated NLoS paths even if their LoS paths are close to each other, leading to better performance.

    \vspace{-6pt}
    \subsection{Performance Evaluation with Clustered Users}\label{subsec:perf-clustered-users}
        To investigate the impact of user distribution on the system performance, the average ergodic sum rate is computed with users clustered in three hotspots at the $46$-th, $188$-th, and $196$-th candidate locations, which are shown in Fig.~\ref{subfig:hotspot-rays}. 
        Specifically, each user is randomly generated within one of these three candidate locations\footnote{{\color{\updatecolor}Since only $3$ locations are available, it can be verified that there are $\mathcal{J}(K) = (K + 1)(K/2 + 1)$ possible realizations of users' reference locations in total, ignoring the order of users. Thus, for small $K$, the average ergodic sum rates are evaluated with fewer than $100$ reference locations realizations. For example, only $\mathcal{J}(12) = 91$ reference locations are used for $K = 12$. }}. 

        The relationships between the average ergodic sum rate and the region size, transmit power, and Rician factor are shown in Figs.~\ref{subfig:hotspot-aperture-test},~\ref{subfig:hotspot-txpwr-test}, and~\ref{subfig:hotspot-rician-test}, respectively. 
        The first observation is that the performance improvements against UPA-sparse brought by MAs are reduced drastically compared to systems with uniform user distribution. 
        Such changes are caused by the clustered user distribution, which incurs very similar angular power spectrums for different users and thus high user channel correlations, especially for those in the same hotspot. 
        Consequently, it is more difficult to reduce channel correlation between different users, which limits further performance improvements by MAs. 
        {\color{\updatecolor}However, it can be noticed in Fig.~\ref{subfig:hotspot-aperture-test} that the performance of UPA-sparse decreases at $S_{0} = 8\lambda$. 
        This can be explained by the sidelobes caused by UPA-sparse~\cite{ref:zlp-ma-near-field-statistical} that happen to cover two of the hotspots, introducing strong interference for users. 
        Nevertheless, such interference can be mitigated by MAs using the proposed algorithms, as shown in Fig.~\ref{subfig:hotspot-aperture-test}, which also validates the advantages of the proposed schemes. }
        In addition, it is also worth noting that the ergodic sum rates for all schemes except UPA-dense decreases rapidly with $\beta$ in Fig.~\ref{subfig:hotspot-rician-test}. 
        This is resulted from the overwhelming dominance of the LoS paths as $\beta$ grows larger, which induces even higher correlations between users since there are only $3$ significantly different LoS paths for these $12$ users from the three hotspots. 
        
        For clearer illustration of the relationship between the ergodic sum rate and the user distribution, we further consider the case where some users are clustered in one hotspot while the rest of them are uniformly distributed within the site. 
        Specifically, define $\tau\in[0, 1]$ as the portion of users clustered around the $196$-th candidate location, which is also referred to as the cluster rate. 
        Thus, each user has a probability of $\tau$ to be placed around the $196$-th candidate location and a probability of $1 - \tau$ to be randomly placed at any one of the other candidate locations. 
        By adjusting $\tau$ from $0\%$ to $100\%$, the corresponding average ergodic sum rates are plotted in Fig.~\ref{fig:focused-clusterrate-test}. 
        As can be observed, the ergodic sum rates for all schemes keep decreasing as $\tau$ increases. 
        It is worth noting that the ergodic sum rate improvements by the MA-aided schemes over UPA-sparse are also shrinking as $\tau$ grows larger. 
        With larger portion of users clustered in the same hotspot, the paths and their power gains for different users gradually coincides with each other, inducing highly overlaped angular power spectrums for users. 
        Combined with the discussions above, we conclude that higher performance improvements can be obtained via designing MA positions for an environment with more diverse angular power spectrums for users.

        \begin{figure}[t]
            \centering
            \includegraphics[scale = 0.38]{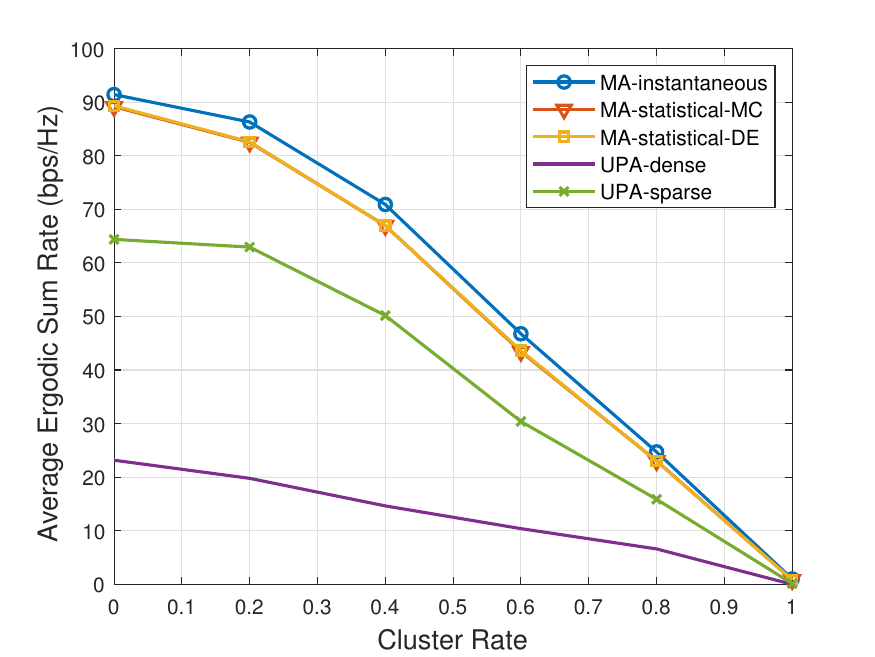}
            \caption{Ergodic sum rate versus the cluster rate of users in one hotspot (the $196$-th candidate location). }
            \label{fig:focused-clusterrate-test}
        \end{figure}

\section{Conclusion}\label{sec:conclusion}
    
    In this paper, we investigated antenna position optimization for MA-aided MU-MIMO systems using statistical CSI, aiming to maximize the ergodic sum rate of all users. 
    By optimizing antenna positions over a large timescale, the proposed approach is able to exploit the channel statistics associated with the distibutions of users and scatterers, which thus achieves significantly lower energy consumption and time overhead for antenna movement compared to real-time MA configurations based on instantaneous CSI. 
    To capture channel variations induced by users' local movements, we developed a field response based statistical channel model tailored for MA-aided multiuser communication systems. 
    Different from the conventional Rician fading channel model, our proposed model incorporates the effects of dominant and local scatterers within the site of interest and accounts for multiple channel paths between the BS and each user, offering enhanced generality. 
    Building on this model, a two-timescale optimization problem was formulated, where the precoding matrix and antenna positions were designed based on instantaneous and statistical CSI, respectively. 
    To solve this problem, ZF beamforming was adopted, and a log-barrier penalized gradient ascent algorithm was proposed to optimize antenna positions for ergodic sum rate maximization. 
    Additionally, we presented two methods to calculate the gradient of the ergodic sum rate w.r.t. antenna positions, one using Monte Carlo approximation method and the other leveraging asymptotic analysis. 
    Simulation results demonstrated the significant performance gains of the proposed scheme using statistical CSI, achieving ergodic sum rates comparable to those obtained through instantaneous CSI-based optimization. 
    Furthermore, the results highlighted that performance improvements are more pronounced when users' channels exhibit diverse power distributions in the angular domain, which facilitates the capability of MAs to reduce channel correlations between users. 
    Finally, while the proposed statistical channel model and optimization framework are designed for scenarios where users move within localized regions, they can be extended to accommodate random user movements across larger areas, enabling site-specific designs, which will be explored in future work.

\appendices

\section{Derivations of (24) and (25)}\label{appendix:inst-sum-rate-gradient}
    \subsection{Derivation of (24)}\label{appendix-subsec:inst-sum-rate-derivative-to-uvec}
        Due to the implicit expression of ${R}_{\text{ZF}}$, we derive its derivative w.r.t. ${c}_{k}$ by definition. 
        Specifically, consider $\delta\in\mathbb{R}$ and vector $\boldsymbol{c}' = [c_1, \ldots, c_{k - 1}, {c}_{k} + \delta, c_{k + 1}, \ldots, c_{K}]^T\in\mathbb{R}^{K\times 1}$ given $k$. 
        Then, we have 
        \begin{equation}\label{def:inst-sum-rate-derivative-to-uvec-lim}
            \frac{\mathrm{d}R_{\text{ZF}}}{\mathrm{d}{c}_{k}} = \lim_{\delta\to 0}\frac{\mathcal{R}(\boldsymbol{c}') - \mathcal{R}(\boldsymbol{c})}{\delta}. 
        \end{equation}
        To enable analysis on $\mathcal{R}(\boldsymbol{c}')$, the power allocation parameter $\nu$ plays an essential role. 
        Although the expression of $\nu$ is not explicitly given, its properties can be obtained by utilizing equation (13). 
        First, it is obvious that $\nu > 0$ and is the unique solution satisfying equation (13). 
        Given $\boldsymbol{c}$, define sets 
        \begin{subequations}\label{def:power-alloc-status-sets}
            \begin{align}
                \mathcal{I}_{+}(\boldsymbol{c}) & = \{i | 1\le i\le{K}, \nu - \sigma^2 c_{i} > 0\}, \label{def:power-alloc-positive-sets} \\
                \mathcal{I}_{0}(\boldsymbol{c}) & = \{i | 1\le i\le{K}, \nu - \sigma^2 c_{i} = 0\}, \label{def:power-alloc-zero-sets} \\
                \mathcal{I}_{-}(\boldsymbol{c}) & = \{i | 1\le i\le{K}, \nu - \sigma^2 c_{i} < 0\}, \label{def:power-alloc-negative-sets}
            \end{align}
        \end{subequations}
        and ${K}_{+} = |\mathcal{I}_{+}(\boldsymbol{c})|$, ${K}_{0} = |\mathcal{I}_{0}(\boldsymbol{c})|$, and ${K}_{-} = |\mathcal{I}_{-}(\boldsymbol{c})|$ as their cardinality, respectively. 
        Note that ${K}_{+} > 0$ while ${K}_{0}, {K}_{-}\ge 0$. 
        Then, equation (13) can be rewritten as 
        \begin{equation}\label{eq:nu-expression}
            \nu = \frac{P_{T}}{{K}_{+}} + \frac{\sigma^2}{{K}_{+}}\sum_{i\in\mathcal{I}_{+}(\boldsymbol{c})}{c_{i}}. 
        \end{equation}
        Now that $\boldsymbol{c}'$ is given, the resulting parameter $\nu'$ can be obtained by considering three cases separately, as follows. 

        Firstly, we consider the case of $k\in\mathcal{I}_{+}(\boldsymbol{c})$. 
        If $\delta > 0$, it is easy to verify that $\nu' >\nu$ and, as $\delta\to 0^{+}$, we have $\mathcal{I}_{+}(\boldsymbol{c}') = \mathcal{I}_{+}(\boldsymbol{c})\cup\mathcal{I}_{0}(\boldsymbol{c})$, $\mathcal{I}_{0}(\boldsymbol{c}') = \emptyset$, and $\mathcal{I}_{-}(\boldsymbol{c}') = \mathcal{I}_{-}(\boldsymbol{c})$. 
        Thus, we have ${K}_{+}' = |\mathcal{I}_{+}(\boldsymbol{c}')| = {K}_{+} + {K}_{0}$, ${K}_{0}' = |\mathcal{I}_{0}(\boldsymbol{c}')| = 0$, and ${K}_{-}' = |\mathcal{I}_{-}(\boldsymbol{c}')| = {K}_{-}$. 
        Under such conditions, it can be verified that $\nu'$ is given by 
        \begin{equation}
            \nu' = \nu + \sigma^2\delta / {K}_{+}', 
        \end{equation}
        where the last equation results from~\eqref{eq:nu-expression} and the fact that $\sigma^2 c_{i} = \nu$ for $i\in\mathcal{I}_{0}(\boldsymbol{c})$. 
        Then, the power allocation vector $\boldsymbol{p}' = [p_{1}', \ldots, p_{K}']^T\in\mathbb{R}_{+}^{K\times 1}$ is given by 
        \begin{equation}
            p_{i}' = \left\{
                \begin{array}{ll}
                    \nu'/c_{i} - \sigma^2 & i\in\mathcal{I}_{+}(\boldsymbol{c}')\setminus\{k\}, \\
                    \nu'/({c}_{k} + \delta) - \sigma^2 & i = k, \\
                    0 & i\in\mathcal{I}_{-}(\boldsymbol{c}'),
                \end{array}
            \right. 
        \end{equation}
        where $\mathcal{I}_{+}(\boldsymbol{c}')\setminus\{k\}$ denotes the set excluding $k$ from $\mathcal{I}_{+}(\boldsymbol{c}')$. 
        Therefore, the instantaneous sum rate given $\boldsymbol{c}'$ is expressed as 
        \begin{subequations}
            \allowdisplaybreaks
            \begin{align}
                & \mathcal{R}(\boldsymbol{c}') = \sum_{
                    i\in\mathcal{I}_{+}(\boldsymbol{u'})
                }{\log_{2}\left(
                    1 + \frac{p_{i}'}{\sigma^2}
                \right)} \\
                & ~~~~ = \log_{2}\left(
                        \frac{\nu'/\sigma^2}{{c}_{k} + \delta}
                    \right) + \sum_{
                        i\in\mathcal{I}_{+}(\boldsymbol{c}')\setminus\{k\}
                    }{\log_{2}\left(
                        \frac{\nu'}{\sigma^2 c_{i}}
                    \right)}. 
            \end{align}
        \end{subequations}
        With $\mathcal{R}(\boldsymbol{c}) = \sum_{i\in\mathcal{I}(\boldsymbol{c})}{\log_{2}(\nu/(\sigma^2 {c}_{i}))}$, we have 
        \begin{subequations}\label{eq:inst-sum-rate-difference}
            \allowdisplaybreaks
            \begin{align}
                \begin{split}
                    \Delta\mathcal{R} & = \begin{aligned}[t]
                        & \mathcal{R}(\boldsymbol{c}') - \mathcal{R}(\boldsymbol{c}) = \log_{2}\left(
                            \frac{\nu'}{\sigma^2({c}_{k} + \delta)}
                        \right) \\
                        & - \log_{2}\left(
                            \frac{\nu}{\sigma^2 {c}_{k}}
                        \right) + \sum_{
                            i\in\mathcal{I}_{+}(\boldsymbol{c}')\setminus\{k\}
                        }{\log_{2}\left(
                            \frac{\nu'}{\nu}
                        \right)}. 
                    \end{aligned}
                \end{split} \\
                \begin{split}
                    & = \log_{2}\left(
                        \frac{{c}_{k}}{{c}_{k} + \delta}
                    \right) + {K}_{+}'\log_{2}\left(\frac{\nu'}{\nu}\right). 
                \end{split}
            \end{align}
        \end{subequations}
        As such, the limit in~\eqref{def:inst-sum-rate-derivative-to-uvec-lim} can be evaluated from the right hand side, i.e., $\delta\to 0^{+}$, as follows:
        \begin{subequations}\label{eq:inst-sum-rate-derivative-to-uvec-lim-rhs}
            \allowdisplaybreaks
            \begin{align}
                \begin{split}
                    \lim_{\delta\to 0^{+}}\frac{\Delta\mathcal{R}}{\delta} & = \lim_{\delta\to 0^{+}}\frac{1}{\delta}\log_{2}\left(
                        \frac{{c}_{k}}{{c}_{k} + \delta}
                    \right) + \frac{{K}_{+}'}{\delta}{\log_{2}\left(
                        \frac{\nu'}{\nu}
                    \right)}
                \end{split} \\
                \begin{split}
                    & = \frac{1}{\ln{2}}\left(
                        -\frac{1}{{c}_{k}} + \frac{\sigma^2}{\nu}
                    \right) = \frac{-p_{k}}{\nu\ln{2}}, 
                \end{split}
            \end{align}
        \end{subequations}
        where the last equation holds because $p_{k} = \nu/{c}_{k} - \sigma^2 > 0$ by assumption. 
        On the other hand, if $\delta < 0$, we have $\mathcal{I}_{+}(\boldsymbol{c}') = \mathcal{I}_{+}(\boldsymbol{c})$, $\mathcal{I}_{0}(\boldsymbol{c}') = \mathcal{I}_{0}(\boldsymbol{c})$, and $\mathcal{I}_{-}(\boldsymbol{c}') = \mathcal{I}_{-}(\boldsymbol{c})$ as $\delta\to 0^{-}$. 
        Therefore, it can be verified similarly following the above process that $\nu' = \nu + \sigma^2\delta/{K}_{+}$ and the limit in~\eqref{def:inst-sum-rate-derivative-to-uvec-lim} can be evaluated from the left hand side, i.e., $\delta\to 0^{-}$, as follows:
        \begin{equation}\label{eq:inst-sum-rate-derivative-to-uvec-lim-lhs}
            \lim_{\delta\to 0^{-}}\frac{\Delta\mathcal{R}}{\delta} = \lim_{\delta\to 0^{-}}\frac{1}{\delta}\log_{2}\left(
                \frac{{c}_{k}}{{c}_{k} + \delta}
            \right) + \frac{{K}_{+}}{\delta}{\log_{2}\left(
                \frac{\nu'}{\nu}
            \right)}, 
        \end{equation}
        which yields the same result as~\eqref{eq:inst-sum-rate-derivative-to-uvec-lim-rhs}, i.e., $-p_{k}/(\nu\ln{2})$. 
        Combining~\eqref{eq:inst-sum-rate-derivative-to-uvec-lim-rhs} and~\eqref{eq:inst-sum-rate-derivative-to-uvec-lim-lhs}, we come to the conclusion that $\lim_{\delta\to 0}{\Delta\mathcal{R}/\delta} = -p_{k}/(\nu\ln{2})$ for $k\in\mathcal{I}_{+}(\boldsymbol{c})$. 

        For the second case, assume $k\in\mathcal{I}_{0}(\boldsymbol{c})$, indicating $\nu = \sigma^2 {c}_{k}$ and $p_{k} = 0$. 
        If $\delta > 0$, we have $\nu < \sigma^2({c}_{k} + \delta)$ and thus $\nu' = \nu$, $p_{k}' = p_{k} = 0$, and $p_{i}' = p_{i}, \forall i\neq{k}$, which, in turn, lead to $\mathcal{R}(\boldsymbol{c}') = \mathcal{R}(\boldsymbol{c})$ and $\lim_{\delta\to 0^{+}}{\Delta\mathcal{R}/\delta} = 0$. 
        Meanwhile, if $\delta < 0$, it can be verified that this case is identical to the case where $k\in\mathcal{I}_{+}(\boldsymbol{c})$ and $\delta > 0$. 
        Therefore, we have 
        \begin{equation}
            \lim_{\delta\to 0^{-}}{\frac{\Delta\mathcal{R}}{\delta}} = \frac{-p_{k}}{\nu\ln{2}} = 0. 
        \end{equation}
        Hence, the derivative $\lim_{\delta\to 0}{\Delta\mathcal{R}/\delta}$ is always $0$ for $k\in\mathcal{I}_{0}(\boldsymbol{c})$, which satisfies the result $-p_{k}/(\nu\ln{2})$ in the first case. 

        Finally, the case of $k\in\mathcal{I}_{-}(\boldsymbol{c})$ is considered. 
        Obviously, as $\delta\to 0$, we always have $\nu' = \nu < \sigma^2({c}_{k} + \delta)$ and thus $p_{k}' = p_{k} = 0$ and $p_{i}' = p_{i}$, $\forall i\neq{k}$. 
        Hence, the sum rate $\mathcal{R}(\boldsymbol{c}') = \mathcal{R}(\boldsymbol{c})$ remains unchanged and we have $\lim_{\delta\to 0}{\Delta\mathcal{R}/\delta} = 0 = -p_{k}/(\nu\ln{2})$, which also coincides with the first case. 

        Combining the three cases discussed above, the derivatives given in (24) are obtained.

    \vspace{-6pt}
    \subsection{Derivation of (25)}\label{appendix-subsec:uvec-derivative-to-position}
        Given vector $\boldsymbol{c} = \text{diag}((\boldsymbol{H}^H\boldsymbol{H})^{-1}) = \text{diag}(\boldsymbol{C}_{\boldsymbol{H}}^{-1})$, we have ${c}_{k} = [\boldsymbol{C}_{\boldsymbol{H}}^{-1}]_{kk}$ and 
        \begin{subequations}\label{eq:uvec-derivative-to-position-expanded}
            \begin{align}
                \frac{\mathrm{d}{{c}_{k}}}{\mathrm{d}{{v}_{n}}} & = \left[
                    \frac{\mathrm{d}{\boldsymbol{C}_{\boldsymbol{H}}^{-1}}}{\mathrm{d}{{v}_{n}}}
                \right]_{kk}
                    = \left[
                    - \boldsymbol{C}_{\boldsymbol{H}}^{-1}
                    \frac{\mathrm{d}{\boldsymbol{C}_{\boldsymbol{H}}}}{\mathrm{d}{{v}_{n}}}\boldsymbol{C}_{\boldsymbol{H}}^{-1}
                \right]_{kk} \\
                & = \left[
                    - \boldsymbol{C}_{\boldsymbol{H}}^{-1}\boldsymbol{\Psi}^{H}
                    \frac{\mathrm{d}{\boldsymbol{Q}\boldsymbol{Q}^{H}}}{\mathrm{d}{{v}_{n}}}
                    \boldsymbol{\Psi}\boldsymbol{C}_{\boldsymbol{H}}^{-1}
                \right]_{kk} \\
                & = -\boldsymbol{\xi}_{k}^{H}\frac{\mathrm{d}{\boldsymbol{Q}\boldsymbol{Q}^{H}}}{\mathrm{d}{{v}_{n}}}\boldsymbol{\xi}_{k}, 
            \end{align}
        \end{subequations}
        where $\boldsymbol{\xi}_{k}$ is the $k$-th column of matrix $\boldsymbol{\Psi}\boldsymbol{C}_{\boldsymbol{H}}^{-1}$ as defined in (25). 
        Note that for any $i, l$, we have $[\boldsymbol{Q}\boldsymbol{Q}^H]_{il} = \sum_{r = 1}^{{N}}{\exp(-j(\boldsymbol{\kappa}_{i} - \boldsymbol{\kappa}_{l})^T\boldsymbol{r}_{r})}$. 
        Taking the derivative w.r.t. ${v}_{n}$, the following result is obtained:  
        \begin{equation}\label{eq:QRmat-derivative-to-position}
            \left[
                \frac{\mathrm{d}{\boldsymbol{Q}\boldsymbol{Q}^{H}}}{\mathrm{d}{{v}_{n}}}
            \right]_{il} = -j({\kappa}_{i}^{v} - {\kappa}_{l}^{v})\exp\left(
                -j(\boldsymbol{\kappa}_{i} - \boldsymbol{\kappa}_{l})^T\boldsymbol{r}_{n}
            \right). 
        \end{equation}
        By defining matrix $\boldsymbol{\Lambda}^{v}$, it is easy to verify that the right-hand side of equation~\eqref{eq:QRmat-derivative-to-position} can be written as the element of matrix $-\text{diag}(\tilde{\boldsymbol{q}}_{n})\boldsymbol{\Lambda}^{v}\text{diag}(\tilde{\boldsymbol{q}}_{n}^{H})$ in the $i$-th row and $l$-th column, i.e., $[-\text{diag}(\tilde{\boldsymbol{q}}_{n})\boldsymbol{\Lambda}^{v}\text{diag}(\tilde{\boldsymbol{q}}_{n}^{H})]_{il}$, and thus we have $\frac{\mathrm{d}{\boldsymbol{Q}\boldsymbol{Q}^{H}}}{\mathrm{d}{{v}_{n}}} = -\text{diag}(\tilde{\boldsymbol{q}}_{n})\boldsymbol{\Lambda}^{v}\text{diag}(\tilde{\boldsymbol{q}}_{n}^{H})$. 
        Hence, the derivative of ${c}_{k}$ w.r.t. ${v}_{n}$ can be expressed as 
        \begin{equation}
            \frac{\mathrm{d}{{c}_{k}}}{\mathrm{d}{{v}_{n}}} = \boldsymbol{\xi}_{k}^{H}\text{diag}(\tilde{\boldsymbol{q}}_{n})\boldsymbol{\Lambda}^{v}\text{diag}(\tilde{\boldsymbol{q}}_{n}^{H})\boldsymbol{\xi}_{k}^{H}, ~\forall k, n, 
        \end{equation}
        which is equivalent to (25).

\section{Proof of Proposition 1}\label{appendix:asymp-ergo-sum-rate}
    Consider matrix $\boldsymbol{\Phi} \triangleq \frac{1}{K}\boldsymbol{H}^H\boldsymbol{H} + \gamma\boldsymbol{I}_{K}$ with $\gamma > 0$ and $\boldsymbol{\phi} = \text{diag}(\boldsymbol{\Phi}^{-1}) = [\phi_1, \ldots, \phi_{K}]^T\in\mathbb{R}^{K\times 1}$. 
    Since $\boldsymbol{C}_{\boldsymbol{H}} = \boldsymbol{H}^H\boldsymbol{H}$ is invertible, we have $\boldsymbol{C}_{\boldsymbol{H}}^{-1} = \frac{1}{K}\lim_{\gamma\to 0^{+}}{\boldsymbol{\Phi}^{-1}}$ and $~\boldsymbol{c} = \frac{1}{K}\lim_{\gamma\to 0^{+}}{\boldsymbol{\phi}}$. 
    The asymptotic forms for $\boldsymbol{\phi}$ given $\gamma > 0$ is first derived, denoted as $\boldsymbol{\phi}^{\infty}$, based on which $\boldsymbol{c}^{\infty}$ can be easily obtained by letting $\gamma\to 0^{+}$. 

    Let $\boldsymbol{H} = \left[\boldsymbol{h}_1, \boldsymbol{H}_{\sim{1}}\right]$, where $\boldsymbol{h}_1$ is the first column of matrix $\boldsymbol{H}$ and $\boldsymbol{H}_{\sim{1}} = [\boldsymbol{h}_2, \ldots, \boldsymbol{h}_{K}]\in\mathbb{C}^{{N}\times{(K - 1)}}$. 
    Then, matrix $\boldsymbol{\Phi}^{-1}$ can be rewritten as 
    \begin{equation}
        \boldsymbol{\Phi}^{-1} = \left[
            \begin{array}{cc}
                \frac{1}{K}\boldsymbol{h}_1^H\boldsymbol{h}_1 + \gamma & \frac{1}{K}\boldsymbol{h}_1^H\boldsymbol{H}_{\sim{1}} \\
                \frac{1}{K}\boldsymbol{H}_{\sim{1}}^H\boldsymbol{h}_1 & \frac{1}{K}\boldsymbol{H}_{\sim{1}}^H\boldsymbol{H}_{\sim{1}} + \gamma\boldsymbol{I}_{K - 1}
            \end{array}
        \right]^{-1}. 
    \end{equation}
    By applying the Schur components, the first element of vector $\boldsymbol{\phi}$ can be written as 
    \begin{equation}
            \phi_{1} = \bigg[
                \frac{1}{K}\boldsymbol{h}_1^H\boldsymbol{h}_1 + \gamma - \frac{1}{M^{2}}\boldsymbol{h}_1^H\boldsymbol{H}_{\sim{1}}\boldsymbol{\Phi}_{\sim{1}}^{-1}\boldsymbol{H}_{\sim{1}}^H\boldsymbol{h}_1
            \bigg]^{-1}, 
    \end{equation}
    where $\boldsymbol{\Phi}_{\sim{1}}$ is defined as $\boldsymbol{\Phi}_{\sim{1}} \triangleq \frac{1}{K}\boldsymbol{H}_{\sim{1}}^H\boldsymbol{H}_{\sim{1}} + \gamma\boldsymbol{I}_{K - 1}$. 
    Based on the Woodbury matrix identity~\cite{ref:woodbury-identity}, we have $\boldsymbol{I}_{K} - \frac{1}{K}\boldsymbol{H}_{\sim{1}}\boldsymbol{\Phi}_{\sim{1}}^{-1}\boldsymbol{H}_{\sim{1}}^H = \gamma(\gamma\boldsymbol{I}_{K} + \frac{1}{K}\boldsymbol{H}_{\sim{1}}\boldsymbol{H}_{\sim{1}}^H)^{-1}$ and thus  
    \begin{equation}\label{subeq:omegavec-reform-RH}
        \phi_{1} = \frac{1}{\gamma}\bigg[1 + \frac{1}{K}\boldsymbol{h}_1^H\Big(\gamma\boldsymbol{I}_{K} + \frac{1}{K}\boldsymbol{H}_{\sim{1}}\boldsymbol{H}_{\sim{1}}^H\Big)^{-1}\boldsymbol{h}_1\bigg]^{-1}. 
    \end{equation}
    It is noteworthy that if $\gamma = 0$,~\eqref{subeq:omegavec-reform-RH} would not hold true because the $N\times N$ matrix $\boldsymbol{H}_{\sim{1}}\boldsymbol{H}_{\sim{1}}^H$ has a rank of only $K - 1 < {N}$ and is thus singular, which manifests the benefit of considering $\boldsymbol{\phi}$ instead of $\boldsymbol{c}$ directly. 
    Then, all the elements of vector $\boldsymbol{\phi}$ can be expressed in a similar way. 
    Define $\boldsymbol{R} = \gamma\boldsymbol{I}_{{N}} + \frac{1}{K}\boldsymbol{H}\boldsymbol{H}^H$ and $\boldsymbol{R}_{\sim{k}} = \gamma\boldsymbol{I}_{{N}} + \frac{1}{K}\boldsymbol{H}_{\sim{k}}\boldsymbol{H}_{\sim{k}}^H$, $\forall k$, where matrix $\boldsymbol{H}_{\sim{k}}$ is given by 
    \begin{equation}
        \boldsymbol{H}_{\sim{k}} = \big[\boldsymbol{h}_{1}, \ldots, \boldsymbol{h}_{k - 1}, \boldsymbol{h}_{k + 1}, \ldots, \boldsymbol{h}_{K}\big]\in\mathbb{C}^{{N}\times{(K - 1)}}, 
    \end{equation}
    which is a sub-matrix of $\boldsymbol{H}$ by removing its $k$-th column. 
    Then, for any $k$, we have 
    \begin{equation}
        \phi_{k} = \frac{1}{\gamma}\bigg[1 + \frac{1}{K}\boldsymbol{h}_{k}^H\boldsymbol{R}_{\sim{k}}^{-1}\boldsymbol{h}_{k}\bigg]^{-1} = \frac{1}{\gamma}\bigg[1 + \frac{1}{K}\tilde{\boldsymbol{\psi}}_{k}^H\boldsymbol{V}_{k}\tilde{\boldsymbol{\psi}}_{k}\bigg]^{-1}, 
    \end{equation}
    where $\tilde{\boldsymbol{\psi}}_{k}\in\mathbb{C}^{L\times 1}$ is the $k$-th column of matrix $\boldsymbol{\Psi}$ and $\boldsymbol{V}_{k} = \boldsymbol{Q}\boldsymbol{R}_{\sim{k}}^{-1}\boldsymbol{Q}^H$. 
    Note that since random vectors $\tilde{\boldsymbol{\psi}}_{k}$, $\forall k$, are independent of each other, vector $\tilde{\boldsymbol{\psi}}_{k}$ is independent of the random matrix $\boldsymbol{V}_{k}$. 
    Besides, elements of $\tilde{\boldsymbol{\psi}}_{k}$ are independent of each other. 
    Thus, according to Theorem 3.4 in~\cite{ref:couillet-random-matrix-for-commun}, we have $\frac{1}{K}\tilde{\boldsymbol{\psi}}_{k}^H\boldsymbol{V}_{k}\tilde{\boldsymbol{\psi}}_{k} - \frac{1}{K}\text{tr}\left(\text{Diag}(\boldsymbol{b}_{k})\boldsymbol{V}_{k}\right)\to 0$ almost surely as $L_{k}^{t}\to\infty$, $\forall k$, with $\zeta_{k}$ being fixed for any $k$. 
    Thus, we have $\phi_{k} - \frac{1}{\gamma}\bigg[1 + \frac{1}{K}\text{tr}\left(\text{Diag}(\boldsymbol{b}_{k})\boldsymbol{V}_{k}\right)\bigg]^{-1} \to 0$ almost surely. 
    By noticing $\text{tr}(\text{Diag}(\boldsymbol{b}_{k})\boldsymbol{V}_{k}) = \text{tr}(\boldsymbol{G}_{k}\boldsymbol{R}_{\sim{k}}^{-1})$ with $\boldsymbol{G}_{k} = \boldsymbol{Q}^H\text{diag}(\boldsymbol{b}_{k})\boldsymbol{Q}$, this can be equivalently written as 
    \begin{equation}\label{eq:omegavec-DE-asymp}
        \phi_{k} - \frac{1}{\gamma}\left[
            1 + \frac{1}{K}\text{tr}\left(
                \boldsymbol{G}_{k}\boldsymbol{R}_{\sim{k}}^{-1}
            \right)
        \right]^{-1} \overset{\mathrm{a.s.}}{\longrightarrow} 0. 
    \end{equation}

    Next, the DE for random variable $e_{k} = \frac{1}{K}\text{tr}(\boldsymbol{G}_{k}\boldsymbol{R}_{\sim{k}}^{-1})$ is derived, which is denoted as $e_{k}^{\infty}$. 
    To this end, we further define $\varepsilon_{k,l} = \frac{1}{K}\text{tr}(\boldsymbol{G}_{l}\boldsymbol{R}_{\sim{k}}^{-1})$, $\forall k, l$, and $\varepsilon_{k,l}^{\infty}$ as their DEs, respectively, such that $\varepsilon_{k,l} - \varepsilon_{k,l}^{\infty}\to 0$ almost surely. 
    Given $\gamma$, define matrix 
    \begin{equation}\label{def:Upsilon-matrix-expression}
        \boldsymbol{\Upsilon}_{k} = \gamma\boldsymbol{I}_{{N}} + \frac{1}{K}\sum_{
            1\le l\le{K}, l\neq{k}
        }{
            \frac{
                \boldsymbol{G}_{l}
            }{
                1 + \varepsilon_{k,l}^{\infty}
            }
        }. 
    \end{equation}
    Then, the DE for $e_{k}$ is given by $e_{k}^{\infty} = \frac{1}{K}\text{tr}(\boldsymbol{G}_{k}\boldsymbol{\Upsilon}_{k}^{-1})$ and we have $e_{k} - e_{k}^{\infty}\to 0$ almost surely. 
    Specifically, it is easy to verify
    \begin{subequations}\label{eq:Bmat-DE-error}
        \allowdisplaybreaks
        \begin{align}
            & \boldsymbol{R}_{\sim{k}}^{-1} - \boldsymbol{\Upsilon}_{k}^{-1} = \boldsymbol{R}_{\sim{k}}^{-1}\left(
                \boldsymbol{\Upsilon}_{k} - \boldsymbol{R}_{\sim{k}}
            \right)\boldsymbol{\Upsilon}_{k}^{-1} \\
            & ~~~~ = \boldsymbol{T}_{k} - \boldsymbol{R}_{\sim{k}}^{-1}\left(
                \sum_{
                    1\le l\le{K}, l\neq{k}
                }{\frac{1}{K}\boldsymbol{h}_{l}\boldsymbol{h}_{l}^H}
            \right)\boldsymbol{\Upsilon}_{k}^{-1}, 
        \end{align}
    \end{subequations}
    where $\boldsymbol{T}_{k} = \boldsymbol{R}_{\sim{k}}^{-1}(\boldsymbol{\Upsilon}_{k} - \gamma\boldsymbol{I}_{{N}})\boldsymbol{\Upsilon}_{k}^{-1}$. 
    Define matrix $\boldsymbol{H}_{\sim{kl}}\in\mathbb{C}^{{{N}}\times{(K - 2)}}$ as a sub-matrix of $\boldsymbol{H}$ with the $k$-th and $i$-th columns removed and $\boldsymbol{R}_{\sim{kl}} = \gamma\boldsymbol{I}_{{N}} + \frac{1}{K}\boldsymbol{H}_{\sim{kl}}\boldsymbol{H}_{\sim{kl}}^H$, $\forall l\neq{k}$. 
    Then, we have $\boldsymbol{R}_{\sim{k}} = \boldsymbol{R}_{\sim{kl}} + \frac{1}{K}\boldsymbol{h}_{l}\boldsymbol{h}_{l}^H$ and equation~\eqref{eq:Bmat-DE-error} can be rewritten as equation~\eqref{eq:Bmat-DE-error-frac} shown at the bottom of the page, where the second equation is resulted from the Woodbury matrix identity~\cite{ref:woodbury-identity}. 
    \begin{figure*}[bp]
        \begin{equation}\label{eq:Bmat-DE-error-frac}
            \boldsymbol{R}_{\sim{k}}^{-1} - \boldsymbol{\Upsilon}_{k}^{-1} = \boldsymbol{T}_{k} - \sum_{
                1\le l\le{K}, l\neq{k}
            }{
                \left(
                    \boldsymbol{R}_{\sim{kl}} + \frac{1}{K}\boldsymbol{h}_{l}\boldsymbol{h}_{l}^H
                \right)^{-1}\frac{1}{K}\boldsymbol{h}_{l}\boldsymbol{h}_{l}^H\boldsymbol{\Upsilon}_{k}^{-1}
            } = \boldsymbol{T}_{k} - \sum_{
                1\le l\le{K}, l\neq{k}
            }\frac{
                \frac{1}{K}\boldsymbol{R}_{\sim{kl}}^{-1}\boldsymbol{h}_{l}\boldsymbol{h}_{l}^H\boldsymbol{\Upsilon}_{k}^{-1}
            }{
                1 + \frac{1}{K}\boldsymbol{h}_{l}^H\boldsymbol{R}_{\sim{kl}}^{-1}\boldsymbol{h}_{l}
            }. 
        \end{equation}
    \end{figure*}
    Then, by applying an arbitrary deterministic matrix $\boldsymbol{G}\in\mathbb{C}^{{N}\times{{N}}}$ to the left-hand side of equation~\eqref{eq:Bmat-DE-error-frac}, we have 
    \begin{equation}\label{eq:trace-asymp-error}
        \begin{aligned}
            & \frac{1}{K}\text{tr}\left(
                \boldsymbol{G}(\boldsymbol{R}_{\sim{k}}^{-1} - \boldsymbol{\Upsilon}_{k}^{-1})
            \right) = \frac{1}{K}\text{tr}(\boldsymbol{G}\boldsymbol{T}_{k}) \\
            & ~~~~~~~~~~~~~~ - \frac{1}{K}\sum_{
                1\le l\le{K}, l\neq{k}
            }{
                \frac{
                    \frac{1}{K}\boldsymbol{h}_{l}^H\boldsymbol{\Upsilon}_{k}^{-1}\boldsymbol{G}\boldsymbol{R}_{\sim{kl}}^{-1}\boldsymbol{h}_{l}
                }{
                    1 + \frac{1}{K}\boldsymbol{h}_{l}^H\boldsymbol{R}_{\sim{kl}}^{-1}\boldsymbol{h}_{l}
                }
            }, 
        \end{aligned}
    \end{equation}
    with $\boldsymbol{G} = \boldsymbol{G}_{k}$ for $e_{k} - e_{k}^{\infty}$. 
    Substituting the definition of $\boldsymbol{\Upsilon}_{k}$ and $\boldsymbol{T}_{k}$ into equation~\eqref{eq:trace-asymp-error}, the first term on the right-hand side of~\eqref{eq:trace-asymp-error} can be equivalently written as 
    \begin{equation}\label{eq:trace-asymp-expression}
        \frac{1}{K}\text{tr}(\boldsymbol{G}\boldsymbol{T}_{k}) = \frac{1}{K}\sum_{
                1\le l\le{K}, l\neq{k}
        }{
            \frac{
                \mathrm{tr}(\boldsymbol{G}\boldsymbol{R}_{\sim_{k}}^{-1}\boldsymbol{G}_{l}\boldsymbol{\Upsilon}_{k}^{-1})
            }{
                1 + \varepsilon_{k, l}^{\infty}
            }
        }. 
    \end{equation}
    Note that $\boldsymbol{h}_{l}$ is independent of matrices $\boldsymbol{\Upsilon}_{k}^{-1}\boldsymbol{G}\boldsymbol{R}_{\sim{kl}}^{-1}$ and $\boldsymbol{R}_{\sim{kl}}^{-1}$, $\forall l\neq{k}$. 
    Thus, by employing Theorem $3.4$ and $3.9$ in~\cite{ref:couillet-random-matrix-for-commun}, the asymptotic forms for both the denominator and numerator of each term in the summation in~\eqref{eq:trace-asymp-error} can be obtained as $\frac{1}{K}\text{tr}(\boldsymbol{G}_{l}\boldsymbol{\Upsilon}_{k}^{-1}\boldsymbol{G}\boldsymbol{R}_{\sim{k}}^{-1})$ and $1 + \frac{1}{K}\text{tr}(\boldsymbol{G}_{l}\boldsymbol{R}_{\sim{k}}^{-1})$, respectively, which coincide with those in equation~\eqref{eq:trace-asymp-expression}. 
    It can be verified following Section $6.2$ of~\cite{ref:couillet-random-matrix-for-commun} that for any $\boldsymbol{G}$, we have $\frac{1}{K}\text{tr}\left(\boldsymbol{G}\left(\boldsymbol{R}_{\sim{k}}^{-1} - \boldsymbol{\Upsilon}_{k}^{-1}\right)\right)\overset{\text{a.s.}}{\longrightarrow} 0$. 
    With $\boldsymbol{G} = \boldsymbol{G}_{k}$, we have $e_{k} - e_{k}^{\infty}\to 0$ almost surely. 
    Meanwhile, with $\boldsymbol{G} = \boldsymbol{G}_{l}$, $l\neq{k}$, we find $\varepsilon_{k, l} - \frac{1}{K}\text{tr}(\boldsymbol{G}_{l}\boldsymbol{\Upsilon}_{k}^{-1})\to 0$ almost surely, which forms a system of equations for $\varepsilon_{k, l}^{\infty}$, $\forall l\neq{k}$, given by 
    \begin{equation}\label{def:varepsilon-ml-fixed-point-eq}
        \varepsilon_{k, l}^{\infty} = \frac{1}{K}\text{tr}(\boldsymbol{G}_{l}\boldsymbol{\Upsilon}_{k}^{-1}), ~\forall k, l. 
    \end{equation}
    As proved in Section $6.2$ of~\cite{ref:couillet-random-matrix-for-commun}, the solutions to the fixed-point equations in~\eqref{def:varepsilon-ml-fixed-point-eq} are unique, which can be solved by the Newton's method. 
    As such, matrix $\boldsymbol{\Upsilon}_{k}$ as well as $e_{k}^{\infty} = \frac{1}{K}\text{tr}(\boldsymbol{G}_{k}\boldsymbol{\Upsilon}_{k}^{-1})$, $\forall k$, can be obtained given the solved $\varepsilon_{k, l}^{\infty}$, $\forall l\neq{k}$, and $\boldsymbol{\phi}^{\infty}$ can be solved as $\phi_{k}^{\infty} = 1/(\gamma + \gamma{e}_{k}^{\infty})$. 

    Finally, ${c}_{k}^{\infty}$ can be obtained by approaching the limit $\gamma\to 0^{+}$. 
    Note that the fixed-point equations in~\eqref{def:varepsilon-ml-fixed-point-eq} can be reformulated as follows:
    \begin{equation}
        \frac{1}{K}\frac{1}{\gamma + \gamma\varepsilon_{k, l}^{\infty}} = \left[
            \gamma{K} + \gamma\text{tr}\left(
                \boldsymbol{G}_{l}\boldsymbol{\Upsilon}_{k}^{-1}
            \right)
        \right]^{-1}, ~\forall k, l. 
    \end{equation}
    By taking the limit $\gamma\to 0^{+}$ for both sides and defining the limit of the left side as $\epsilon_{k, l}$, it can be verified that we have $\lim_{\gamma\to 0^{+}}\gamma\boldsymbol{\Upsilon}_{k}^{-1} = \boldsymbol{Y}_{k}^{-1}$ and 
    \begin{equation}
        \epsilon_{k, l} = \left[
            \text{tr}\left(
                \boldsymbol{G}_{l}\boldsymbol{Y}_{k}^{-1}
            \right)
        \right]^{-1}, ~\forall k, l. 
    \end{equation}
    Thus, the system of equations in (30) is obtained. 
    Hence, ${c}_{k}^{\infty}$ is given by 
    \begin{equation}
        c_{k}^{\infty} = \frac{1}{K}\lim_{\gamma\to 0^{+}}\phi_{k}^{\infty} = \left[
            \text{tr}\left(
                \boldsymbol{G}_{k}\boldsymbol{Y}_{k}^{-1}
            \right)
        \right]^{-1}, ~\forall k, 
    \end{equation}
    which completes the proof of Proposition 1.

\section{Derivations of~\eqref{def:asymp-newton-update-details} and~\eqref{def:asymp-cvec-gradient-to-position}}
\label{appendix:asymp-derivative-expression}
    \subsection{Derivation of~\eqref{def:asymp-newton-update-details}}
    \label{appendix-subsec:asymp-newton-jacobian}
        First, we prove equation~\eqref{subdef:asymp-newton-fval}. 
        Given channel autocorrelation matrices $\boldsymbol{G}_{k} = \boldsymbol{Q}^{H}\mathrm{Diag}(\boldsymbol{b}_{k})\boldsymbol{Q}$, $\forall k$, define matrices $\boldsymbol{\mathcal{Y}}_{k}(\boldsymbol{\epsilon}) = \boldsymbol{I}_{N} + \sum_{i\neq k}{\epsilon_{i}\boldsymbol{G}_{i}}\in\mathbb{C}^{N\times N}$, $\forall k$. 
        It can be verified that 
        \begin{subequations}
            \begin{align}
                \boldsymbol{\mathcal{Y}}_{k}(\boldsymbol{\epsilon}) & = \boldsymbol{I}_{N} + \boldsymbol{Q}^{H}\sum_{i\neq k}{\mathrm{Diag}(\epsilon_{i}\boldsymbol{b}_{i})}\boldsymbol{Q} \\
                & = \boldsymbol{I}_{N} + \boldsymbol{Q}^{H}\mathrm{Diag}(\boldsymbol{B}_{k}^{T}\boldsymbol{\epsilon})\boldsymbol{Q}.
            \end{align}
        \end{subequations}
        Then, functions $g_{k, l}(\boldsymbol{\epsilon})$ can be equivalently rewritten as 
        \begin{equation}
            g_{k, l}(\boldsymbol{\epsilon}) = \mathrm{tr}\left(
                \epsilon_{l}\boldsymbol{G}_{l}\boldsymbol{\mathcal{Y}}_{k}(\boldsymbol{\epsilon})^{-1}
            \right) - 1 = \epsilon_{l}\boldsymbol{b}_{l}^{T}\mathrm{diag}(\boldsymbol{D}_{k}) - 1, 
        \end{equation}
        where $\boldsymbol{D}_{k} = \boldsymbol{Q}\boldsymbol{\mathcal{Y}}_{k}(\boldsymbol{\epsilon})^{-1}\boldsymbol{Q}^{H}$ as defined in equation~\eqref{subdef:asymp-newton-Dmat}. 
        Therefore, $\boldsymbol{\mathcal{G}}_{k}(\boldsymbol{\epsilon})$ is given by 
        \begin{equation}
            \boldsymbol{\mathcal{G}}_{k}(\boldsymbol{\epsilon}) = \mathrm{Diag}(\boldsymbol{\epsilon})\boldsymbol{B}^{T}\mathrm{diag}(\boldsymbol{D}_{k}) - \boldsymbol{1}_{K}, 
        \end{equation}
        which verifies equation~\eqref{subdef:asymp-newton-fval}. 

        Next, the Jacobian matrix $\boldsymbol{J}_{\boldsymbol{\mathcal{G}}_{k}}(\boldsymbol{\epsilon})$ is derived. 
        To begin with, $\boldsymbol{J}_{\boldsymbol{\mathcal{G}}_{k}}(\boldsymbol{\epsilon})$ is defined as 
        \begin{equation}
            \boldsymbol{J}_{\boldsymbol{\mathcal{G}}_{k}}(\boldsymbol{\epsilon}) = \left[
                \begin{array}{ccc}
                    \frac{\partial{g_{k, 1}(\boldsymbol{\epsilon})}}{\partial\epsilon_{1}} & \ldots & \frac{\partial{g_{k, 1}(\boldsymbol{\epsilon})}}{\partial\epsilon_{K}} \\
                    \vdots & \ddots & \vdots \\
                    \frac{\partial{g_{k, K}(\boldsymbol{\epsilon})}}{\partial\epsilon_{1}} & \ldots & \frac{\partial{g_{k, K}(\boldsymbol{\epsilon})}}{\partial\epsilon_{K}}
                \end{array}
            \right], 
        \end{equation}
        where $\frac{\partial{g_{k, l}(\boldsymbol{\epsilon})}}{\partial\epsilon_{i}}$ denotes the partical derivative of function $g_{k, l}(\boldsymbol{\epsilon})$ w.r.t. $\epsilon_{i}$. 
        Specifically, we have 
        \begin{equation}\label{def:newton-func-partial-deriv-to-epsilon}
            \frac{\partial{g_{k, l}(\boldsymbol{\epsilon})}}{\partial\epsilon_{i}} = \mathrm{tr}\left(
                \boldsymbol{G}_{l}\boldsymbol{\mathcal{Y}}_{k}(\boldsymbol{\epsilon})^{-1}
            \right)\frac{
                \partial\epsilon_{l}
            }{\partial\epsilon_{i}} + \epsilon_{l}\mathrm{tr}\left(
                \boldsymbol{G}_{l}\frac{
                    \partial\boldsymbol{\mathcal{Y}}_{k}(\boldsymbol{\epsilon})^{-1}
                }{\partial\epsilon_{i}}
            \right). 
        \end{equation}
        For convenience, denote the first and second terms in~\eqref{def:newton-func-partial-deriv-to-epsilon} as $\mathcal{P}_{kl, i}^{(1)}$ and $\mathcal{P}_{kl, i}^{(2)}$, respectively. 

        Consider first $i = k$. 
        Obviously, $\mathcal{P}_{kl, i}^{(2)}$ is reduced to $0$ since $\boldsymbol{\mathcal{Y}}_{k}(\boldsymbol{\epsilon})$ is independent of $\epsilon_{k}$. 
        Besides, $\mathcal{P}_{kl, i}^{(1)}$ is not zero if and only if $l = i = k$. 
        Thus, we have 
        \begin{subequations}
            \begin{align}
                \frac{\partial{g_{k, l}(\boldsymbol{\epsilon})}}{\partial\epsilon_{k}} & = \mathbbm{1}(l - k)\mathrm{tr}\left(
                    \boldsymbol{G}_{l}\boldsymbol{\mathcal{Y}}_{k}(\boldsymbol{\epsilon})^{-1}
                \right) \\
                & = \mathbbm{1}(l - k)\boldsymbol{b}_{l}^{T}\mathrm{diag}(\boldsymbol{D}_{k}), ~\forall l, 
            \end{align}
        \end{subequations}
        where $\mathbbm{1}(a)$ is the indicator function that equals to $1$ if and only if $a = 0$. 
        Otherwise, it equals to $0$. 

        Next, consider $i\neq k$. 
        Similarly, $\mathcal{P}_{kl, i}^{(1)}$ can be represented by $\mathbbm{1}(l - i)\boldsymbol{b}_{l}^{T}\mathrm{diag}(\boldsymbol{D}_{k})$. 
        Meanwhile, the derivative of matrix $\boldsymbol{\mathcal{Y}}_{k}(\boldsymbol{\epsilon})$ w.r.t. $\epsilon_{i}$ is given by 
        \begin{subequations}\label{eq:Ymat-derivative-to-epsilon}
            \begin{align}
                \frac{
                    \partial\boldsymbol{\mathcal{Y}}_{k}(\boldsymbol{\epsilon})^{-1}
                }{\partial\epsilon_{i}} & = -\boldsymbol{\mathcal{Y}}_{k}(\boldsymbol{\epsilon})^{-1}\frac{
                    \partial\boldsymbol{\mathcal{Y}}_{k}(\boldsymbol{\epsilon})
                }{\partial\epsilon_{i}}\boldsymbol{\mathcal{Y}}_{k}(\boldsymbol{\epsilon})^{-1} \\
                & = -\boldsymbol{\mathcal{Y}}_{k}(\boldsymbol{\epsilon})^{-1}\boldsymbol{G}_{i}\boldsymbol{\mathcal{Y}}_{k}(\boldsymbol{\epsilon})^{-1}. 
            \end{align}
        \end{subequations}
        As such, $\mathcal{P}_{kl, i}^{(2)}$ can be written as 
        \begin{subequations}\label{eq:newton-derivative-P2}
            \begin{align}
                \mathcal{P}_{kl, i}^{(2)} & = - \epsilon_{l}\mathrm{tr}\left(
                    \boldsymbol{G}_{l}\boldsymbol{\mathcal{Y}}_{k}(\boldsymbol{\epsilon})^{-1}\boldsymbol{G}_{i}\boldsymbol{\mathcal{Y}}_{k}(\boldsymbol{\epsilon})^{-1}
                \right) \\
                & = - \epsilon_{l}\mathrm{tr}\left(
                    \mathrm{Diag}(\boldsymbol{b}_{l})\boldsymbol{D}_{k}\mathrm{Diag}(\boldsymbol{b}_{i})\boldsymbol{D}_{k}
                \right) \\
                & = - \epsilon_{l}\boldsymbol{b}_{i}^{T}\left(
                    \boldsymbol{D}_{k}\odot\boldsymbol{D}_{k}^{T}
                \right)\boldsymbol{b}_{l}. 
            \end{align}
        \end{subequations}

        By defining $K\times 1$ vectors $\boldsymbol{\mathcal{P}}_{kl}^{(1)} = [\mathcal{P}_{kl, 1}^{(1)}, \ldots, \mathcal{P}_{kl, K}^{(1)}]^{T}$ and $\boldsymbol{\mathcal{P}}_{kl}^{(2)} = [\mathcal{P}_{kl, 1}^{(2)}, \ldots, \mathcal{P}_{kl, K}^{(2)}]^{T}$, we have $\partial{g_{k,l}(\boldsymbol{\epsilon})}/\partial{\boldsymbol{\epsilon}} = \boldsymbol{\mathcal{P}}_{kl}^{(1)} + \boldsymbol{\mathcal{P}}_{kl}^{(2)}$. 
        Then, based on the two cases discussed above, it can be verified that $\boldsymbol{\mathcal{P}}_{kl}^{(1)}$ and $\boldsymbol{\mathcal{P}}_{kl}^{(2)}$ can be equivalently written as 
        \begin{equation}
            \boldsymbol{\mathcal{P}}_{kl}^{(1)} = \boldsymbol{B}^{T}\mathrm{diag}(\boldsymbol{D}_{k})\odot\boldsymbol{e}_{l}, ~\boldsymbol{\mathcal{P}}_{kl}^{(2)} = -\epsilon_{l}\cdot\boldsymbol{\chi}_{l}^{k}, 
        \end{equation}
        where $\boldsymbol{e}_{i}\in\mathbb{R}^{K\times 1}$ is the vector that its $i$-th element equal to $1$ while all other elements is $0$, while $\boldsymbol{\chi}_{l}^{k} = \boldsymbol{B}_{k}^{T}(\boldsymbol{D}_{k}\odot\boldsymbol{D}_{k}^{T})\boldsymbol{b}_{l}\in\mathbb{R}^{K\times 1}$. 
        Note that the identity matrix of size $K\times K$ can be written as $\boldsymbol{I}_{K} = [\boldsymbol{e}_{1}, \ldots, \boldsymbol{e}_{K}]$. 
        Thus, we have 
        \begin{subequations}
            \begin{align}
                & \boldsymbol{J}_{\boldsymbol{\mathcal{G}}_{k}}(\boldsymbol{\epsilon}) = \left[
                    \frac{
                        \partial{g_{k,1}(\boldsymbol{\epsilon})}
                    }{
                        \partial{\boldsymbol{\epsilon}}
                    }, \ldots, \frac{
                        \partial{g_{k,K}(\boldsymbol{\epsilon})}
                    }{
                        \partial{\boldsymbol{\epsilon}}
                    }
                \right]^{T} \\
                & ~~~~ = \left[
                    \boldsymbol{\mathcal{P}}_{k1}^{(1)}, \ldots, \boldsymbol{\mathcal{P}}_{kK}^{(1)}
                \right]^{T} + \left[
                    \boldsymbol{\mathcal{P}}_{k1}^{(2)}, \ldots, \boldsymbol{\mathcal{P}}_{kK}^{(2)}
                \right]^{T} \\
                & ~~~~ = \mathrm{Diag}\left(
                    \boldsymbol{B}^{T}\mathrm{diag}(\boldsymbol{D}_{k})
                \right) - \mathrm{Diag}(\boldsymbol{\epsilon})\boldsymbol{X}_{k}, 
            \end{align}
        \end{subequations}
        where $\boldsymbol{X}_{k} = [\boldsymbol{\chi}_{1}^{k}, \ldots, \boldsymbol{\chi}_{K}^{k}]^{T} = \boldsymbol{B}^{T}(\boldsymbol{D}_{k}\odot\boldsymbol{D}_{k}^{T})\boldsymbol{B}_{k}$ as defined in~\eqref{subdef:asymp-newton-Xmat}, which completes the derivation for~\eqref{def:asymp-newton-update-details}.

    \subsection{Derivation of~\eqref{def:asymp-cvec-gradient-to-position}}
    \label{appendix-subsec:asymp-gradients-proof}
        According to the definition of $c_{k}^{\infty}$ in\eqref{def:asymp-cvec-gradient-to-position}, we have 
        \begin{equation}\label{eq:asymp-cvec-derivative-to-position}
            \frac{
                \mathrm{d}{c_{k}^{\infty}}
            }{\mathrm{d}{\boldsymbol{v}}} = - \mathrm{tr}\left(
                \boldsymbol{G}_{k}\boldsymbol{Y}_{k}^{-1}
            \right)^{-2}\frac{\mathrm{d}}{\mathrm{d}{\boldsymbol{v}}}{
                \mathrm{tr}\left(
                    \boldsymbol{G}_{k}\boldsymbol{Y}_{k}^{-1}
                \right)
            }, 
        \end{equation}
        where $\boldsymbol{Y}_{k} = \boldsymbol{\mathcal{Y}}_{k}(\boldsymbol{\epsilon}_{k})$. 
        To derive the gradients given in~\eqref{def:asymp-cvec-gradient-to-position}, we first consider 
        \begin{equation}
            \mu_{kl, n}^{v} = \frac{\partial{
                \mathrm{tr}\left(\boldsymbol{G}_{k}\boldsymbol{Y}_{k}^{-1}\right)
            }}{\partial{{v}_{n}}}, ~\forall k, l, n, ~\forall v\in\{x, y\}, 
        \end{equation}
        and vectors $\boldsymbol{\mu}_{kl}^{v} = [\mu_{kl, 1}^{v}, \ldots, \mu_{kl, N}^{v}]\in\mathbb{R}^{N\times 1}$, $\forall k, l$. 
        Then, equation~\eqref{eq:asymp-cvec-derivative-to-position} can be rewritten as 
        \begin{equation}\label{eq:asymp-cvec-derivative-to-position-expanded}
            \frac{
                \mathrm{d}{c_{k}^{\infty}}
            }{\mathrm{d}{\boldsymbol{v}}} = - \left(
                c_{k}^{\infty}
            \right)^{2}\left[
                \boldsymbol{\mu}_{kk}^{v} + \sum_{l\neq k}{
                    \frac{\partial{
                        \mathrm{tr}\left(
                            \boldsymbol{G}_{k}\boldsymbol{Y}_{k}^{-1}
                        \right)
                    }}{\partial{\epsilon_{k, l}}}\frac{\mathrm{d}{
                        \epsilon_{k, l}
                    }}{\mathrm{d}{
                        \boldsymbol{v}
                    }}
                }
            \right]. 
        \end{equation}
        Moreover, by taking the total derivatives of both sides in equation $\boldsymbol{\mathcal{G}}_{k}(\boldsymbol{\epsilon}_{k}) = \boldsymbol{0}_{K\times 1}$, we have 
        \begin{equation}
            \boldsymbol{0}_{K\times 1} = \frac{
                \mathrm{d}{\boldsymbol{\mathcal{G}}_{k}(\boldsymbol{\epsilon}_{k})}
            }{\mathrm{d}{\boldsymbol{v}}}
            = \frac{
                \partial{\boldsymbol{\mathcal{G}}_{k}(\boldsymbol{\epsilon}_{k})}
            }{\partial{\boldsymbol{v}}} + \boldsymbol{J}_{\boldsymbol{\mathcal{G}}_{k}}(\boldsymbol{\epsilon}_{k})
            \cdot\frac{
                \mathrm{d}{\boldsymbol{\epsilon}_{k}}
            }{\mathrm{d}{\boldsymbol{v}}}, 
        \end{equation}
        where $\boldsymbol{J}_{\boldsymbol{\mathcal{G}}_{k}}(\boldsymbol{\epsilon})$ is the Jacobian matrix given in~\eqref{subdef:asymp-newton-jacobian}. 
        Thus, the derivative $\mathrm{d}{\boldsymbol{\epsilon}_{k}}/\mathrm{d}{\boldsymbol{v}}\in\mathbb{R}^{K\times N}$ is given by 
        \begin{equation}
            \frac{
                \mathrm{d}{\boldsymbol{\epsilon}_{k}}
            }{\mathrm{d}{\boldsymbol{v}}} = - \boldsymbol{J}_{\boldsymbol{\mathcal{G}}_{k}}(\boldsymbol{\epsilon}_{k})^{-1}\frac{
                \partial{\boldsymbol{\mathcal{G}}_{k}(\boldsymbol{\epsilon}_{k})}
            }{\partial{\boldsymbol{v}}},  
        \end{equation}
        from which $\mathrm{d}{\epsilon_{k, l}}/\mathrm{d}{\boldsymbol{v}}$ can be obtained. 
        Note that $\partial{\boldsymbol{\mathcal{G}}_{k}(\boldsymbol{\epsilon}_{k})}/\partial{\boldsymbol{v}}$ is the Jacobian matrix of $\boldsymbol{\mathcal{G}}_{k}(\boldsymbol{\epsilon_{k}})$ w.r.t. $\boldsymbol{v}$ and is given by 
        \begin{equation}
            \frac{
                \partial{\boldsymbol{\mathcal{G}}_{k}(\boldsymbol{\epsilon}_{k})}
            }{\partial{\boldsymbol{v}}} = \mathrm{Diag}(\boldsymbol{\epsilon}_{k})\cdot\left(\boldsymbol{U}_{k}^{v}\right)^{T}\in\mathbb{R}^{K\times N}, 
        \end{equation}
        where $\boldsymbol{U}_{k}^{v} = [\boldsymbol{\mu}_{k1}^{v}, \ldots, \boldsymbol{\mu}_{kK}^{v}]\in\mathbb{R}^{N\times K}$. 
        Therefore, to obtain the expression for gradients given in~\eqref{def:asymp-cvec-gradient-to-position}, it is essential to first derive $\boldsymbol{\mu}_{kl}^{v}$, $\forall k, l$, which is detailed in the following.

        \subsubsection{Derivation of $\boldsymbol{\mu}_{kl}^{v}$}
        \label{subsubsec:asymp-Fvec}
            For any $k$, $l$, and $n$, we have 
            \begin{equation}
                \mu_{kl, n}^{v} = \mathrm{tr}\left(
                    \frac{\partial{\boldsymbol{G}_{l}}}{\partial{{v}_{n}}}\boldsymbol{Y}_{k}^{-1}
                \right) + \mathrm{tr}\left(
                    \boldsymbol{G}_{l}\frac{\partial{\boldsymbol{Y}_{k}^{-1}}}{\partial{{v}_{n}}}
                \right), 
            \end{equation}
            where the first and second terms are denoted as $\mathcal{T}_{kl,n}^{(1)} = \mathrm{tr}\left(\frac{\partial{\boldsymbol{G}_{l}}}{\partial{{v}_{n}}}\boldsymbol{Y}_{k}^{-1}\right)$ and $\mathcal{T}_{kl,n}^{(2)} = \mathrm{tr}\left(\boldsymbol{G}_{l}\frac{\partial{\boldsymbol{Y}_{k}^{-1}}}{\partial{{v}_{n}}}\right)$, respectively. 
            Moreover, define $\boldsymbol{\mathcal{T}}_{kl}^{(1)} = \big[\mathcal{T}_{kl,1}^{(1)}, \ldots, \mathcal{T}_{kl,N}^{(1)}\big]^{T}\in\mathbb{R}^{N\times 1}$ and $\boldsymbol{\mathcal{T}}_{kl}^{(2)} = \big[\mathcal{T}_{kl,1}^{(2)}, \ldots, \mathcal{T}_{kl,N}^{(2)}\big]^{T}\in\mathbb{R}^{N\times 1}$, which are handled separately. 

            \begin{lemma}\label{lemma:corr-matrix-derivative-to-antenna-position}
                The derivative of matrix $\boldsymbol{G}_{l} = \boldsymbol{Q}^{H}\mathrm{Diag}(\boldsymbol{b}_{l})\boldsymbol{Q}$ w.r.t. ${v}_{n}$ can be written as  
                \begin{equation}
                    \frac{\mathrm{d}{\boldsymbol{G}_{l}}}{\mathrm{d}{{v}_{n}}} = \boldsymbol{\Omega}_{l, n}^{v} + (\boldsymbol{\Omega}_{l, n}^{v})^{H}, ~\forall l, n. 
                \end{equation}
                where $\boldsymbol{\Omega}_{l, n}^{v} = \big[\boldsymbol{0}_{N\times(n - 1)}, \boldsymbol{s}_{l, n}^{v}, \boldsymbol{0}_{N\times(N - n)}\big]\in\mathbb{C}^{N\times N}$. 
                The vector $\boldsymbol{s}_{l, n}^{v}\in\mathbb{C}^{N\times 1}$ is given by 
                \begin{equation}
                    \boldsymbol{s}_{l, n}^{v} = \boldsymbol{Q}^{H}\mathrm{Diag}(\boldsymbol{b}_{l})\mathrm{Diag}(j\boldsymbol{\vartheta}^{v})\tilde{\boldsymbol{q}}_{n}, 
                \end{equation}
                where $\boldsymbol{\vartheta}^{v} = [\kappa_{1}^{v}, \ldots, \kappa_{L}^{v}]^{T}$ as defined in Section~\ref{subsec:asymptotic-approximation}. 
            \end{lemma}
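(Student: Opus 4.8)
The plan is to exploit the observation that the transmit FRM $\boldsymbol{Q} = [\tilde{\boldsymbol{q}}_1, \ldots, \tilde{\boldsymbol{q}}_N]$ depends on the coordinate $v_n$ (for $v \in \{x, y\}$) only through its $n$-th column $\tilde{\boldsymbol{q}}_n$, since by~\eqref{def:extended-field-response-vec} column $n$ is determined solely by $\boldsymbol{r}_n = [x_n, y_n]^T$. First I would differentiate $\tilde{\boldsymbol{q}}_n$ entrywise: its $l$-th entry is $\exp(j\boldsymbol{r}_n^T\boldsymbol{\kappa}_l) = \exp(j(x_n\kappa_l^x + y_n\kappa_l^y))$, so $\partial/\partial v_n$ brings down a factor $j\kappa_l^v$. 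Stacking over $l$ gives the compact identity $\mathrm{d}\tilde{\boldsymbol{q}}_n/\mathrm{d}v_n = \mathrm{Diag}(j\boldsymbol{\vartheta}^v)\tilde{\boldsymbol{q}}_n$, while all other columns are annihilated. Hence $\mathrm{d}\boldsymbol{Q}/\mathrm{d}v_n$ is the $L \times N$ matrix whose only nonzero column is the $n$-th, equal to $\mathrm{Diag}(j\boldsymbol{\vartheta}^v)\tilde{\boldsymbol{q}}_n$.

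Next I would apply the product rule to $\boldsymbol{G}_l = \boldsymbol{Q}^H\mathrm{Diag}(\boldsymbol{b}_l)\boldsymbol{Q}$:
\begin{equation*}
    \frac{\mathrm{d}\boldsymbol{G}_l}{\mathrm{d}v_n} = \left(\frac{\mathrm{d}\boldsymbol{Q}}{\mathrm{d}v_n}\right)^H\mathrm{Diag}(\boldsymbol{b}_l)\boldsymbol{Q} + \boldsymbol{Q}^H\mathrm{Diag}(\boldsymbol{b}_l)\frac{\mathrm{d}\boldsymbol{Q}}{\mathrm{d}v_n}.
\end{equation*}
The column-sparsity of $\mathrm{d}\boldsymbol{Q}/\mathrm{d}v_n$ then propagates: the second term $\boldsymbol{Q}^H\mathrm{Diag}(\boldsymbol{b}_l)(\mathrm{d}\boldsymbol{Q}/\mathrm{d}v_n)$ is an $N \times N$ matrix whose only nonzero column is the $n$-th, and that column equals $\boldsymbol{Q}^H\mathrm{Diag}(\boldsymbol{b}_l)\mathrm{Diag}(j\boldsymbol{\vartheta}^v)\tilde{\boldsymbol{q}}_n = \boldsymbol{s}_{l,n}^v$. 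This is exactly $\boldsymbol{\Omega}_{l,n}^v$ as defined in the lemma statement.

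Finally I would identify the first term as the Hermitian conjugate of the second. Since $\mathrm{Diag}(\boldsymbol{b}_l)$ is real and diagonal, taking the conjugate transpose of $\boldsymbol{Q}^H\mathrm{Diag}(\boldsymbol{b}_l)(\mathrm{d}\boldsymbol{Q}/\mathrm{d}v_n)$ returns $(\mathrm{d}\boldsymbol{Q}/\mathrm{d}v_n)^H\mathrm{Diag}(\boldsymbol{b}_l)\boldsymbol{Q}$, which is precisely the first term; equivalently, this is just the statement that $\boldsymbol{G}_l$ remains Hermitian for every $v_n$, forcing its derivative to be Hermitian. Combining the two summands yields $\mathrm{d}\boldsymbol{G}_l/\mathrm{d}v_n = \boldsymbol{\Omega}_{l,n}^v + (\boldsymbol{\Omega}_{l,n}^v)^H$, completing the argument. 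I do not anticipate a genuine obstacle here, as the computation is elementary; the only care required is the bookkeeping that confirms $\boldsymbol{\Omega}_{l,n}^v$ is supported on its $n$-th column alone and that the two summands are exact Hermitian conjugates.
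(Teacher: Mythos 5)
Your proof is correct, and it reaches the lemma by a slightly different (and arguably cleaner) route than the paper. The paper works entrywise: it writes $[\boldsymbol{G}_{l}]_{mm'} = \sum_{i=1}^{L} b_{li}\exp\big(j\boldsymbol{\kappa}_{i}^{T}(\boldsymbol{r}_{m'} - \boldsymbol{r}_{m})\big)$, observes that only the entries in the $n$-th row and column depend on $v_n$, differentiates the $(m,n)$ entries to get $\tilde{\boldsymbol{q}}_{m}^{H}\mathrm{Diag}(\boldsymbol{b}_{l})\mathrm{Diag}(j\boldsymbol{\vartheta}^{v})\tilde{\boldsymbol{q}}_{n}$, fills in the $(n,m)$ entries by Hermitian symmetry of $\boldsymbol{G}_{l}$, and treats the $(n,n)$ entry as a separate case via the identity $0 = [\boldsymbol{s}_{l,n}^{v}]_{n} - [\boldsymbol{s}_{l,n}^{v}]_{n}^{*}$. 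You instead differentiate at the level of $\boldsymbol{Q}$, using the key fact $\mathrm{d}\tilde{\boldsymbol{q}}_{n}/\mathrm{d}v_{n} = \mathrm{Diag}(j\boldsymbol{\vartheta}^{v})\tilde{\boldsymbol{q}}_{n}$ together with the column sparsity of $\mathrm{d}\boldsymbol{Q}/\mathrm{d}v_{n}$, and then apply the product rule to $\boldsymbol{Q}^{H}\mathrm{Diag}(\boldsymbol{b}_{l})\boldsymbol{Q}$; since $\mathrm{Diag}(\boldsymbol{b}_{l})$ is real, the two product-rule terms are exact Hermitian conjugates and the decomposition $\boldsymbol{\Omega}_{l,n}^{v} + (\boldsymbol{\Omega}_{l,n}^{v})^{H}$ falls out at once. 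The substance is the same (only the $n$-th antenna position enters, so the derivative is supported on the $n$-th row and column), but your matrix-calculus packaging avoids the paper's case split and handles the $(n,n)$ entry automatically: $[\boldsymbol{s}_{l,n}^{v}]_{n} = j\sum_{i} b_{li}\kappa_{i}^{v}$ is purely imaginary, so its contribution to the Hermitian sum vanishes, consistent with the constant diagonal $[\boldsymbol{G}_{l}]_{nn} = \boldsymbol{1}_{L}^{T}\boldsymbol{b}_{l}$. No gap; both arguments are elementary and equivalent in content.
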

            \begin{proof}\label{proof-lemma:corr-matrix-derivative-to-antenna-position}
                Note that 
                \begin{equation}
                    \left[
                        \boldsymbol{G}_{l}
                    \right]_{mm'} = \sum_{i = 1}^{L}{
                        {b}_{li}\exp(j\boldsymbol{\kappa}_{i}^{T}(\boldsymbol{r}_{m'} - \boldsymbol{r}_{m}))
                    }, ~\forall m, m',  
                \end{equation}
                which leads to $\big[\frac{\mathrm{d}{\boldsymbol{G}_{l}}}{\mathrm{d}{{v}_{n}}}\big]_{mm'} = 0$ for $m, m'\neq n$. 
                Besides, it can be observed that $[\boldsymbol{G}_{l}]_{nn} = \boldsymbol{1}_{L}^{T}\boldsymbol{b}_{l}$ and thus $\big[\frac{\mathrm{d}{\boldsymbol{G}_{l}}}{\mathrm{d}{{v}_{n}}}\big]_{nn} = 0$. 
                For $m\neq m' = n$, we have 
                \begin{subequations}
                    \begin{align}
                        \left[
                            \frac{\mathrm{d}{\boldsymbol{G}_{l}}}{\mathrm{d}{{v}_{n}}}
                        \right]_{mn} & = \sum_{i = 1}^{L}{
                            j\kappa_{i}^{v}{b}_{li}\exp(j\boldsymbol{\kappa}_{i}^{T}(\boldsymbol{r}_{n} - \boldsymbol{r}_{m}))
                        } \\
                        & = \tilde{\boldsymbol{q}}_{m}^{H}\mathrm{Diag}(\boldsymbol{b}_{l})\mathrm{Diag}(j\boldsymbol{\vartheta}^{v})\tilde{\boldsymbol{q}}_{n}. 
                    \end{align}
                \end{subequations}
                Since $\boldsymbol{G}_{l}$ is an Hermitian matrix, we have $\big[\frac{\mathrm{d}{\boldsymbol{G}_{l}}}{\mathrm{d}{{v}_{n}}}\big]_{mn} = \big[\frac{\mathrm{d}{\boldsymbol{G}_{l}}}{\mathrm{d}{{v}_{n}}}\big]_{nm}^{*}$. 
                Then, by writing $\big[\frac{\mathrm{d}{\boldsymbol{G}_{l}}}{\mathrm{d}{{v}_{n}}}\big]_{nn} = 0 = [\boldsymbol{s}_{l, n}^{v}]_{n} - [\boldsymbol{s}_{l, n}^{v}]_{n}^{*}$, the lemma can be easily verified. 
            \end{proof}

            Following Lemma~\ref{lemma:corr-matrix-derivative-to-antenna-position}, $\mathcal{T}_{kl,n}^{(1)}$ can be written as 
            \begin{subequations}
                \begin{align}
                    \mathcal{T}_{kl,n}^{(1)} & = \mathrm{tr}\left(
                        \boldsymbol{\Omega}_{l,n}^{v}\boldsymbol{Y}_{k}^{-1}
                    \right) + \mathrm{tr}\left(
                        (\boldsymbol{\Omega}_{l,n}^{v})^{H}\boldsymbol{Y}_{k}^{-1}
                    \right) \\
                    & = \boldsymbol{\zeta}_{k,n}^{H}\boldsymbol{s}_{l, n}^{v} + (\boldsymbol{s}_{l, n}^{v})^{H}\boldsymbol{\zeta}_{k,n} \\
                    & = 2\mathrm{Re}\left(
                        \boldsymbol{\zeta}_{k,n}^{H}\boldsymbol{s}_{l, n}^{v}
                    \right), 
                \end{align}
            \end{subequations}
            where $\boldsymbol{\zeta}_{k,n}\in\mathbb{R}^{N\times 1}$ is the vector at the $n$-th column of matrix $\boldsymbol{Y}_{k}^{-1}$. 
            Thus, it can be verified that vector $\boldsymbol{\mathcal{T}}_{kl}^{(1)}$ is given by
            \begin{equation}
                \boldsymbol{\mathcal{T}}_{kl}^{(1)} = 2\mathrm{Re}\left[
                    \mathrm{diag}(\boldsymbol{Y}_{k}^{-1}\boldsymbol{S}_{l}^{v})
                \right], 
            \end{equation}
            where matrix $\boldsymbol{S}_{l}^{v} = [\boldsymbol{s}_{l, 1}^{v}, \ldots, \boldsymbol{s}_{l, N}^{v}]\in\mathbb{C}^{N\times N}$ is defined as 
            \begin{equation}
                \boldsymbol{S}_{l}^{v} = \boldsymbol{Q}^{H}\mathrm{Diag}(\boldsymbol{b}_{l})\mathrm{Diag}(j\boldsymbol{\vartheta}^{v})\boldsymbol{Q}, ~\forall l. 
            \end{equation}

            Next, the expression for vector $\boldsymbol{\mathcal{T}}_{kl}^{(2)}$ is derived. 
            According to the definition of $\boldsymbol{Y}_{k}$, we have 
            \begin{subequations}\label{eq:asymp-T2}
                \allowdisplaybreaks
                \begin{align}
                    \mathcal{T}_{kl, n}^{(2)} & = - \mathrm{tr}\left(
                        \boldsymbol{G}_{l}\boldsymbol{Y}_{k}^{-1}\frac{
                            \partial{
                                \boldsymbol{Y}_{k}
                            }
                        }{
                            \partial{{v}_{n}}
                        }\boldsymbol{Y}_{k}^{-1}
                    \right) \\
                    & = - \sum_{i\neq k}{
                        \epsilon_{k, i}\mathrm{tr}\left(
                            \boldsymbol{G}_{l}\boldsymbol{Y}_{k}^{-1}\frac{
                                \partial{
                                    \boldsymbol{G}_{i}
                                }
                            }{
                                \partial{{v}_{n}}
                            }\boldsymbol{Y}_{k}^{-1}
                        \right)
                    }. \label{subeq:asymp-T2-sum}
                \end{align}
            \end{subequations}
            Then, by applying Lemma~\ref{lemma:corr-matrix-derivative-to-antenna-position} again and similar to the derivation of $\boldsymbol{\mathcal{T}}_{kl}^{(1)}$, we have 
            \begin{subequations}
                \allowdisplaybreaks
                \begin{align}
                    \begin{split}
                        \boldsymbol{\mathcal{T}}_{kl}^{(2)} & = - \sum_{i\neq k}{
                            \epsilon_{k, i}\cdot2\mathrm{Re}\left[
                                \mathrm{diag}\left(
                                    \boldsymbol{Y}_{k}^{-1}\boldsymbol{G}_{l}\boldsymbol{Y}_{k}^{-1}\boldsymbol{S}_{i}^{v}
                                \right)
                            \right]
                        }
                    \end{split} \\
                    \begin{split}
                        & \begin{aligned}[t]
                            & = - \sum_{i\neq k}
                                2\mathrm{Re}\Big[
                                    \mathrm{diag}\Big(
                                        \boldsymbol{Y}_{k}^{-1}\boldsymbol{G}_{l}\boldsymbol{Y}_{k}^{-1}
                                        \boldsymbol{Q}^{H} \\
                            & ~~~~~~~~ ~~~~~~~~ ~~ \cdot\mathrm{Diag}(
                                    \epsilon_{k, i}\boldsymbol{b}_{i}
                                )\mathrm{Diag}(j\boldsymbol{\vartheta}^{v})\boldsymbol{Q}
                            \Big)
                            \Big]
                        \end{aligned}
                    \end{split} \\
                    \begin{split}
                        & \begin{aligned}[t]
                            & = - 2\mathrm{Re}\Big[
                                \mathrm{diag}\Big(
                                    \boldsymbol{Y}_{k}^{-1}\boldsymbol{Q}^{H}\mathrm{Diag}(\boldsymbol{b}_{l}) \\
                            & ~~~~~~~~ ~~~~~~ \cdot\tilde{\boldsymbol{D}}_{k}
                                    \mathrm{Diag}(\boldsymbol{B}_{k}\boldsymbol{\epsilon}_{k})\mathrm{Diag}(j\boldsymbol{\vartheta}^{v})\boldsymbol{Q}
                                \Big)
                            \Big], 
                        \end{aligned}
                    \end{split}
                \end{align}
            \end{subequations}
            where $\boldsymbol{B}_{k} = [\boldsymbol{b}_{1}, \ldots, \boldsymbol{b}_{k - 1}, \boldsymbol{0}_{L\times 1}, \boldsymbol{b}_{k + 1}, \ldots, \boldsymbol{b}_{K}]\in\mathbb{R}_{+}^{L\times K}$ and $\tilde{\boldsymbol{D}}_{k} = \boldsymbol{Q}\boldsymbol{Y}_{k}^{-1}\boldsymbol{Q}^{H}$, as defined in Section~\ref{subsec:asymptotic-approximation}. 

            Combining $\boldsymbol{\mathcal{T}}_{kl}^{(1)}$ and $\boldsymbol{\mathcal{T}}_{kl}^{(2)}$ together, vector $\boldsymbol{\mu}_{kl}^{v}$ can be written as 
            \begin{subequations}
                \allowdisplaybreaks
                \begin{align}
                    \begin{split}
                        \boldsymbol{\mu}_{kl}^{v} & = \boldsymbol{\mathcal{T}}_{kl}^{(1)} + \boldsymbol{\mathcal{T}}_{kl}^{(2)}
                    \end{split} \\
                    \begin{split}
                        & \begin{aligned}[t]
                            & = 2\mathrm{Re} \Big[
                                \mathrm{diag}\Big(
                                    \boldsymbol{Y}_{k}^{-1}\boldsymbol{Q}^{H}\mathrm{Diag}(\boldsymbol{b}_{l})\Big(\boldsymbol{I}_{L} - \\
                            & ~~~~~~~~ ~~~~ \tilde{\boldsymbol{D}}_{k}
                                    \mathrm{Diag}(\boldsymbol{B}_{k}^{T}\boldsymbol{\epsilon}_{k})
                                    \Big)\mathrm{Diag}(j\boldsymbol{\vartheta}^{v})\boldsymbol{Q}
                                \Big)
                            \Big]. 
                        \end{aligned}
                    \end{split} \\
                    \begin{split}
                        & = 2\mathrm{Re}\left(
                            \boldsymbol{F}_{k}^{v}\boldsymbol{b}_{l}
                        \right), ~\forall k, l, ~\forall v\in\{x, y\}, 
                    \end{split}
                \end{align}
            \end{subequations}
            where matrix $\boldsymbol{F}_{k}^{v}$ is defined as in equation~\eqref{subdef:asymp-cgrad-Fmat}, i.e., 
            \begin{equation}
                \boldsymbol{F}_{k}^{v} = \left[
                    \left(
                        \boldsymbol{I}_{L} - \tilde{\boldsymbol{D}}_{k}\mathrm{Diag}(\boldsymbol{B}_{k}^{T}\boldsymbol{\epsilon}_{k})
                    \right)
                    \mathrm{Diag}(j\boldsymbol{\vartheta}^{v})\boldsymbol{Q}
                \right]^{T}
                \odot\boldsymbol{E}_{k}, 
            \end{equation}
            with $\boldsymbol{E}_{k} = \boldsymbol{Y}_{k}^{-1}\boldsymbol{Q}^{H}$ as defined in equations~\eqref{def:asymp-deriv-auxiliary}.

        \subsubsection{Derivation of $\mathrm{d}{c_{k}^{\infty}} / \mathrm{d}{\boldsymbol{v}}$}
            Given $\boldsymbol{\mu}_{kl}^{v}$, we have $\boldsymbol{\mu}_{kk}^{v} = 2\mathrm{Re}(\boldsymbol{F}_{k}^{v}\boldsymbol{b}_{k})$ and $\boldsymbol{U}_{k}^{v} = 2\mathrm{Re}(\boldsymbol{F}_{k}^{v}\boldsymbol{B})$. 
            Based on results from Appendix~\ref{appendix-subsec:asymp-newton-jacobian}, the term $\partial{\mathrm{tr}(\boldsymbol{G}_{k}\boldsymbol{Y}_{k}^{-1})}/\partial{\epsilon_{k, l}}$ in equation~\eqref{eq:asymp-cvec-derivative-to-position-expanded} is given by 
            \begin{equation}
                \frac{
                    \partial{\mathrm{tr}(\boldsymbol{G}_{k}\boldsymbol{Y}_{k}^{-1})}
                }{
                    \partial{\epsilon_{k, l}}
                } = \left\{
                    \begin{aligned}
                        & -\boldsymbol{b}_{l}^{T}\left(
                            \tilde{\boldsymbol{D}}_{k}\odot\tilde{\boldsymbol{D}}_{k}^{T}
                        \right)\boldsymbol{b}_{k}, & ~l\neq k, \\
                        & 0, & ~l = k. 
                    \end{aligned}
                \right.
            \end{equation}
            Therefore, we have $\partial{\mathrm{tr}(\boldsymbol{G}_{k}\boldsymbol{Y}_{k}^{-1})}/\partial{\boldsymbol{\epsilon}_{k}} = -\tilde{\boldsymbol{\chi}}_{k}$, where 
            \begin{equation}
                \tilde{\boldsymbol{\chi}}_{k} = \boldsymbol{B}_{k}^{T}\left(
                    \tilde{\boldsymbol{D}}_{k}\odot\tilde{\boldsymbol{D}}_{k}^{T}
                \right)\boldsymbol{b}_{k} = \boldsymbol{Z}_{k}^{T}\boldsymbol{b}_{k}, ~\forall k, 
            \end{equation}
            where $\boldsymbol{Z}_{k} = (\tilde{\boldsymbol{D}}_{k}\odot\tilde{\boldsymbol{D}}_{k}^{T})\boldsymbol{B}_{k}$ as defined in equations~\eqref{def:asymp-deriv-auxiliary}. 
            Thus, it can be verified that we have 
            the second term in the square brackets in~\eqref{eq:asymp-cvec-derivative-to-position-expanded} can be equivalently written as 
            \begin{subequations}
                \begin{align}
                    & \sum_{l\neq k}{
                        \frac{\partial{
                            \mathrm{tr}\left(
                                \boldsymbol{G}_{k}\boldsymbol{Y}_{k}^{-1}
                            \right)
                        }}{\partial{\epsilon_{k, l}}}\frac{\mathrm{d}{
                            \epsilon_{k, l}
                        }}{\mathrm{d}{
                            \boldsymbol{v}
                        }}
                    } = \left[
                        \frac{\mathrm{d}{
                            \boldsymbol{\epsilon}_{k}
                        }}{\mathrm{d}{
                            \boldsymbol{v}
                        }}
                    \right]^{T}
                    \frac{\partial{
                        \mathrm{tr}\left(
                            \boldsymbol{G}_{k}\boldsymbol{Y}_{k}^{-1}
                        \right)
                    }}{\partial{\boldsymbol{\epsilon}_{k}}} \\
                    & ~~~~ ~~~~ = \boldsymbol{U}_{k}^{v}\mathrm{Diag}(\boldsymbol{\epsilon}_{k})\left[
                        \boldsymbol{J}_{\boldsymbol{\mathcal{G}}_{k}}(\boldsymbol{\epsilon}_{k})^{-1}
                    \right]^{T}\tilde{\boldsymbol{\chi}}_{k}. 
                \end{align}
            \end{subequations}
            By integrating all the results above, the derivative in equation~\eqref{eq:asymp-cvec-derivative-to-position-expanded} can be expressed by 
            \begin{subequations}
                \begin{align}
                    & \frac{
                        \mathrm{d}{c_{k}^{\infty}}
                    }{\partial{\boldsymbol{v}}} = - \left(c_{k}^{\infty}\right)^{2}\left(
                        \boldsymbol{R}_{k}^{v}\tilde{\boldsymbol{\chi}}_{k} + \tilde{\boldsymbol{\mu}}_{k}^{v}
                    \right), \\
                    & \boldsymbol{R}_{k}^{v} = \boldsymbol{U}_{k}^{v}\mathrm{Diag}(\boldsymbol{\epsilon}_{k})\left[
                        \boldsymbol{J}_{\boldsymbol{\mathcal{G}}_{k}}(\boldsymbol{\epsilon}_{k})^{-1}
                    \right]^{T}, \\
                    & \tilde{\boldsymbol{\mu}}_{k}^{v} = 2\mathrm{Re}(\boldsymbol{F}_{k}^{v}\boldsymbol{b}_{k}), ~\boldsymbol{U}_{k}^{v} = 2\mathrm{Re}(\boldsymbol{F}_{k}^{v}\boldsymbol{B}), 
                \end{align}
            \end{subequations}
            which is equivalent to~\eqref{def:asymp-cvec-gradient-to-position} in Section~\ref{subsec:asymptotic-approximation}.

\vspace{-6pt}

\bibliographystyle{IEEEtran} 
\bibliography{IEEEabrv, reference}

\end{document}